\newtheorem{theorem}{Theorem}[section]
\newtheorem{lemma}[theorem]{Lemma}
\newtheorem{proposition}[theorem]{Proposition}
\newtheorem{corollary}[theorem]{Corollary}
\newtheorem{definition}{Definition}
\renewcommand\qed{\rule{1mm}{2mm}\medskip}
\newcommand{\remove}[1]{}
\renewcommand{\tilde}{\widetilde}
\newcommand\wt{\mbox{{\rm wt}$_H$}}
\def\supp{\qopname\relax{no}{supp}}
\def\avg{{\text{\sf E}}}
\def\sgn{\qopname\relax{no}{sgn}}
\def\rank{\qopname\relax{no}{rk}}
\newcommand\nc\newcommand
\nc\bfa{{\boldsymbol a}}\nc\bfA{{\bf A}}\nc\cA{{\mathcal A}}
\nc\bfb{{\boldsymbol b}}\nc\bfB{{\bf B}}\nc\cB{{\mathcal B}}
\nc\bfc{{\boldsymbol c}}\nc\bfC{{\bf C}}\nc\cC{{\mathcal C}}
\nc\bfd{{\boldsymbol d}}\nc\bfD{{\bf D}}\nc\cD{{\mathcal D}}
\nc\bfe{{\boldsymbol e}}\nc\bfE{{\bf E}}\nc\cE{{\mathcal E}}
\nc\bff{{\boldsymbol f}}\nc\bfF{{\bf F}}\nc\cF{{\mathcal F}}
\nc\bfg{{\boldsymbol g}}\nc\bfG{{\bf G}}\nc\cG{{\mathcal G}}
\nc\bfh{{\boldsymbol h}}\nc\bfH{{\bf H}}\nc\cH{{\mathcal H}}
\nc\bfi{{\boldsymbol i}}\nc\bfI{{\bf I}}\nc\cI{{\mathcal I}}
\nc\bfj{{\boldsymbol j}}\nc\bfJ{{\bf J}}\nc\cJ{{\mathcal J}}
\nc\bfk{{\boldsymbol k}}\nc\bfK{{\bf K}}\nc\cK{{\mathcal K}}
\nc\bfl{{\boldsymbol l}}\nc\bfL{{\bf L}}\nc\cL{{\mathcal L}}
\nc\bfm{{\boldsymbol m}}\nc\bfM{{\bf M}}\nc\cM{{\mathcal M}}
\nc\bfn{{\boldsymbol n}}\nc\bfN{{\bf N}}\nc\cN{{\mathcal N}}
\nc\bfo{{\boldsymbol o}}\nc\bfO{{\bf O}}\nc\cO{{\mathcal O}}
\nc\bfp{{\boldsymbol p}}\nc\bfP{{\bf P}}\nc\cP{{\EuScript P}}
\nc\bfq{{\boldsymbol q}}\nc\bfQ{{\bf Q}}\nc\cQ{{\mathcal Q}}
\nc\bfr{{\boldsymbol r}}\nc\bfR{{\bf R}}\nc\cR{{\mathcal R}}
\nc\bfs{{\boldsymbol s}}\nc\bfS{{\bf S}}\nc\cS{{\mathcal S}}
\nc\bft{{\boldsymbol t}}\nc\bfT{{\bf T}}\nc\cT{{\mathcal T}}
\nc\bfu{{\boldsymbol u}}\nc\bfU{{\bf U}}\nc\cU{{\mathcal U}}
\nc\bfv{{\boldsymbol v}}\nc\bfV{{\bf V}}\nc\cV{{\mathcal V}}
\nc\bfw{{\boldsymbol w}}\nc\bfW{{\bf W}}\nc\cW{{\mathcal W}}
\nc\bfx{{\boldsymbol x}}\nc\bfX{{\bf X}}\nc\cX{{\mathcal X}}
\nc\bfy{{\boldsymbol y}}\nc\bfY{{\bf Y}}\nc\cY{{\mathcal Y}}
\nc\bfz{{\boldsymbol z}}\nc\bfZ{{\bf Z}}\nc\cZ{{\mathcal Z}}
\nc\od{{\bar d}}\nc\ow{{\bar w}}\nc\odelta{{\bar\delta}}
\nc\ox{{\bar x}}\nc\oy{{\bar y}}\nc\ou{{\bar u}}
\nc\oh{{\bar h}}
\newcommand\reals{{\mathbb R}}
\newcommand\ff{{\mathbb F}}
\nc\ellone{{\ell_1}}
\nc\elltwo{{\ell_2}}
\nc\ellinf{{{\ell_\infty}}}
\nc\ip[2]{\langle #1,#2\rangle}
\newcommand{\beeq}{\begin{eqnarray*}}
\newcommand{\eneq}{\end{eqnarray*}}
\newcommand{\half}{\nicefrac12}
\begin{document}
\title[Random subdictionaries and sparse signal recovery]	
{Random subdictionaries and coherence conditions for sparse signal recovery}
\thanks{{\em Date}\/: \today.\/ }%

\author[A. Barg]{Alexander Barg$^\ast$}\thanks{$^\ast$
Dept. of Electrical and Computer Engineering and Institute for Systems
Research, University of Maryland, College Park, MD 20742,
and Institute for Problems of Information Transmission,
Russian Academy of Sciences, Moscow, Russia. Email: abarg@umd.edu.
Research supported in part by NSF grants
CCF0916919, CCF1217245, CCF1217894, DMS1101697, and NSA H98230-12-1-0260.}\author[A. Mazumdar]{Arya Mazumdar$^\dag$}\thanks{$^\dag$
Research Laboratory of Electronics, Massachusetts Institute of
Technology, Cambridge, MA 02139. This work was done while the
author was at the University of Maryland, College Park, MD. Email: aryam@mit.edu}
\author[R. Wang]{Rongrong Wang$^\ddag$}\thanks{$^\ddag$
Dept. of Mathematics, University of Maryland, College Park, MD 20742. Email: rwang928@math.umd.edu}

\begin{abstract}
The most frequently used
condition for sampling matrices employed in compressive sampling
is the restricted isometry (RIP) property of the matrix when restricted to sparse
signals. At the same time, imposing this condition makes it difficult to find explicit
matrices that support recovery of signals from sketches of the
optimal (smallest possible) dimension. A number of attempts
have been made to relax or replace the RIP property in sparse recovery
algorithms. We focus on the relaxation under which the near-isometry property
holds for most rather than for all submatrices of the sampling matrix, known
as statistical RIP or StRIP condition. 
We show that sampling matrices of dimensions $m\times N$ with maximum 
coherence $\mu=O((k\log^3 N)^{-1/4})$ and mean square coherence $\bar\mu^2=O(1/(k\log N))$ 
support stable recovery of $k$-sparse signals using Basis Pursuit.
These assumptions are satisfied in many examples. As a result, we
are able to construct sampling matrices that support recovery with low error
for sparsity $k$ higher than $\sqrt m,$ which exceeds the range of parameters
of the known classes of RIP matrices.
\end{abstract}
\maketitle

\section{Introduction}
One of the important problems in theory of compressed sampling is construction
of sampling operators that support algorithmic procedures of sparse recovery.
A universal sufficient condition for stable reconstruction is given by
the restricted isometry property (RIP) of sampling matrices \cite{can08a}. It has been shown 
that sparse signals compressed to low-dimensional images using linear RIP 
maps can be reconstructed using $\ell_1$ minimization procedures such as Basis pursuit
and Lasso \cite{can05,can06b,can08a,cai09},

Let $\bfx$ be an $N$-dimensional real signal that has
a sparse representation in a suitably chosen basis.
We will assume that $\bfx$ has $k$ nonzero coordinates (it is a {\em $k$-sparse} vector) or is
approximately sparse in the sense that it has at most $k$ significant
coordinates, i.e., entries of large magnitude compared to the other entries.
The observation vector $\bfy$ is formed as a linear transformation of $\bfx$,
i.e.,
  $$
   \bfy=\Phi\bfx+\bfz,
  $$
where $\Phi$ is an $m\times N$ real matrix, $m\ll N,$ and $\bfz$ is a noise
vector. We assume that $\bfz$ has bounded energy (i.e., $\|\bfz\|_{2} <\varepsilon$).  
The objective of the estimator is to find a good approximation of
the signal $\bfx$ after observing $\bfy$. This is obviously impossible for general
signals $\bfx$ but becomes tractable if we seek a sparse approximation $\hat\bfx$
which satisfies
   \begin{equation}\label{eq:p1p2}
  \|\bfx-\hat\bfx\|_{p}\le C_1 \min_{\bfx' \text{ is }
  k\text{-sparse} }\|\bfx- \bfx' \|_{q} + C_2 \varepsilon
  \end{equation}
for some $p,q \ge 1$ and constants $C_1,C_2$.
  Note that if $\bfx$ itself is $k$-sparse,
then (\ref{eq:p1p2}) implies that the recovery error $\|\hat\bfx-\bfx\|$ is at most proportional to the norm of the noise.
Moreover it implies that the recovery is stable in the sense that
 if $\bfx$ is approximately $k$-sparse
 then the recovery error is small. If the estimate satisfies an inequality of the type \eqref{eq:p1p2},
 we say that the recovery procedure satisfies a $(p,q)$ error guarantee.

Among the most studied estimators is the Basis Pursuit algorithm
\cite{che98}.
This is an $\ell_1$-minimization algorithm that provides an
estimate of the signal through solving a convex programming problem
  \begin{equation}\label{eq:bp}
   \hat\bfx=\arg\min\|\tilde\bfx\|_{1} \text{\quad subject to }
   \|\Phi\tilde\bfx-\bfy\|_{2}\le \varepsilon.
  \end{equation}
Basis Pursuit is known to provide both $(\ellone,\ellone)$ and $(\elltwo,\ellone)$
error guarantees under the conditions on $\Phi$ discussed in the next section.

Another popular estimator for which the recovery guarantees are
proved using coherence properties of the sampling matrix $\Phi$ is
Lasso \cite{tib96,che98}. Assume the vector $\bfz$ is independent of
the signal and formed of independent
identically distributed Gaussian random variables with zero mean and variance $\sigma^2.$
Lasso is a regularization of the $\ell_0$ minimization problem
written as follows:
  \begin{equation}\label{eq:Lasso}
   \hat\bfx=\arg\min_{\tilde x\in\reals^N}\frac12\|\Phi\tilde \bfx-\bfy\|^2_{2}+
       \lambda_N\sigma^2\|\tilde \bfx\|_{\ell_1}.
  \end{equation}
Here $\lambda_N$ is a regularization parameter which controls the
complexity (sparsity) of the optimizer.

Compressed sensing is just one of a large group of applications of solutions
of severely ill-defined problems under the sparsity assumption. An extensive
recent overview of such applications is given in \cite{bru09}. It is this
multitude of concrete applications that makes the study of sparse recovery
such an appealing area of signal processing and applied statistics.

\subsection{Properties of sampling matrices}
One of the main questions related to sparse recovery is derivation of
sufficient conditions for the convergence and error guarantees of the reconstruction algorithms.
Here we discuss some properties of sampling matrices that are relevant to
our results, focusing on incoherence and near-isometry of random submatrices
of the sampling matrix.

Let $\Phi$ be an $m\times N$ real matrix and let $\phi_1,\dots,\phi_N$ be its
columns.
Without loss of generality throughout this paper we assume that
the columns are unit-length vectors. Let $[N]=\{1,2,\dots, N\}$ and
let $I=\{i_1,\dots,i_k\}\subset [N]$ be a $k$-subset of the set of coordinates. 
By $\cP_k(N)$ we denote the set of all $k$-subsets of $[N].$ Below
we write $\Phi_I$ to refer to the $m\times k$ submatrix of $\Phi$
formed of the columns with indices in $I$. Given a vector
$\bfx\in \reals^N,$ we denote by $\bfx_I$ a $k$-dimensional vector
given by the projection of the vector $\bfx$ on the coordinates
in $I$. 

It is known that at least $m=\Omega(k\log(N/k))$ samples are required
for any recovery algorithm with an error guarantee of the form \eqref{eq:p1p2}
(see for example \cite{kas07,kha10}). Matrices with random Gaussian or Bernoulli
entries with high probability provide the best known error guarantees from the
sketch dimension that matches this lower bound \cite{can05,can06a,can06c}.
The estimates become more conservative once we try to construct
sampling matrices explicitly.

We say that $\Phi$ satisfies the
{\em coherence property} if the inner product $|\ip{\phi_1}{\phi_j}|$
is uniformly small, and call $\mu=\max_{i\ne j}|\ip{\phi_i}{\phi_j}|$ the coherence
parameter of the matrix. 
The importance of incoherent dictionaries has been recognized
in a large number of papers on compressed sensing, among them 
\cite{tro04,tro08,gur09,can07b,can09a,can10a,cai09}. The coherence condition plays an essential
role in proofs of recovery guarantees in these and many other studies. 
We also define the mean square coherence and the maximum average square
coherence of the dictionary:
  $$
  \bar\mu^2=\frac 1{N(N-1)}\sum_{\substack{i,j=1\\i\ne j}}^n\mu^2_{ij}, \quad \bar\mu^2_{\max}=\max_{1\le j\le N}\frac1{N-1}
  \sum_{\substack{i=1\\i\ne j}}^n\mu_{ij}^2.
  $$
Of course, $\bar\mu^2\le \bar\mu^2_{\max}$ with equality if and only if for every $j$ the sum in $\bar\mu^2_{\max}$
takes the same value. 
Our reliance on two coherence parameters of the
sampling matrix $\Phi$ resembles somewhat the approach in 
\cite{baj10a,baj11}; however, unlike those papers, our results imply recovery guarantees for Basis Pursuit.
Our proof methods are also materially different from these works. More details are provided below in 
this section where we comment on previous results.

\subsubsection{The RIP property}
The matrix $\Phi$ satisfies the RIP property (is $(k,\delta)$-RIP)
if
  \begin{equation}\label{eq:RIP}
    (1-\delta)\|\bfx\|_{2}^2\le\|\Phi\bfx\|_{2}^2\le
               (1+\delta)\|\bfx\|_{2}^2
  \end{equation}
holds for all $k$-sparse vectors $\bfx$, where $\delta\in(0,1)$
is a parameter. Equivalently, $\Phi$ is $(k,\delta)$-RIP if $\|\Phi_I^T\Phi_I - \text{Id}\| \le \delta$ holds
for all $I\in [N], |I| = k$, where $\| \cdot \|$ is the spectral norm and $\text{Id}$ is the identity matrix.
The RIP property provides a sufficient condition for the solution of \eqref{eq:bp}
to satisfy the error guarantees of Basis Pursuit \cite{can05,can06b,can08a,cai09}.
In particular, by \cite{can08a}, $(2k,\sqrt 2-1)$-RIP suffices
for both $(\ell_1,\ell_1)$ and $(\ell_2,\ell_1)$ error estimates, while
\cite{cai09} improves this to $(1.75k,\sqrt 2-1)$-RIP.

As is well known (see \cite{tro04}
\cite{don03}), coherence and RIP are related: a matrix with coherence
parameter $\mu$ is $(k, (k-1)\mu)$-RIP. This connection has served the
starting point in a number of studies on constructing RIP matrices from
incoherent dictionaries. To implement this idea one starts with
a set of unit vectors $\phi_1,\dots,\phi_N$ with maximum coherence $\mu.$ In other words, we seek a well-separated collection
of lines through the origin in $\reals^m$, or reformulating again, a
good packing of the real projective space $\reals P^{m-1}.$
One way of constructing such packings begins with taking a set $\cC$ of binary
$m$-dimensional vectors whose pairwise Hamming distances are concentrated around $m/2.$
Call the maximum deviation from $m/2$ the {\em width} $w$ of the set $\cC.$
An incoherent dictionary is obtained by mapping the bits of a small-width code to bipolar
signals and normalizing. The resulting coherence and width are related by $w(\cC)=\mu m/2.$

One of the first papers to put forward the idea of constructing
RIP matrices from binary vectors was the work by DeVore \cite{dev07}.
While \cite{dev07} did not make a connection to error-correcting codes,
a number of later papers pursued both its algorithmic and constructive
aspects \cite{bar10,cal10a,cal10b,dai09}.
Examples of codes with small width are given in \cite{alo92}, where they are
studied under the name of small-bias probability spaces.
RIP matrices obtained from the constructions in \cite{alo92} satisfy
$m=O(\frac{k\log N}{\log(\log k N)})^2$.
Ben-Aroya and Ta-Shma \cite{ben09}
recently improved this to $m=O(\frac{k\log N}
{\log k})^{5/4}$ for $(\log N)^{-3/2}\le \mu\le (\log N)^{-1/2}.$ 
The advantage of obtaining RIP matrices from binary or spherical codes
is low construction complexity: in many instances it is possible to define
the matrix using only $O(\log N)$ columns while the remaining columns can
be computed as their linear combinations.
We also note a result by Bourgain et al. \cite{bou10} who
gave the first (and the only known) construction of RIP matrices with $k$ on the
order of $m^{\frac 12+\epsilon}$
(i.e., greater than $O(\sqrt m)$). An overview of the
state of the art in the construction of RIP matrices is given in
a recent paper \cite{bandeira12}. 

 At the same time, in practical
problems we still need to write out the entire matrix; so constructions
of complexity $O(N)$ are an acceptable choice. Under these assumptions,
the best tradeoff between $m,k$ and $N$ for RIP-matrices based on codes and coherence is obtained
from Gilbert-Varshamov type code constructions: namely, it is possible
to construct $(k,\delta)$-RIP matrices with $m=4(k/\delta)^2\log N$.
At the same time, already \cite{alo92} observes that
the sketch dimension in RIP matrices constructed from binary codes is
at least $m=\Theta((k^2\log N)/\log k).$

\remove{However even general packings of $\reals P^{m-1}$, not necessarily originating
from binary codes, stop short of giving the optimal tradeoff between the
signal dimension, the observation dimension, and sparsity\footnote{Packings of $\reals P^{m-1}$ were 
also studied in signal processing under the name of Grassmannian frames \cite{str03}.}.
This limit is set by lower bounds on the maximum coherence of a dictionary
with $N$ columns. A number of universal bounds were obtained
by Levenshtein \cite{lev82,lev83a,lev98}. In particular, if
$(\log N)/(\log m)\to\infty, (1/m)\log N\to 0$ then
  $$
  \mu^2\gtrsim \frac{4\log N}{m \log (m/\log N)}
  $$
\cite{lev82}, see also \cite{nel11,glu86}.
\remove{Somewhat more generally, consider matrices $\Phi$ whose columns form
incoherent dictionaries (not necessarily restricted to binary entries).
In other words, $\Phi$ forms a set of unit vectors
$\{\phi_1,\dots,\phi_N\}$ in $\reals^n$ that satisfy $|\ip{\phi_i}{\phi_j}|\le\mu.$
Collections of such vectors correspond to good packings of the real projective
space $\reals P^{n-1}$ which is a classical problem in distance geometry
\footnote{Packings of $\reals P^{m-1}$ were also studied in signal processing under
the name of Grassmannian frames \cite{}.}. A number of universal
bounds on $\mu$ were obtained by Levenshtein \cite{lev82,lev83a,lev98}.
In particular, for $N=m^p,p>1$ he found an infinite family of bounds
that depends on the relation of $\mu$ and $m$ (the first bound in this sequence
is the well-known Welch bound).
\remove{  which can be improved for $\mu>(m+2)^{-1/2}.$
an infinite family of bounds: \begin{eqnarray*}
  N&\le& \frac{m(1-\mu^2)}{1-m\mu^2} \quad\text{if }0\le\mu^2\le\frac1{m+2}\\
  N&\le& \frac{m(m+2)(1-\mu^2)}{3-\mu^2(m+2)}
  \quad\text{if }\frac1{m+2}\le\mu^2\le\frac3{n+4}
 \end{eqnarray*}
etc. \cite{lev82}, \cite[p.76]{lev83a}. The first of these bounds is due to
Welch \cite{wel74}.
Each of the bounds is better that all the other ones in its range of the values of
$\mu.$}
For superpolynomial dependence, Levenshtein's bound has the form
  $$
  \mu^2\gtrsim \frac{4\log N}{n \log (m/\log N)}
  $$
where $(\log N)/(\log m)\to\infty, (1/n)\log N\to 0$
\cite{lev82}, \cite[p.70]{lev83a}, see also \cite{nel11,glu86,lev98}. 
}
To conclude, $(k,\delta)$-RIP matrices with dimensions $m\times N$ constructed
from incoherent dictionaries necessarily satisfy
  $$
    m=\Omega\Big(\frac{k^2\log N}{\log(m/\log N)}\Big).
  $$

}
\subsubsection{Statistical incoherence properties}
The limitations on incoherent dictionaries discussed in the previous
section suggest relaxing the RIP condition. An intuitively appealing idea
is to require that condition \eqref{eq:RIP} hold for almost all
rather than all $k$-subsets $I,$ replacing RIP with a version
of it, in which the near-isometry property holds with high probability with respect 
to the choice of $I\in\cP_k(N).$ Statistical RIP (StRIP) matrices are arguably easier to construct, so
they have a potential of supporting provable recovery guarantees
from shorter sketches compared to the known constructive schemes
relying on RIP.

A few words on notation. Let $[ N]:=\{1,2,\dots,N\}$ and let $\cP_k(N)$ denote the set of $k$-subsets of $[N].$ The usual notation 
for probability $\Pr$ is used to refer a probability measure when there is no ambiguity. At the same time, we use
separate notation for some frequently encountered probability spaces.
In particular, we use $P_{R_k}$ to denote the uniform probability distribution on $\cP_k(N)$.
If we need to choose a random $k$-subset $I$ and a random index in $[N]\backslash I,$ we use the notation $P_{R_k'}$. 
We use $P_{R^k}$ to denote any probability  measure on $\mathbb{R}^k$ which assigns equal probability to each of the $2^k$ 
orthants (i.e., with uniformly distributed signs).

The following definition is essentially due to Tropp \cite{tro08,tro08b}, where it is called
conditioning of random subdictionaries.
\begin{definition}\label{def:strip} An $m\times N$  matrix $\Phi$ satisfies the statistical 
RIP property (is $(k,\delta,\epsilon)$-StRIP)
if 
   $$
    P_{R_k}(\{I\in\cP_k(N):\|\Phi^T_I\Phi_I-\text{\rm Id}\|\le \delta\})\ge 1-\epsilon.
   $$
   In other words,
the inequality
  \begin{equation}\label{eq:strip}
   (1-\delta)\|\bfx\|_{2}^2\le \|\Phi_I \bfx\|^2\le (1+\delta)\|\bfx\|_{2}^2
   \end{equation}
holds for at least a $1-\epsilon$ proportion of all $k$-subsets of $[N]$
and for all $x\in \reals^k.$
\end{definition}
A related but different definition was given later in several papers such as
\cite{cal10a,baj10a,gur09} as well as some others. In these works, a matrix is called
$(k,\delta,\epsilon)$-StRIP if inequality \eqref{eq:strip} holds for at least $1-\epsilon$ proportion of $k$-sparse unit vectors $\bfz\in \reals^N$. While several well-known classes of matrices were shown to have this property, it is not
sufficient for sparse recovery procedures. Several additional properties as well as specialized recovery procedures
that make signal reconstruction possible were investigated in \cite{cal10a}.

\remove{\begin{definition}\label{def:strip1} An $m\times N$ matrix $\Phi$ is said to be $(k,\delta,\epsilon)$ StRIP
if inequality \eqref{eq:strip} holds for at least $1-\epsilon$ proportion of $k$-sparse unit vectors $\bfz\in \reals^N$. 
By choosing first the support and then the coordinates of $\bfx$ we can write
this condition as
  $$
    P_{R_k\times{S^{k-1}}}((I,\bfx): \,(1-\delta)\|x\|_{2}^2\le \|\Phi_I x\|^2\le (1+\delta)\|x\|_{2}^2)\ge 1-\epsilon.
  $$
The probability $P_{S^{k-1}}$ refers to the uniform distribution on the set of unit-norm vectors in $\reals^k.$
\end{definition}}

In this paper we focus on the statistical isometry property as given by Def. \ref{def:strip} and mean
this definition whenever we mention StRIP matrices. We note that condition \eqref{eq:strip} is scalable, 
so the restriction to unit vectors is not essential.
%
\begin{definition} \label{def:sinc}
An $m\times N$ matrix $\Phi$ satisfies a statistical incoherence condition (is $(k,\alpha,\epsilon)$-SINC)
if 
  \begin{equation}\label{eq:sinc}
    P_{R_k}(\{I\in \textstyle{\cP_k(N)}: \max_{i\not\in I}\|\Phi_I^T\phi_i\|_2^2\le\alpha\})\ge 1-\epsilon.
  \end{equation}
\end{definition}
This condition is discussed in \cite{fuc04,tro05}, and more explicitly in \cite{tro08b}.
Following \cite{tro08b}, it appears in the proofs of sparse
recovery in \cite{can09a} and below in this paper. 
A somewhat similar average coherence condition was also
introduced in \cite{baj10a,baj11}. The reason that \eqref{eq:sinc}
is less restrictive than the coherence property is as follows.
Collections of unit vectors with small coherence (large separation)
cannot be too large so as not to contradict universal bounds on packings of
$\reals P^{m-1}.$ At the same time, for the norm
$\|\Phi_I^T\phi_i\|_2$ to be large it is necessary that
a given column is close to the majority of the $k$ vectors from
the set $I$, which is easier to rule out.

Nevertheless, the above relaxed conditions are still restrictive enough to rule out many
deterministic matrices: the problem is that for almost all supports $I$ 
we require that $\|\Phi_I\phi_i\|$ be small for all $i\not\in I.$
We observe that this condition can be further relaxed. Namely, let 
   $$
   \cB(\Phi)=\{t\in \reals:\, \exists I\in{\textstyle{\cP_k(N)}}, \, i\in I^c \text{ such that }\|\Phi_I^T\phi_i\|_2=t\}
   $$
be the set of all values taken by the coherence parameter.
Let us introduce the following definition.
\begin{definition}\label{def:wsinc}
An $m\times N$ matrix $\Phi$ is said to satisfy a weak statistical incoherence condition
(to be a $(k,\delta,\alpha,\epsilon)$-WSINC) if 
    \begin{equation}\label{eq:wsinc}
    \sum_{t\in\cB(\Phi)}  P_{R_k'}(\{(I,i), I\in A_\alpha(\Phi), i\in I^c \text{ such that }\|\Phi_I^T\phi_i\|_2=t\})g(\delta,t)\le 
  \frac{\epsilon}{N-k},
    \end{equation}
where $g(\delta,t)$ is a positive increasing function of $t$ and
   $$
   A_\alpha(\Phi)=\{I\in \textstyle{\cP_k(N)}: \exists i\in I^c\text{ such that }\|\Phi_I^T\phi_i\|_2^2>\alpha\}.
   $$
 \end{definition}   
We note that this definition is informative if $g(\delta,t)\le 1;$ otherwise, replacing it with 1 we get back 
the SINC condition. Below we use $g(\delta,t)=\exp(-(1-\delta)^2/(8t^2)).$
This definition takes account of the distribution of values of the quantity $\|\Phi_I^T\phi_i\|$
for different choices of the support and a column $\phi_i$ outside it. For binary dictionaries,
the WSINC property relies on a distribution of sums of Hamming distances between a column and a collection
of $k$ columns, taken with weights that decrease as the sum increases.

\begin{definition}
We say that a signal $\bfx\in \reals^N$ is drawn from a {\sl generic random
signal model} $\cS_k$ if

1) The locations of the $k$ coordinates of $\bfx$ with largest magnitudes
are chosen among all $k$-subsets $I\subset [N]$ with a uniform distribution;

2) Conditional on $I$, the signs of the coordinates $x_i, i\in I$ are
i.i.d. uniform Bernoulli random variables taking values in the set $\{1,-1\}$.
\end{definition}
Using previous defined notation, the probability induced by the generic model $P_{\cS_k}$ can be decomposed as $P_{R_k\times  R^k}$.

\vspace*{.1in}
\subsection{Contributions of this paper} Our results are as follows. First, we show that a combination of the StRIP and SINC
conditions suffices for stable recovery of sparse signals. In their large part, these results 
are due to \cite{tro08}. We incorporate some additional elements such as stability analysis of Basis Pursuit
based on these assumptions and give the explicit values of the constants involved in the
assumptions. We also show that the WSINC condition together with StRIP is sufficient for bounding
the off-support error of Basis Pursuit.

One of the main results of \cite{tro08,tro08b} is a 
sufficient condition for a matrix to act nearly isometrically on most sparse vectors. Namely,
an $m\times N$ matrix $\Phi$ is $(k,\delta,\epsilon=k^{-s})$-StRIP if
  $$
 \sqrt{  s\mu^2 k\log(k+1)}+\frac{k}{N}\|\Phi\|^2\le c\delta,
  $$
where $s\ge 1$ and $c$ is a constant; see \cite{tro08}, Theorem B. 
For this condition to be applicable, one needs that $\mu=O(1/\sqrt{k\log (1/\epsilon)}).$  
For sampling matrices that satisfy this condition, we obtain a near-optimal relation $m=O(k\log(N/\epsilon))$
between the parameters. Some examples of this kind are given below in Sect.~\ref{sec:determin}.
As one of our main results, we extend the region of parameters that suffice 
for $(k,\delta,\epsilon)$-StRIP. Namely, in Theorem \ref{thm:1/4} we prove that it is enough to have 
the relation $\mu=O(1/\sqrt[4]{k\log k\log^3(1/\epsilon)})$.
This improvement comes at the expense of an additional requirement on $\bar\mu^2=O(1/(k\log(1/\epsilon)))$ (or 
a similar inequality for $\bar\mu_{\max}^2$), but this is easily satisfied in a large class of examples, discussed
below in the paper. 
These examples in conjunction with Theorem \ref{thm:sinc} and the results in 
Section \ref{sect:lp-decoding} establish provable error guarantees for some new classes of sampling matrices.

We note a group of papers by Bajwa and Calderbank \cite{baj10a,baj11,cal10b} which is centered around
the analysis of a threshold decoding procedure (OST) defined in \cite{baj10a}. The sufficient conditions in 
these works are formulated in terms of $\mu$ and maximum average coherence 
$\nu=\frac1{N-1}\max_{1\le j\le N}|\sum_{i\ne j}\langle \phi_i,\phi_j\rangle|.$ 
Reliance on two coherence parameters of $\Phi$ for establishing sufficient conditions for error estimates in \cite{baj10a} 
is a shared feature of these papers and our research.
At the same time, the OST procedure relies on additional assumptions such as minimum-to-average ratio of signal
components bounded away from zero (in experiments, OST is efficient for uniform-looking signals,
and is less so for sparse signals with occasional small components). Some other similar assumptions
are required for the proofs of the noisy version of OST \cite{baj11}. 

We note that there is a number of other studies that establish sufficient conditions for
sampling matrices to provide bounded-error approximations in sparse recovery procedures, e.g., \cite{can10a,jud11a,jud11b}.
At the same time, these conditions are formulated in terms different from our
assumptions, so no immediate comparison can be made with our results. 

As a side result, we also calculate the parameters for the StRIP and SINC conditions that suffice to derive an error estimate for
sparse recovery using Lasso. This result is implicit in the work of Cand{\'e}s and Plan \cite{can09a}, which also 
uses the SINC property of sampling matrices.
The condition on sparsity for Lasso is in the form
$k=O(N/\|\Phi\|^2\log N),$ so if $\|\Phi\|^2\approx N/m,$ this yields $k\le O(m/\log N).$
This range of parameters exceeds the range in which Basis Pursuit is shown to have good error guarantees,
even with the improvement obtained in our paper. At the same time, both \cite{can09a} and our
calculations find error estimates in the form of bounds on $\|\Phi \bfx-\Phi \hat \bfx\|_2$ rather than
$\|\bfx-\hat\bfx\|_2,$ i.e., on the compressed version of the recovered signal.

In the final section of the paper we collect examples of incoherent dictionaries that satisfy our 
sufficient conditions for approximate recovery using Basis Pursuit. Two new examples with nearly optimal parameters
that emerge are the Delsarte-Goethals dictionaries \cite{mac91} and deterministic sub-Fourier dictionaries \cite{hau10}.
For instance, in the Delsarte-Goethals case we obtain the sketch dimension $m$ on the order of $k\log^3\frac{N}{\epsilon},$ 
which is near-optimal, and is in line with the comments made above.

We also show that the restricted independence property of the dictionary suffices to establish the StRIP
condition. Using sets of binary vectors known as orthogonal arrays, we find $(k,\delta,\epsilon)$-StRIP dictionaries with 
$k=O(m^{3/7}).$ At the same time, we are not able to show that restricted independence gives rise to the
SINC property with good parameter estimates, so this result has no consequences for linear programming decoders.

\vspace*{.05in}{\em Acknowledgment:} We are grateful to Waheed Bajwa for useful feedback on an early version of this work.

\section{Statistical Incoherence Properties and Basis Pursuit}\label{sect:lp-decoding}
In this section we prove approximation error bounds for recovery by
Basis Pursuit from linear sketches obtained using
deterministic matrices with the StRIP and SINC properties.

\subsection{StRIP Matrices with incoherence property} It was proved in \cite{tro08} that random 
sparse signals sampled using matrices with the StRIP property can be 
recovered with high probability from low-dimensional sketches using linear programming.
In this section we prove a similar result that in addition incorporates
stability analysis. 
\begin{theorem}\label{theorem:bp}
Suppose that $\bfx$ is a generic random signal from the model
$\cS_k.$ Let
$\bfy=\Phi\bfx$ and let $\hat\bfx$ be the approximation of $\bfx$ by the Basis
Pursuit algorithm.
Let $I$ be the set of $k$ largest coordinates of $\bfx$. If
  \begin{enumerate}
  \item $\Phi$ is $(k,\delta,\epsilon)$-StRIP;
  \item $\Phi$ is $(k,\frac{(1-\delta)^2}{8\log(2N/\epsilon)},\epsilon)$-SINC,
\end{enumerate}
then with probability at least $1-3\epsilon$
  $$
  \|\bfx_I-\hat\bfx_I\|_{2}\le \frac{1}{2\sqrt{2\log(2N/\epsilon)}}
      \min_{\bfx'\text{\rm is $k$
  -sparse} }\|\bfx- \bfx' \|_{1}
  $$
and
  $$
\|\bfx_{I^c}-\hat{\bfx}_{I^c}\|_{1} \le 4 \min_{\bfx' \text{\rm is $k$
  -sparse} }\|\bfx- \bfx' \|_{1}
  $$
\end{theorem}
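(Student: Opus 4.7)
My plan is to follow the classical ``dual certificate'' approach for Basis Pursuit, tailored to random signs in the generic model. Let $h=\hat\bfx-\bfx$. Since $\Phi\bfx=\bfy$ is feasible, optimality of $\hat\bfx$ gives $\Phi h=0$ and $\|\bfx+h\|_1\le\|\bfx\|_1$. Splitting the $\ell_1$ norm between $I$ (the support of the $k$ largest coordinates of $\bfx$) and $I^c$ and using the triangle inequality yields the standard cone-type relation
\[
\|h_{I^c}\|_1\ \le\ |\mathrm{sgn}(\bfx_I)^T h_I|\ +\ 2\,\|\bfx_{I^c}\|_1 .
\]
The whole game is therefore to bound $|\mathrm{sgn}(\bfx_I)^T h_I|$ by $c\,\|h_{I^c}\|_1$ for some $c<1$, and then convert this into the $\ell_2$ on-support bound via StRIP.

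The first step is to fix the random support $I$ and condition on the two good events ``$I$ is StRIP--good'' and ``$I$ is SINC--good'' from Definitions \ref{def:strip} and \ref{def:sinc}, which fail with total probability at most $2\epsilon$. Next I will construct the dual vector
\[
\bfu\ =\ \Phi_I(\Phi_I^T\Phi_I)^{-1}\,\mathrm{sgn}(\bfx_I),
\]
so that $\Phi_I^T\bfu=\mathrm{sgn}(\bfx_I)$ automatically. Using $\Phi h=0$, i.e.\ $\Phi_I h_I=-\Phi_{I^c}h_{I^c}$, gives
\[
\mathrm{sgn}(\bfx_I)^T h_I\ =\ \bfu^T\Phi_I h_I\ =\ -\sum_{i\notin I}h_i\,(\bfu^T\phi_i),
\]
so it suffices to show $\max_{i\notin I}|\bfu^T\phi_i|\le \tfrac12$ with probability at least $1-\epsilon$ over the random signs $\mathrm{sgn}(\bfx_I)$. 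For fixed $i\notin I$, the vector $\bfa=(\Phi_I^T\Phi_I)^{-1}\Phi_I^T\phi_i$ satisfies, by StRIP and SINC,
\[
\|\bfa\|_2^2\ \le\ \frac{\|\Phi_I^T\phi_i\|_2^2}{(1-\delta)^2}\ \le\ \frac{1}{8\log(2N/\epsilon)},
\]
and $\bfu^T\phi_i=\sum_{j\in I}\varepsilon_j a_j$ with $\varepsilon_j=\mathrm{sgn}(x_j)$ i.i.d.\ Rademacher. A Hoeffding bound gives $\Pr(|\bfu^T\phi_i|>\tfrac12)\le 2e^{-\tfrac14/(2\|\bfa\|_2^2)}\le \epsilon/N$, so a union bound over $i\notin I$ produces total failure probability at most $\epsilon$; altogether the three bad events contribute $3\epsilon$.

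On the complement, $|\mathrm{sgn}(\bfx_I)^T h_I|\le \tfrac12\|h_{I^c}\|_1$, which inserted into the cone inequality yields the off-support bound $\|h_{I^c}\|_1\le 4\|\bfx_{I^c}\|_1=4\min_{\bfx'\,k\text{-sparse}}\|\bfx-\bfx'\|_1$, giving the second conclusion. For the on-support conclusion I will use $\Phi h=0$ once more to write $h_I=-(\Phi_I^T\Phi_I)^{-1}\Phi_I^T\sum_{i\notin I}h_i\phi_i$, and bound
\[
\|h_I\|_2\ \le\ \frac{1}{1-\delta}\sum_{i\notin I}|h_i|\,\|\Phi_I^T\phi_i\|_2\ \le\ \frac{\sqrt{\alpha}}{1-\delta}\,\|h_{I^c}\|_1,
\]
with $\alpha=(1-\delta)^2/(8\log(2N/\epsilon))$, then substitute the off-support bound. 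Up to the explicit constants in front, this produces the claimed inequality $\|\bfx_I-\hat\bfx_I\|_2\le \bigl(2\sqrt{2\log(2N/\epsilon)}\bigr)^{-1}\min_{\bfx'}\|\bfx-\bfx'\|_1$; the constants can be sharpened by optimizing the threshold $\tfrac12$ used in Hoeffding above.

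The only genuinely delicate step is the off-support control of the dual vector: one must verify that the SINC parameter $\alpha$ dictated in the hypothesis is exactly what is needed so that Hoeffding applied to $\bfu^T\phi_i$ combined with the union bound over $N-k$ columns yields total failure at most $\epsilon$. The remaining pieces (the cone inequality, converting to the $\ell_2$ on-support bound via StRIP, and bookkeeping the $3\epsilon$ budget) are routine once the dual certificate is in place.
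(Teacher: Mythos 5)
Your proposal is correct and follows essentially the same route as the paper: the cone inequality from $\ell_1$-optimality, the dual certificate $\Phi^T\Phi_I(\Phi_I^T\Phi_I)^{-1}\sgn(\bfx_I)$ controlled off the support via Hoeffding over the random signs with the SINC bound on $\|\Phi_I^T\phi_i\|_2$, and the conversion to the on-support $\ell_2$ bound through $\bfh_I=-(\Phi_I^T\Phi_I)^{-1}\Phi_I^T\Phi_{I^c}\bfh_{I^c}$, with the same $3\epsilon$ budget. (The factor-of-$4$ slack you note between your final constant and the one stated in the theorem is present in the paper's own derivation as well.)
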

This theorem implies that if the signal $\bfx$ itself is $k$-sparse then
the basis pursuit algorithm will recover it exactly. Otherwise, its output
$\hat \bfx$ will be a tight sparse approximation of $\bfx$.

Theorem \ref{theorem:bp} will follow from the next three lemmas. Some of the
ideas involved in their proofs are close to the techniques used in \cite{can06a}.
Let $\bfh=\bfx-\hat\bfx$ be the error in recovery of basis pursuit.
In the following $I\subset[N]$ refers to the support of the $k$ largest coordinates
of $\bfx.$ 
\begin{lemma}\label{lemma:error-sup} Let $s=8\log(2N/\epsilon).$
Suppose that
   $\|(\Phi_I^T\Phi_I)^{-1}\|\le\frac1{1-\delta}$
and
  $$
   \|\Phi_I^T\phi_i\|_{2}^2\le s^{-1}(1-\delta)^2 \quad\text{for all }i\in I^c := [N]\setminus I.
  $$
Then
  $$
  \|\bfh_I\|_{2}\le s^{-\half}\,\|\bfh_{I^c}\|_{1}.
  $$
\end{lemma}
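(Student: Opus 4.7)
Since the statement of Theorem \ref{theorem:bp} sets $\bfy=\Phi\bfx$ (no noise) and $\hat\bfx$ is the Basis Pursuit optimizer, the error vector $\bfh=\bfx-\hat\bfx$ must satisfy $\Phi\bfh=0$. This is the only place feasibility of Basis Pursuit enters; the rest of the argument is a linear-algebraic identity plus the two coherence hypotheses.

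The first step is to split $\Phi\bfh=\Phi_I\bfh_I+\Phi_{I^c}\bfh_{I^c}=0$ and left-multiply by $\Phi_I^T$ to obtain
$\Phi_I^T\Phi_I\,\bfh_I=-\Phi_I^T\Phi_{I^c}\,\bfh_{I^c}.$
Using the first hypothesis $\|(\Phi_I^T\Phi_I)^{-1}\|\le 1/(1-\delta)$ (which in particular implies that the submatrix $\Phi_I^T\Phi_I$ is invertible), I can solve for $\bfh_I$ and take $\ell_2$ norms to get
$\|\bfh_I\|_2\le\frac{1}{1-\delta}\,\bigl\|\Phi_I^T\Phi_{I^c}\,\bfh_{I^c}\bigr\|_2.$

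The second step is to expand the right-hand side as $\Phi_I^T\Phi_{I^c}\,\bfh_{I^c}=\sum_{i\in I^c}\bfh_i\,\Phi_I^T\phi_i$ and apply the triangle inequality:
$\bigl\|\Phi_I^T\Phi_{I^c}\,\bfh_{I^c}\bigr\|_2\le \sum_{i\in I^c}|\bfh_i|\,\|\Phi_I^T\phi_i\|_2 \le \Bigl(\max_{i\in I^c}\|\Phi_I^T\phi_i\|_2\Bigr)\,\|\bfh_{I^c}\|_1.$
The second hypothesis bounds the maximum by $s^{-1/2}(1-\delta)$, and plugging in gives exactly $\|\bfh_I\|_2\le s^{-1/2}\|\bfh_{I^c}\|_1$, with the $(1-\delta)$ factors cancelling.

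\textbf{What I expect to be the main obstacle.} There is no real obstacle; the lemma is a direct deterministic calculation given the two conditioning hypotheses. The only point to be careful about is that the lemma itself is stated purely as an implication about a vector $\bfh$ with $\Phi\bfh=0$, so I should present it that way: the probabilistic content (that a random support $I$ actually satisfies hypotheses (1)--(2)) is deferred to the companion lemmas that combine StRIP and SINC to control the conditioning and the cross-correlations $\|\Phi_I^T\phi_i\|_2$. The bookkeeping detail worth noting is that the choice $s=8\log(2N/\epsilon)$ plays no role in this particular lemma beyond the algebraic substitution; its origin is in the SINC parameter $\alpha=(1-\delta)^2/(8\log(2N/\epsilon))$ chosen in Theorem \ref{theorem:bp}.
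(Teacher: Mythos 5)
Your proof is correct and follows essentially the same route as the paper's: derive $\bfh_I=-(\Phi_I^T\Phi_I)^{-1}\Phi_I^T\Phi_{I^c}\bfh_{I^c}$ from $\Phi\bfh=0$, then bound $\|\Phi_I^T\Phi_{I^c}\bfh_{I^c}\|_2$ by the triangle inequality over columns and apply the two hypotheses so that the $(1-\delta)$ factors cancel. Your closing remarks about where the feasibility of $\hat\bfx$ and the value of $s$ come from are accurate and consistent with how the paper organizes the surrounding lemmas.
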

\begin{proof} Clearly,
  $
   \Phi\bfh=\Phi\hat\bfx-\Phi\bfx=0,
  $
so $\Phi_I\bfh_I=-\Phi_{I^c}\bfh_{I^c}$ and
  $$
  \bfh_I=-(\Phi_I^T\Phi_I)^{-1}\Phi_I^T\Phi_{I^c}\bfh_{I^c}.
  $$
We obtain
  \begin{align*}
    \|\bfh_I\|_{2}&\le \|(\Phi_I^T\Phi_I)^{-1}\|\|\Phi_I^T\Phi_{I^c}\bfh_{I^c}\|_{2}\le
   \frac1{1-\delta}\sum_{i\in I^c}\|\Phi_I^T\phi_i\|_{2}|h_i|\\
  &\le  s^{-\half}\,\|\bfh_{I^c}\|_{1},
 \end{align*}
as required.
\end{proof}

Next we show that the error outside $I$ cannot be large. Below $\sgn(\bfu)$ is a $\pm1$-vector
of signs of the argument vector $\bfu.$
\begin{lemma} \label{lemma:v}
  Suppose that there exists a vector $\bfv\in\reals^N$ such that
  \begin{enumerate}
   \item[(i)] $\bfv$ is contained in the row space of $\Phi$, say $\bfv=\Phi^T\bfw;$
   \item[(ii)] $\bfv_I=\sgn(\bfx_I);$
   \item[(iii)] $\|\bfv_{I^c}\|_\ellinf\le \half.$
  \end{enumerate}
Then
  \begin{equation}\label{eq:h}
    \|\bfh_{I^c}\|_{1}\le 4\|\bfx_{I^c}\|_{1}.
  \end{equation}
\end{lemma}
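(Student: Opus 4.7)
The plan is the classical dual certificate argument: the vector $\bfv$ plays the role of a subgradient of $\|\cdot\|_1$ at $\bfx$, and the fact that it lies in the row space of $\Phi$ (property (i)) is what lets the minimality of the Basis Pursuit solution translate into a bound on $\bfh$.

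First I would invoke the optimality of $\hat\bfx=\bfx-\bfh$. Since the setting of Theorem \ref{theorem:bp} is noiseless ($\bfy=\Phi\bfx$), we have $\Phi\bfh=0$ and $\|\hat\bfx\|_{1}\le\|\bfx\|_{1}$. Splitting the $\ell_1$ norm over $I$ and $I^c$, applying the reverse triangle inequality $\|\bfx_{I^c}-\bfh_{I^c}\|_1\ge\|\bfh_{I^c}\|_1-\|\bfx_{I^c}\|_1$ on the off-support part, and the subgradient inequality $\|\bfx_I-\bfh_I\|_1\ge \|\bfx_I\|_1-\langle\sgn(\bfx_I),\bfh_I\rangle$ on the support, and then cancelling $\|\bfx_I\|_1$ from both sides, I obtain
\[
\|\bfh_{I^c}\|_1 \;\le\; 2\|\bfx_{I^c}\|_1 + \langle \sgn(\bfx_I),\,\bfh_I\rangle.
\]

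Next I would use $\bfv$ to control the inner product on the right. Since $\bfv=\Phi^T\bfw$ and $\Phi\bfh=0$, we have $\langle\bfv,\bfh\rangle=\langle\bfw,\Phi\bfh\rangle=0$. Splitting this over $I$ and $I^c$ and invoking $\bfv_I=\sgn(\bfx_I)$ (property (ii)) gives $\langle\sgn(\bfx_I),\bfh_I\rangle=-\langle\bfv_{I^c},\bfh_{I^c}\rangle$, and H\"older combined with property (iii) then yields
\[
\langle\sgn(\bfx_I),\,\bfh_I\rangle \;\le\; \|\bfv_{I^c}\|_\ellinf\,\|\bfh_{I^c}\|_1 \;\le\; \tfrac12\|\bfh_{I^c}\|_1.
\]
Substituting into the previous inequality gives $\tfrac12\|\bfh_{I^c}\|_1\le 2\|\bfx_{I^c}\|_1$, which is \eqref{eq:h}.

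There is essentially no obstacle inside this lemma: each of the three properties (i)--(iii) of $\bfv$ is used exactly once, and no probabilistic or coherence hypothesis enters. The genuinely hard part of the overall argument, to be handled in a separate lemma, is constructing the dual certificate $\bfv$ satisfying (i)--(iii) with high probability under the StRIP and SINC assumptions on $\Phi$ together with the generic sign model on $\bfx$; the present lemma is a deterministic deduction once such a $\bfv$ is given.
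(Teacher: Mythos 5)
Your proof is correct and follows essentially the same route as the paper: the standard dual-certificate argument combining Basis Pursuit optimality, the subgradient inequality on the support, and the orthogonality $\langle\bfv,\bfh\rangle=\langle\bfw,\Phi\bfh\rangle=0$ to convert $\langle\sgn(\bfx_I),\bfh_I\rangle$ into a term controlled by $\|\bfv_{I^c}\|_\ellinf\|\bfh_{I^c}\|_1\le\tfrac12\|\bfh_{I^c}\|_1$. The only difference is cosmetic (you keep the sign convention $\hat\bfx=\bfx-\bfh$ consistent and avoid the absolute value, whereas the paper works with $|\langle\sgn(\bfx_I),\bfh_I\rangle|$), and your closing remark correctly identifies that the construction of $\bfv$ is deferred to a separate probabilistic lemma.
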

\begin{proof}
By \eqref{eq:bp} we have
  \begin{align*}
    \|\bfx\|_{1}&\ge \|\hat\bfx\|_{1}=\|\bfx+\bfh\|_{1}
   =\|\bfx_I+\bfh_I\|_{1}+\|\bfx_{I^c}+\bfh_{I^c}\|_{1}\\
    &\ge \|\bfx_I\|_{1}+\ip{\sgn(\bfx_I)}{\bfh_I}+\|\bfh_{I^c}\|_{1}-\|\bfx_{I^c}\|_{1}.
  \end{align*}
Here we have used the inequality $\|\bfa+\bfb\|_{1}\ge\|\bfa\|_{1}+\ip{\sgn(\bfa)}
{\bfb}$ valid for any two vectors $\bfa,\bfb\in\reals^N$ and the triangle inequality.
From this we obtain
  $$
  \|\bfh_{I^c}\|_{1}\le |\ip{\sgn(\bfx_I)}{\bfh_I}|+2\|\bfx_{I^c}\|_{1}.
  $$
Further, using the properties of $\bfv,$ we have
\begin{eqnarray*}
|\langle\sgn(\bfx_I),\bfh_I\rangle| &=& |\langle\bfv_I,\bfh_I\rangle|\\
&=& |\langle\bfv,\bfh\rangle - \langle\bfv_{I^c},\bfh_{I^c}\rangle|\\
&\le& |\langle \Phi^T \bfw,\bfh\rangle| + |\langle\bfv_{I^c},\bfh_{I^c}\rangle|\\
&\le& |\langle \bfw,\Phi \bfh\rangle| + \|\bfv_{I^c}\|_{\ell_\infty} \|\bfh_{I^c}\|_{1}\\
&\le& \frac12  \|\bfh_{I^c}\|_{1}.
\end{eqnarray*}
The statement of the lemma is now evident.
\end{proof}

Now we prove that such a vector $\bfv$ as defined in the last lemma indeed exists.
\begin{lemma} \label{lemma:exists_v} Let $\bfx$ be a generic random signal from the model $\cS_k.$
Suppose that the support $I$ of the $k$ largest coordinates of $\bfx$ is fixed.
Under the assumptions of Lemma \ref{lemma:error-sup} the vector
  $$
   \bfv=\Phi^T\Phi_I(\Phi_I^T\Phi_I)^{-1}\sgn(\bfx_I)
  $$
satisfies (i)-(iii) of Lemma \ref{lemma:v} with probability at least $1-\epsilon.$ 

\end{lemma}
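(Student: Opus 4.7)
The plan is to verify properties (i) and (ii) essentially by inspection, and then concentrate on establishing (iii) via a Hoeffding-type concentration bound for Rademacher sums, followed by a union bound over $I^c$.

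Properties (i) and (ii) are direct algebraic consequences of the definition of $\bfv$. Setting $\bfw = \Phi_I(\Phi_I^T\Phi_I)^{-1}\sgn(\bfx_I)$ immediately gives $\bfv = \Phi^T\bfw$, and since $\Phi_I^T\Phi_I(\Phi_I^T\Phi_I)^{-1}=\text{Id}$ on $\reals^k$, restricting to coordinates in $I$ yields $\bfv_I=\sgn(\bfx_I)$. No probability is needed here.

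For (iii), I would fix $i\in I^c$ and write $v_i = \langle \bfa_i,\sgn(\bfx_I)\rangle$ where $\bfa_i := (\Phi_I^T\Phi_I)^{-1}\Phi_I^T\phi_i \in \reals^k$. The vector $\bfa_i$ is a deterministic quantity once $I$ is fixed, and only $\sgn(\bfx_I)$ is random. Using both hypotheses of Lemma \ref{lemma:error-sup} I can control its norm:
\begin{equation*}
\|\bfa_i\|_2 \;\le\; \|(\Phi_I^T\Phi_I)^{-1}\|\cdot\|\Phi_I^T\phi_i\|_2 \;\le\; \frac{1}{1-\delta}\cdot\frac{1-\delta}{\sqrt s} \;=\; \frac{1}{\sqrt s},
\end{equation*}
where $s=8\log(2N/\epsilon)$. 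This cancellation of the $(1-\delta)$ factor is precisely why the SINC threshold in the statement of Theorem \ref{theorem:bp} is scaled by $(1-\delta)^2$.

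Because the coordinates of $\sgn(\bfx_I)$ are i.i.d. Rademacher under the generic signal model, $v_i$ is a Rademacher sum with coefficient vector $\bfa_i$. Hoeffding's inequality then gives
\begin{equation*}
  \Pr\Bigl[|v_i|>\tfrac{1}{2}\Bigr] \;\le\; 2\exp\!\Bigl(-\tfrac{(1/2)^2}{2\|\bfa_i\|_2^2}\Bigr) \;\le\; 2\exp(-s/8) \;=\; \frac{\epsilon}{N}.
\end{equation*}
A union bound over the at most $N-k<N$ indices $i\in I^c$ then yields $\Pr[\|\bfv_{I^c}\|_{\ell_\infty}>\tfrac12]\le\epsilon$, completing the proof.

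The only genuinely non-routine ingredient is identifying $v_i$ as a Rademacher sum with an explicitly bounded $\ell_2$ coefficient vector; once this is done, Hoeffding and the union bound finish things off. The main subtlety is keeping the constants aligned so that the bound $\|\bfa_i\|_2\le 1/\sqrt s$ is exactly what is needed to push $2e^{-s/8}$ down to $\epsilon/N$ — this is what motivates the specific choice $s=8\log(2N/\epsilon)$ used throughout the section.
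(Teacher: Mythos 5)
Your proposal is correct and follows essentially the same route as the paper: verify (i)--(ii) algebraically, write $v_i$ as a Rademacher sum with coefficient vector $(\Phi_I^T\Phi_I)^{-1}\Phi_I^T\phi_i$, bound its $\ell_2$ norm by $1/\sqrt{s}$ using the two hypotheses of Lemma \ref{lemma:error-sup}, apply Hoeffding, and finish with a union bound over $I^c$. The constants work out exactly as in the paper's argument.
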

\begin{proof}
From the definition of $\bfv$ it is clear that it belongs to the row-space of $\Phi$ and
$\bfv_I =\sgn(\bfx_I).$
We have $v_i = \phi_i^T \Phi_I (\Phi_I^T\Phi_I)^{-1}\sgn(\bfx_I) = \ip{\bfs_i}{\sgn(\bfx_I)},$
where
   $$
\bfs_i = (\Phi_I^T\Phi_I)^{-1}\Phi_I^T \phi_i \in \reals^k.
   $$
We will show that $|v_i|\le \frac12$ for all $i \in I^c$ with
probability $1-\epsilon.$

Since the coordinates of $\sgn(\bfx_I)$ are i.i.d. uniform random variables taking
values in the set $\{\pm1\}$, we can use Hoeffding's inequality to claim that
\begin{equation}\label{Hoeffding}
P_{R^k}(|v_i| >1/2 )\le 2\exp\Big(-\frac1{8\|\bfs\|_2^2}\Big).
\end{equation}
On the other hand, for all $i\in I^c,$
\begin{eqnarray}\label{magnitude}
\|\bfs_i\|_{2} &=& \|(\Phi_I^T\Phi_I)^{-1}\Phi_I^T \phi_i\|_{2}\notag\\
&\le & \|(\Phi_I^T\Phi_I)^{-1}\| \|\Phi_I^T \phi_i\|_{2}\notag\\
&\le& \frac1{1-\delta} \frac{1-\delta}{\sqrt{8\log(2N/\epsilon)}}\notag\\
&=& \frac1{\sqrt{8\log(2N/\epsilon)}}.
\end{eqnarray}
Equations \eqref{Hoeffding} and \eqref{magnitude} together imply for any $i\in I^c,$
$$
P_{R^k}\Big(|v_i| >\frac12\Big) \le 2\exp\Big(-\frac1{8(1/\sqrt{8\log(2N/\epsilon)})^2}\Big)= \frac{\epsilon}N.
$$
Using the union bound, we now obtain the following relation:
   \begin{equation}\label{eq:1/2}
 P_{R^k}\Big(\|\bfv_{I^c}\|_{\infty}>1/2\Big)\le \epsilon.
   \end{equation}
Hence $|v_i|\le \frac12$ for all $i \in I^c$ with probability at least $1-\epsilon$.
\end{proof}

Now we are ready to prove Theorem~\ref{theorem:bp}.
\vspace{.1in}

\begin{proof}[Proof of Theorem~\ref{theorem:bp}]
The matrix $\Phi$ is $(k,\delta,\epsilon)$-SRIP. Hence,
with probability at least $1-\epsilon,$ $\|(\Phi_I^T\Phi_I)^{-1}\| \le \frac{1}{1-\delta}$.
At the same time, from the SINC assumption we have, with probability at least $1-\epsilon$
over the choice of $I$,
  $$
\|\Phi_I^T\phi_i\|_{2}^2 \le \frac{(1-\delta)^2}{8\log(2N/\epsilon)},
  $$
for all $i \in I^c.$ Thus, $\Phi_I$ will have these two properties
with probability at least $1-2\epsilon$.
Then from Lemma~\ref{lemma:error-sup} we obtain that
 $$
 \|\bfh_I\|_2 \le \frac{1}{\sqrt{8\log(2N/\epsilon)}} \|\bfh_{I^c}\|_1,
 $$
with probability $\ge 1-2\epsilon.$ Furthermore, from Lemmas \ref{lemma:v}, \ref{lemma:exists_v}
$$
\|\bfh_{I^c}\|_1 \le 4 \|\bfx_{I^c}\|_1,
$$
with probability $1-\epsilon$. This completes the proof.
  \end{proof}

\subsection{StRIP Matrices with weak incoherence property}
In this section we establish a recovery guarantee of Basis Pursuit under the weak SINC condition defined earlier in the paper.
\begin{theorem}\label{thm:wsinc}
 Suppose that the sampling matrix $\Phi$ is $(k,\delta,\epsilon)$-StRIP and $\big(k,\delta,\alpha,\epsilon^2\big)$-WSINC,
 where $\alpha={(1-\delta)^2}/{8\log (2N/\epsilon)}$ and $g_\delta(t)=\exp(-(1-\delta)^2/8t^2).$ 
Suppose that the signal $\bfx$ is chosen from the generic random signal model and let $\hat\bfx$ be 
the approximation of $\bfx$ found by Basis Pursuit. Then 
with probability at least $1-4\epsilon$ we have
  $$
  \|\bfx_{I^c}-\hat\bfx_{I^c}\|_{{1}}\le 4\min\limits_{\bfx' \text{\rm is $k$-sparse}} \|\bfx-\bfx'\|_{1}.
  $$
\end{theorem}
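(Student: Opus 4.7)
The plan is to follow the architecture of the proof of Theorem~\ref{theorem:bp}: produce a dual vector $\bfv\in\reals^N$ satisfying the deterministic conditions (i)--(iii) of Lemma~\ref{lemma:v}, which then yields $\|\bfh_{I^c}\|_1\le 4\|\bfx_{I^c}\|_1$ with no further work. Only the existence step (Lemma~\ref{lemma:exists_v}) needs to be revised, since SINC gave a uniform bound on $\|\Phi_I^T\phi_i\|_2$ for $i\in I^c$, whereas WSINC only gives a weighted-average bound.

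I would keep the same candidate $\bfv=\Phi^T\Phi_I(\Phi_I^T\Phi_I)^{-1}\sgn(\bfx_I)$, for which conditions (i) and (ii) are automatic. For (iii), the Hoeffding computation of Lemma~\ref{lemma:exists_v} is unchanged and gives, on the StRIP event,
$$
P_{R^k}(|v_i|>\tfrac12\mid I)\le 2\exp\!\Big(-\tfrac{(1-\delta)^2}{8\|\Phi_I^T\phi_i\|_2^2}\Big)=2g_\delta(\|\Phi_I^T\phi_i\|_2).
$$
After a union bound in $i\in I^c$, the whole argument reduces to showing that the random variable $S(I):=\sum_{i\in I^c}g_\delta(\|\Phi_I^T\phi_i\|_2)$ is at most $\epsilon$ with $P_{R_k}$-probability at least $1-\epsilon$.

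The step I would devote the most care to is the split of $\{S(I)>\epsilon\}$ according to whether $I\in A_\alpha(\Phi)$ or not. If $I\notin A_\alpha$, then every norm $\|\Phi_I^T\phi_i\|_2^2$ is at most $\alpha=(1-\delta)^2/(8\log(2N/\epsilon))$; by monotonicity of $g_\delta$ this choice of $\alpha$ makes $g_\delta(\|\Phi_I^T\phi_i\|_2)\le\epsilon/(2N)$, so $S(I)\le (N-k)\epsilon/(2N)\le\epsilon$ deterministically. For $I\in A_\alpha$ I would unpack $P_{R_k'}$ in \eqref{eq:wsinc} as the uniform law on pairs $(I,i)$ with $i\in I^c$: after multiplication by $N-k$, the WSINC hypothesis (with $\epsilon^2$ in place of $\epsilon$) rewrites as $\avg_{R_k}[\mathbf{1}_{\{I\in A_\alpha\}}\,S(I)]\le\epsilon^2$, and Markov's inequality then gives $P_{R_k}(I\in A_\alpha,\,S(I)>\epsilon)\le\epsilon$. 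Combining the two cases produces the required control on $S(I)$.

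Summing the failure probabilities---$\epsilon$ for StRIP, $\epsilon$ for the Markov step, and $2\epsilon$ from the Hoeffding union bound given $S(I)\le\epsilon$---yields the claimed $1-4\epsilon$ guarantee. The main obstacle I anticipate is the translation from the $P_{R_k'}$-weighted form of WSINC to an expectation bound: one must match the $N-k$ normalization carefully and check that the Hoeffding tail and the function $g_\delta(t)$ built into the WSINC hypothesis coincide exactly. Every other piece is a direct transfer from the SINC proof of Theorem~\ref{theorem:bp}.
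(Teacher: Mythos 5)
Your proposal is correct and follows essentially the same route as the paper's proof: Hoeffding plus a union bound over $i\in I^c$ reduces everything to controlling $S(I)=\sum_{i\in I^c}g_\delta(\|\Phi_I^T\phi_i\|_2)$, which is handled deterministically off $A_\alpha$ and by Markov's inequality applied to the WSINC expectation on $A_\alpha$, exactly as in Lemma~\ref{lemma:wsinc}. Your bookkeeping ($\epsilon$ for StRIP, $\epsilon$ for Markov, $2\epsilon$ for the sign randomness) matches the paper's $1-4\epsilon$ conclusion, and your identification of which of $A_\alpha$, $A_\alpha^c$ receives the Markov step is actually the cleaner reading of the argument.
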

If $\bfx$ is $k$-sparse and satisfies the condition $\bfy=\Phi \bfx$, then this theorem asserts that Basis Pursuit will 
find the support of 
$x$. If in addition $\bfx$ is the only $k$-sparse solution to $\bfy=\Phi \bfx,$ 
then we have $\hat{x}=x$. Note that the WSINC property is not sufficient for
the $(\elltwo,\ellone)$ error guarantee.
 However, once the corrected support is detected, the signal $\bfx$ can be 
found by solving the overcomplete system $\bfy=\Phi_I \bfx$.

To prove Theorem \ref{thm:wsinc}, we refine the ideas used to establish Lemma
\ref{lemma:exists_v}.
\begin{lemma}\label{lemma:wsinc}
Suppose that the sampling matrix $\Phi$ satisfies the conditions of Theorem \ref{thm:wsinc}. For any $\bfx\in\reals^k$
and $I\subset[N]$ define $v(\bfx,I)=\Phi^T\Phi_I(\Phi_I^T\Phi_I)^{-1}\sgn(\bfx).$ 
Let 
   $$
       p(I)=P_{R^k}(\|v_{I^c}(\bfx,I)\|_\infty>1/2),
   $$
Then
   $$
    P_{R_k}(\{I: p(I)>\epsilon\})<3\epsilon.
   $$
\end{lemma}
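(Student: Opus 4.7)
The plan is to refine the argument of Lemma \ref{lemma:exists_v} so that, instead of using the SINC hypothesis to bound each $\|\Phi_I^T\phi_i\|_2$ in the worst case, I keep the dependence on these inner products explicit and then average against the WSINC hypothesis. Throughout, let $E = \{I : \|(\Phi_I^T\Phi_I)^{-1}\| \le 1/(1-\delta)\}$; the StRIP hypothesis gives $P_{R_k}(E^c) \le \epsilon$.

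First, I would redo the Hoeffding step of Lemma \ref{lemma:exists_v} for a generic $I \in E$. Writing $v_i(\bfx,I) = \langle \bfs_i, \sgn(\bfx)\rangle$ with $\bfs_i = (\Phi_I^T\Phi_I)^{-1}\Phi_I^T\phi_i$, the bound $\|\bfs_i\|_2 \le \|\Phi_I^T\phi_i\|_2/(1-\delta)$ survives, so Hoeffding followed by a union bound over $i \in I^c$ gives
$$
\mathbb{1}_E\, p(I) \;\le\; 2\sum_{i\in I^c}\exp\!\Big(-\frac{(1-\delta)^2}{8\|\Phi_I^T\phi_i\|_2^2}\Big) \;=\; 2\sum_{i\in I^c} g(\delta,\|\Phi_I^T\phi_i\|_2) \;=:\; 2\,q(I).
$$
I will treat $q(I)$ as the main quantity to control.

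Next I would split the remaining analysis according to whether $I$ lies in $A_\alpha(\Phi)$. If $I \in E \setminus A_\alpha$, then $\|\Phi_I^T\phi_i\|_2^2 \le \alpha = (1-\delta)^2/(8\log(2N/\epsilon))$ for every $i \in I^c$, so each summand in $q(I)$ is at most $\exp(-\log(2N/\epsilon)) = \epsilon/(2N)$, and consequently $p(I) \le 2(N-k)\cdot\epsilon/(2N) \le \epsilon$ deterministically. Hence the outlier event never occurs on $E \setminus A_\alpha$.

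For the case $I \in A_\alpha$ I would invoke WSINC. Since $P_{R_k'}$ is uniform on pairs $(I,i)$ with $i\in I^c$, expanding the WSINC inequality (with parameter $\epsilon^2$) yields
$$
\avg_{I \sim R_k}\bigl[\mathbb{1}_{I\in A_\alpha}\,q(I)\bigr] \;\le\; \epsilon^2.
$$
Markov's inequality then gives $P_{R_k}\bigl(I \in A_\alpha,\ q(I) > \epsilon/2\bigr) \le 2\epsilon$. Combining the two cases, the event $\{p(I) > \epsilon\}$ is contained in $E^c \cup \{I \in A_\alpha,\ q(I) > \epsilon/2\}$, whose probabilities sum to at most $\epsilon + 2\epsilon = 3\epsilon$, as claimed.

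The only real subtlety is the last averaging step: the WSINC hypothesis is tailor-made so that its left-hand side is exactly $\avg[\mathbb{1}_{A_\alpha} q(I)]/(N-k)$ (up to the factor absorbed into $g$), and one has to check this bookkeeping carefully. Once that identification is made, Markov's inequality converts the average bound into the desired probability bound. Everything else is a direct pointwise refinement of Lemmas \ref{lemma:error-sup}--\ref{lemma:exists_v}; no new analytic ingredients are required.
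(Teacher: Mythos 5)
Your proof is correct and follows essentially the same route as the paper: split according to the StRIP-good set and the SINC-bad set $A_\alpha$, use Hoeffding pointwise to kill the good-good case, and combine the WSINC average with Markov's inequality to bound the StRIP-good/SINC-bad case by $2\epsilon$. Your bookkeeping is in fact slightly cleaner than the paper's (which momentarily swaps the roles of $A$ and $A^c$ in its prose, and applies Markov to $p(I)$ rather than to the Hoeffding majorant $q(I)$), and your identification of the WSINC left-hand side with $\avg[\mathbb{1}_{A_\alpha} q(I)]/(N-k)$ is exactly right.
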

\begin{proof}
As in the proof of Lemma \ref{lemma:exists_v}, we define the vector
   $$
\bfs_i(I) = (\Phi_I^T\Phi_I)^{-1}\Phi_I^T \phi_i \in \reals^k
   $$
   and let $v_i(\bfx,I)$ be the $i$th coordinate of the vector $v(\bfx,I).$ 
From now on we write simply $v_i,\bfs_i,$ omitting the dependence on $I$ and $\bfx$.
Let $M=M(\Phi):=\{I\in \cP_k(N):   \|\Phi_I^T\Phi_I\|_2\ge 1-\delta\},$ then the StRIP property of $\Phi$ implies that
  $$
  P_{R_k}(M)\ge 1-\epsilon.
  $$
By definition, for any $I\in M$
  $$
  \|\bfs_i\|_2 = \|(\Phi_I^T\Phi_I)^{-1}\Phi^T_I\phi_i\|_2 \leq \frac{1}{1-\delta}\|\Phi^T_I\phi_i\|_2.
  $$
Now we split the target probability into three parts:
  \begin{align*}
  P_{R_k}(\{I: p(I)>\epsilon\})&=P_{R_k}(\{I\in M\cap A: p(I)>\epsilon\})+
    P_{R_k}(\{I\in M\cap A^c: p(I)>\epsilon\})\\
    &+ P_{R_k}(\{I\in M^c: p(I)> \epsilon\}),
  \end{align*}
where $A=A_\alpha(\Phi)=\{ I : \|\Phi^T_{ I }\phi_i\|^2_2 > \alpha \text{ for some } i\in  I^c \}$ is the 
set of supports appearing in the definition of the WSINC property. If $I\in M\cap A,$ i.e., it supports both
StRIP and SINC properties, then \eqref{eq:1/2} implies that $p(I)\le\epsilon,$ so the first term on the right-hand side equals 0. 
The third term refers to supports with no SINC property, whose total probability is $\le \epsilon.$
 Estimating the second term
by the Markov inequality, we have
  \begin{equation}\label{eq:M}
  P_{R_k}(\{ I \in M \cap A^c: p(I)>\epsilon\} )
\leq \frac{\avg_{R_k}[p(I), {\bf 1}(I \in M\cap A^c)]}{\epsilon}
   \end{equation}
where  ${\bf 1}(\cdot)$ denotes the indicator random variable. We have
  \begin{equation}\label{eq:E}
  \avg_{R_k}[p(I), I \in M\cap A^c]=\avg_{R_k}[p(I){\bf 1}( I \in M\cap A^c)]= 
  \sum_{I\in M\cap A^c}\frac1{\binom Nk}p(I),
  \end{equation}

Let us first estimate $p(I)$ for $I\in M\cap A^c$ by invoking Hoeffding's inequality \eqref{Hoeffding}:
\begin{align*}
  p(I)&= P_{R^k}(\exists i\in I^c, \ \ |v_i|>1/2)\leq \sum_{i\in I^c}   P_{R^k}(|v_i|>1/2 ) \\
  &\leq \sum_{i\in I^c} 2\exp \Big (-\frac 1{8\|\bfs_i\|_2^2}\Big) \\
&\stackrel{\eqref{magnitude}}\leq\sum_{i\in I^c} 2\exp \Big(-\frac {(1-\delta)^2}{8\|\Phi^T_I\phi_i\|_2^2}\Big)\\
& = 2(N-k)\sum_{t\in\cB(\Phi)}  \exp\Big(-\frac{(1-\delta)^2}{8t^2}\Big)
   P_{R'_k}(\|\Phi_{I}^T\phi_i\|_2=t\mid I).
\end{align*}
Substituting this result into \eqref{eq:E}, we obtain 
  \begin{align*}
  \avg_{R_k}[p(I), \{I \in M\cap A^c\}]&\le 2(N-k)\sum_{t\in\cB(\Phi)} \exp\Big(-\frac{(1-\delta)^2}{8t^2}\Big)
    \sum_{I\in M\cap A^c}\frac1{\binom Nk}P_{R'_k}(\|\Phi_{I}^T\phi_i\|=t\mid I)\\
    &{\le} 2(N-k)\sum_{t\in\cB(\Phi)} \exp\Big(-\frac{(1-\delta)^2}{8t^2}\Big) P_{R'_k}(I\in A^c, \|\Phi_I^T\phi\|_2=t)\\
    &\le 2\epsilon^2
  \end{align*}
  where the last step is on account of \eqref{eq:M} and the WSINC assumption.
\end{proof}

{\em Proof of Theorem \ref{thm:wsinc}:} Define the set $B$ by
   $$
   B=\{I\in R_k: P_{{R^{k}}}(\|\bfv_{I^c}\|_{\infty}>1/2\mid I)>\epsilon\}.
   $$
Recall that Theorem \ref{thm:wsinc} is stated with respect to the random signal $\bfx$. Therefore,
let us estimate the probability
 \begin{align*}
 P_{R_k\times {R^{k}}}&( \{(I,\bfx): \,\|\bfv_{I^c}\|_{\infty}>1/2\})\\
 &=\sum_{I\in \cP_k(N)}
 P_{R_k\times{R^{k}}}(\{\bfx:\|\bfv_{I^c}\|_{\infty}>1/2\}\mid I)P_{R_k\times {R^{k}}}(I)\\
 &=\sum_{I\in B^c}P_{{R^{k}}}(\{\bfx:\|\bfv_{I^c}\|_{\infty}>1/2\}\mid I)P_{R_k}(I) +\sum_{I\in B}P_{{R^{k}}}(\{\bfx:\|\bfv_{I^c}\|_{\infty}>1/2\}\mid I)P_{R_k}(I).
 \end{align*}
We have $P_{{R^{k}}}(\{\bfx:\|\bfv_{I^c}\|_{\infty}>1/2\}\mid I)<\epsilon$ from Lemma \ref{lemma:exists_v} and
  $
    P_{R_k}(B)\le 3\epsilon
  $
from Lemma \ref{lemma:wsinc}, so
  $$
  P_{R_k\times {R^{k}}} ( \{(I,\bfx): \,\|\bfv_{I^c}\|_{\infty}>1/2\})<\epsilon(1+3\epsilon)<4\epsilon.
  $$ 
This implies that with probability $1-4\epsilon$ the signal $\bfx$ chosen from the generic random signal
model satisfies the conditions of Lemma \ref{lemma:v}, i.e., 
  $$
  \|\bfx_{I^c}-\hat\bfx_{I^c}\|_1\le 4\|\bfx_{I^c}\|_1.
  $$
This completes the proof. 
\qed

\section{Incoherence Properties and Lasso}
In this section we prove that sparse signals can be approximately recovered
from low-dimensional observations using Lasso if the sampling matrices have
statistical incoherence properties.
The result is a modification of the methods developed in \cite{can09a,tro08}
in that we prove that the conditions used there to bound the error of the Lasso
estimate hold with high probability if $\Phi$ is has both StRIP and SINC properties.
The precise claim is given in the following statement.
\begin{theorem}\label{thm:main2}
Let $\bfx$ be a random $k$-sparse signal  whose support satisfies the two properties of
 the generic random signal model $S_k.$
Denote by $\hat\bfx$  its estimate from $\bfy = \Phi \bfx +\bfz$ via Lasso \eqref{eq:Lasso}, where
$\bfz$ is a i.i.d. Gaussian vector with zero mean and variance $\sigma^2$ and where
$\lambda = 2\sqrt{2\log N}.$
Suppose that $k \le \frac{c_0 N}{\|\Phi\|^2\log N}$,
where $c_0$ is a positive constant, and that
 the matrix $\Phi$ satisfies the following two properties:
\begin{enumerate}
  \item $\Phi$ is $(k,\frac12,\epsilon)$-StRIP.
  \item $\Phi$ is $(k,\frac{1}{128\log(N/2\epsilon)},\epsilon)$-SINC.
\end{enumerate}
Then we have
$$
\|\Phi\bfx -\Phi\hat{\bfx}\|_{2}^2 \le C_0 k\log{N}\sigma^2,
$$
with probability at least $1-3\epsilon - \frac1{N\sqrt{2\pi\log N}}- N^{-a}$,
where $C_0>0$ is an absolute constant and $a=0.15\log(2N/\epsilon)-1.$
\end{theorem}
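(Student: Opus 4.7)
The plan is to adapt the Cand\`es--Plan argument of \cite{can09a} to the deterministic-matrix / random-signal setting, with StRIP and SINC replacing their random-matrix hypotheses. The central device is the \emph{oracle} solution $\tilde\bfx$ obtained by minimizing the Lasso objective under the constraint $\supp(\tilde\bfx)\subset I$. The three steps are: (a) on a high-probability event for $I$, write $\tilde\bfx$ in closed form from its KKT conditions; (b) show that, over the randomness of the signal signs and of the noise, $\tilde\bfx$ also satisfies the KKT conditions of the unrestricted program \eqref{eq:Lasso}, so that $\hat\bfx=\tilde\bfx$ by uniqueness; (c) bound $\|\Phi(\tilde\bfx-\bfx)\|_2^2$ from the closed form of $\tilde\bfx$.

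For (a), I would condition on the $P_{R_k}$-event of probability at least $1-2\epsilon$ that $I$ supports both $\|(\Phi_I^T\Phi_I)^{-1}\|\le 2$ (from $(k,\tfrac12,\epsilon)$-StRIP) and the SINC bound $\|\Phi_I^T\phi_i\|_2^2\le\alpha:=(128\log(N/2\epsilon))^{-1}$ for every $i\in I^c$. Writing $P_I=\Phi_I(\Phi_I^T\Phi_I)^{-1}\Phi_I^T$, the oracle KKT system gives
$$
\tilde\bfx_I-\bfx_I=(\Phi_I^T\Phi_I)^{-1}\Phi_I^T\bfz-\lambda\sigma^2(\Phi_I^T\Phi_I)^{-1}\sgn(\bfx_I),
$$
and the dual-feasibility condition required for (b) reduces to $|\phi_i^T(\mathrm{Id}-P_I)\bfz+\lambda\sigma^2\,\phi_i^T\Phi_I(\Phi_I^T\Phi_I)^{-1}\sgn(\bfx_I)|<\lambda\sigma^2$ for every $i\in I^c$. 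The sign-dependent summand is handled by Hoeffding's inequality applied to $\sgn(\bfx_I)$, exactly as in Lemma \ref{lemma:exists_v}, using the estimate $\|(\Phi_I^T\Phi_I)^{-1}\Phi_I^T\phi_i\|_2\le 2\sqrt\alpha$; the constant $128$ in $\alpha$ is tuned so that with $\lambda=2\sqrt{2\log N}$ the Hoeffding exponent dominates the $\log(N-k)$ union bound, leaving this summand below $\tfrac12\lambda\sigma^2$ with probability at least $1-\epsilon$. The noise-dependent summand $\phi_i^T(\mathrm{Id}-P_I)\bfz$ is, conditionally on $I$, centered Gaussian with variance $\sigma^2\|(\mathrm{Id}-P_I)\phi_i\|_2^2\le\sigma^2$, and the Mills-ratio tail $\Pr[|Z|>t]\le e^{-t^2/2}/(t\sqrt{2\pi})$ evaluated at $t=\sqrt{2\log N}$, union-bounded over $N-k$ indices, yields the characteristic failure term $1/(N\sqrt{2\pi\log N})$.

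Once (b) ensures $\hat\bfx=\tilde\bfx$, step (c) expands $\Phi(\hat\bfx-\bfx)=P_I\bfz-\lambda\sigma^2\Phi_I(\Phi_I^T\Phi_I)^{-1}\sgn(\bfx_I)$. The second summand has squared norm at most $\lambda^2\sigma^4 k\,\|(\Phi_I^T\Phi_I)^{-1}\|\le 16\,k\sigma^4\log N$, already of the claimed order. The first summand $\|P_I\bfz\|_2^2$ is, conditionally on $I$, a $\sigma^2$-scaled $\chi^2$ random variable with $\rank(P_I)\le k$ degrees of freedom; the Laurent--Massart tail gives $\|P_I\bfz\|_2^2\le C\sigma^2(k+\log(2N/\epsilon))=O(k\sigma^2\log N)$ with exception probability $N^{-a}$, which is where the explicit constant $a=0.15\log(2N/\epsilon)-1$ arises. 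The sparsity hypothesis $k\le c_0N/(\|\Phi\|^2\log N)$ enters, as in \cite{can09a,tro08}, to certify $\|(\mathrm{Id}-P_I)\phi_i\|_2\le 1$ uniformly and to keep the $\chi^2$ concentration active at the claimed scale. The step I expect to be most delicate is the probability bookkeeping: aligning the $2\epsilon$ from StRIP and SINC, the $\epsilon$ from Hoeffding over the random signs, and the two Gaussian tails so that they combine into exactly $1-3\epsilon-1/(N\sqrt{2\pi\log N})-N^{-a}$ rather than a cruder estimate.
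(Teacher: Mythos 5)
Your proposal follows essentially the same route as the paper: the paper invokes the Cand\`es--Plan primal-dual witness analysis as the black-box Theorem~\ref{thm:CP} and devotes its proof to verifying that theorem's three hypotheses (StRIP for $\|(\Phi_I^T\Phi_I)^{-1}\|\le 2$, a Gaussian maximum for $\|\Phi^T\bfz\|_\infty$, and Lemma~\ref{lemma:128}'s Hoeffding-over-signs plus Gaussian-tail bounds using the SINC estimate $\|(\Phi_I^T\Phi_I)^{-1}\Phi_I^T\phi_i\|_2\le 2\sqrt{\alpha}$), which is exactly your step (b); you merely re-derive the cited machinery in steps (a) and (c) instead of quoting it. One bookkeeping discrepancy worth noting: in the paper the term $N^{-a}$ with $a=0.15\log(2N/\epsilon)-1$ arises from the Gaussian maximum $\max_{i\in I^c}|\ip{\bfw_i'}{\bfz}|$ in the dual-feasibility check (and $1/(N\sqrt{2\pi\log N})$ from $\|\Phi^T\bfz\|_\infty\le 2\sqrt{\log N}$), not from a $\chi^2$ bound on $\|P_I\bfz\|_2^2$ as you suggest, so your constants would need realignment even though the structure of the argument is correct.
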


The following theorem is implicit in \cite{can09a}, see
Theorem 1.2 and Sect 3.2 in that paper.
\begin{theorem}\label{thm:CP}{\rm (Cand{\`e}s and Plan)}
Suppose that $\bfx$ is a $k$-sparse signal drawn from the model $S_k,$
where
  $$
   k \le \frac{c_0
  N}{\|\Phi\|^2\log N},$$  where $c_0>0$ is a constant.
Let $I\subset [N]$ be the support of $\bfx$ and
suppose the following three conditions are satisfied:
\begin{enumerate}
  \item $\|(\Phi_I^T\Phi_I)^{-1}\| \le 2.$
  \item $\|\Phi^T\bfz\|_\ellinf \le 2\sqrt{\log N}.$
  \item $
  \|\Phi_{I^c}^T\Phi_I (\Phi_I^T\Phi_I)^{-1}\Phi_I^T\bfz\|_\ellinf
  + \sqrt{8\log N}\|\Phi_{I^c}^T\Phi_I (\Phi_I^T\Phi_I)^{-1}\sgn(\bfx_I)\|_\ellinf
  \le (2-\sqrt{2})\sqrt{2\log N}.
          $
\end{enumerate}
Then
$$
\|\Phi\bfx -\Phi\hat{\bfx}\|_{2}^2 \le C_0 k(\log{N})\sigma^2,
$$
where $C_0$ is an absolute constant.
\end{theorem}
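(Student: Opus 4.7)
The plan is to use the standard dual-certificate / KKT argument for Lasso. Let $\tilde\bfx$ denote the unique minimizer of the Lasso program over vectors supported on $I$; existence and uniqueness follow from condition~(1), which guarantees that $\Phi_I^T\Phi_I$ is invertible with controlled conditioning. The restricted stationarity condition reads $\Phi_I^T(\Phi_I\tilde\bfx_I - \bfy) + \lambda\sigma^2\sgn(\tilde\bfx_I) = 0$, and substituting $\bfy = \Phi_I\bfx_I + \bfz$ gives the closed form
\[
\tilde\bfx_I - \bfx_I = (\Phi_I^T\Phi_I)^{-1}\Phi_I^T\bfz \;-\; \lambda\sigma^2(\Phi_I^T\Phi_I)^{-1}\sgn(\tilde\bfx_I).
\]
Under a mild lower bound on the nonzero entries of $\bfx_I$, conditions (1)–(2) ensure that the perturbation is too small to flip signs, so $\sgn(\tilde\bfx_I) = \sgn(\bfx_I)$ and we may treat the sign pattern as known.

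I would then show $\hat\bfx = \tilde\bfx$ by constructing a subgradient certificate $\bfg \in \partial\|\tilde\bfx\|_1$ satisfying $\Phi^T(\bfy - \Phi\tilde\bfx) = \lambda\sigma^2\bfg$. On $I$, this holds with $\bfg_I = \sgn(\bfx_I)$ by construction. On $I^c$, setting $\bfg_{I^c} = (\lambda\sigma^2)^{-1}\Phi_{I^c}^T(\bfy - \Phi\tilde\bfx)$ and expanding via the closed form above yields
\[
\lambda\sigma^2\,\bfg_{I^c} = \Phi_{I^c}^T\bfz \;-\; \Phi_{I^c}^T\Phi_I(\Phi_I^T\Phi_I)^{-1}\Phi_I^T\bfz \;+\; \lambda\sigma^2\,\Phi_{I^c}^T\Phi_I(\Phi_I^T\Phi_I)^{-1}\sgn(\bfx_I).
\]
Applying the triangle inequality in $\ellinf$-norm, the first term is bounded directly by condition~(2), while the remaining two terms are precisely those appearing on the left side of condition~(3). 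With $\lambda = 2\sqrt{2\log N}$, the constants line up so that condition~(3) combined with~(2) forces $\|\bfg_{I^c}\|_\ellinf < 1$ strictly. By strict complementary slackness this certifies that the Lasso minimizer is unique, coincides with $\tilde\bfx$, and is supported on $I$ with the correct signs.

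Once $\hat\bfx$ is supported on $I$ with known signs, the prediction error has the closed form
\[
\Phi\hat\bfx - \Phi\bfx = P_I\bfz \;-\; \lambda\sigma^2\,\Phi_I(\Phi_I^T\Phi_I)^{-1}\sgn(\bfx_I),
\]
where $P_I = \Phi_I(\Phi_I^T\Phi_I)^{-1}\Phi_I^T$ is the orthogonal projection onto the column span of $\Phi_I$. Since $P_I$ has rank $k$ and $\bfz$ is isotropic Gaussian, $\avg\|P_I\bfz\|_2^2 = k\sigma^2$, and Gaussian concentration gives $\|P_I\bfz\|_2^2 = O(k\sigma^2\log N)$ with the stated probability. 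Condition~(1) bounds the sign term via
\[
\|\Phi_I(\Phi_I^T\Phi_I)^{-1}\sgn(\bfx_I)\|_2^2 = \sgn(\bfx_I)^T(\Phi_I^T\Phi_I)^{-1}\sgn(\bfx_I) \le 2k,
\]
so multiplying by $\lambda^2 = 8\log N$ (with the paper's normalization convention reconciling the $\sigma$-powers) contributes $O(k\sigma^2\log N)$. Summing via $\|a+b\|_2^2 \le 2\|a\|_2^2 + 2\|b\|_2^2$ yields the required bound $C_0 k(\log N)\sigma^2$.

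The main obstacle is the sign-matching step: replacing $\sgn(\tilde\bfx_I)$ by $\sgn(\bfx_I)$ so that condition~(3), which is stated in terms of $\sgn(\bfx_I)$, can be applied. The cleanest route is to assume the minimum nonzero entry of $\bfx$ exceeds the magnitude of the restricted perturbation (giving immediate sign preservation), or alternatively to run a fixed-point / contraction argument on the finite set of sign patterns using the well-conditioning from~(1). This step is technical rather than conceptual, but it is essential, because the dual-certificate inequality is fragile to any sign ambiguity on $I$.
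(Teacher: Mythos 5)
First, a point of context: the paper does not prove this statement at all --- it is imported verbatim from Cand\`es and Plan with the remark that it is ``implicit in [can09a], see Theorem 1.2 and Sect.\ 3.2 in that paper.'' So there is no in-paper proof to match; the question is only whether your reconstruction is sound. Your primal--dual witness strategy is indeed the right family of argument (it is the one underlying Cand\`es--Plan), and your bookkeeping of the constants is correct: $\|\Phi_{I^c}^T\bfz\|_\ellinf\le \sqrt2\cdot\sqrt{2\log N}$ from (2) plus the $(2-\sqrt2)\sqrt{2\log N}$ from (3) sums to exactly $\lambda=2\sqrt{2\log N}$, and the final error bound only needs $\|\bfg_I\|_\ellinf\le1$ together with $\bfg_I^T(\Phi_I^T\Phi_I)^{-1}\bfg_I\le 2k$, so that part is robust.

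The genuine gap is the one you flag yourself, and neither of your proposed repairs closes it. The restricted stationarity condition produces a subgradient $\bfg_I\in\partial\|\tilde\bfx_I\|_1$, which coincides with $\sgn(\bfx_I)$ only if no coordinate of $\tilde\bfx_I$ is killed or sign-flipped; hypothesis (3) controls only the direction $\sgn(\bfx_I)$, and an $\ell_\infty$-uniform bound over all subgradients would be a Tropp-style exact-recovery-coefficient condition, which is far stronger and fails in the parameter regime the paper cares about. Your first fix --- a lower bound on $\min_{i\in I}|x_i|$ --- introduces a hypothesis that is not in the theorem and is not available where the paper applies it: Theorem 3.1 invokes this result for signals from the generic model $\cS_k$, which randomizes only the signs and places no floor on the magnitudes (a beta-min condition is what Cand\`es--Plan need for their \emph{exact support recovery} theorem, not for the prediction-error bound quoted here). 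Your second fix, ``a fixed-point/contraction argument on the finite set of sign patterns,'' is not an argument as stated. Relatedly, the assertion that ``conditions (1)--(2) ensure that the perturbation is too small to flip signs'' is false without a magnitude floor, and writing $\sgn(\tilde\bfx_I)$ in the stationarity condition already presumes $\tilde\bfx_I$ has no zero entries. To complete the proof one must either reproduce Cand\`es--Plan's actual treatment of the prediction error, which reaches the bound without requiring sign consistency of the restricted solution, or supply a genuinely different route around the subgradient; as written, the proof is incomplete at precisely its load-bearing step.
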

Our aim will be to prove that conditions (1)-(3) of this theorem hold
with large probability under the assumptions of Theorem \ref{thm:main2}.

First, it is clear that $\|\Phi^T\bfz\|_\infty\le 2\sqrt{\log N}$ with probability
at least $1-(N\sqrt{2\pi\log N})^{-1}.$ This follows simply because $\bfz$ is
an independent Gaussian vector, and has been discussed in \cite{can09a}
(this is also the reason for selecting the particular value of $\lambda_N$).
The main part of the argument is contained in the following lemma whose proof
uses some ideas of \cite{can09a}.
\begin{lemma}\label{lemma:128}
  Suppose that $\half\le\|\Phi^T_I\Phi_I-\text{\rm Id}\|\le\nicefrac32$
and that for all $i \in I^c,$
   $$
    \|\Phi_I^T\phi_i\|_{2}^2\le (128 \log(2N/\epsilon))^{-1}.
  $$
Then Condition (3) of Theorem \ref{thm:CP} holds
with probability at least $1-\epsilon-N^{-a}$ for $a=0.15\log(2N/\epsilon)-1$.
\end{lemma}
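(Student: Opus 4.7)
The plan is to verify Condition (3) of Theorem \ref{thm:CP} by splitting its left-hand side into a noise term and a sign term and controlling each by a concentration bound tailored to one of the two failure probabilities. Write
\[
T_1=\|\Phi_{I^c}^T P_I \bfz\|_\infty,\qquad T_2=\|\Phi_{I^c}^T\Phi_I(\Phi_I^T\Phi_I)^{-1}\sgn(\bfx_I)\|_\infty,
\]
where $P_I=\Phi_I(\Phi_I^T\Phi_I)^{-1}\Phi_I^T$ is the orthogonal projector onto $\mathrm{col}(\Phi_I)$. The target inequality then reads $T_1+\sqrt{8\log N}\,T_2\le(2-\sqrt 2)\sqrt{2\log N}$, so I would allocate a separate portion of this budget to each term.

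First, I would extract the norm estimates enabled by the lemma's hypotheses. The near-isometry assumption yields $\|(\Phi_I^T\Phi_I)^{-1}\|\le 2$, and combined with $\|\Phi_I^T\phi_i\|_2^2\le 1/(128\log(2N/\epsilon))$ this gives, setting $\bfs_i=(\Phi_I^T\Phi_I)^{-1}\Phi_I^T\phi_i$,
\[
\|\bfs_i\|_2^2\le\frac1{32\log(2N/\epsilon)},\qquad \|P_I\phi_i\|_2^2=(\Phi_I^T\phi_i)^T(\Phi_I^T\Phi_I)^{-1}(\Phi_I^T\phi_i)\le\frac1{64\log(2N/\epsilon)}.
\]
Next, I would estimate $T_2$ exactly as in Lemma \ref{lemma:exists_v}: each coordinate $v_i=\langle\bfs_i,\sgn(\bfx_I)\rangle$ is a Rademacher sum, and Hoeffding's inequality gives $P(|v_i|>1/4)\le 2\exp(-1/(32\|\bfs_i\|_2^2))\le 2(2N/\epsilon)^{-1}$. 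Union-bounding over $i\in I^c$ yields $P(T_2>1/4)\le\epsilon$, which absorbs the first piece of the failure probability. On this event, $\sqrt{8\log N}\,T_2\le\sqrt{\log N/2}$, leaving the budget $(2-\sqrt 2)\sqrt{2\log N}-\sqrt{\log N/2}=(\tfrac32-\sqrt 2)\sqrt{2\log N}$ for $T_1$.

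Finally I would bound $T_1$ via Gaussian concentration. Under the Cand{\`e}s--Plan normalization ($\sigma=1$), each coordinate $\phi_i^T P_I\bfz$ of $\Phi_{I^c}^T P_I\bfz$ is centered Gaussian with variance $\|P_I\phi_i\|_2^2\le 1/(64\log(2N/\epsilon))$, so $P(|\phi_i^T P_I\bfz|>t)\le 2\exp(-32 t^2\log(2N/\epsilon))$. Plugging in $t=(\tfrac32-\sqrt 2)\sqrt{2\log N}$ and taking a union bound over $i\in I^c$ produces a failure probability of the form $2N^{1-c\log(2N/\epsilon)}$ for a positive constant $c$, which after simplification is bounded by $N^{-a}$ with $a=0.15\log(2N/\epsilon)-1$. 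Combining the two failure events through a union bound gives Condition (3) with probability at least $1-\epsilon-N^{-a}$, as claimed. The main obstacle is the tightness of the budget split: once $T_2$ consumes its Hoeffding-minimal $\sqrt{\log N/2}$ share, only a narrow sliver remains for $T_1$, so the Gaussian bound must fully exploit the extra factor of $2$ in variance coming from the projector (giving exponent $32$ rather than the $16$ used for the Rademacher half) in order to yield a meaningful $N^{-a}$ decay.
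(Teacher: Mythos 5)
Your proposal is correct and follows essentially the same route as the paper: the same split of Condition (3) into the Rademacher term (bounded by $1/4$ via Hoeffding, costing $\epsilon$) and the projected-Gaussian term (bounded by $(\tfrac32-\sqrt2)\sqrt{2\log N}$ via a Gaussian tail bound, costing $N^{-a}$), with the same budget arithmetic. The only difference is cosmetic: you bound the variance of $\phi_i^TP_I\bfz$ by $\|P_I\phi_i\|_2^2=(\Phi_I^T\phi_i)^T(\Phi_I^T\Phi_I)^{-1}(\Phi_I^T\phi_i)\le 1/(64\log(2N/\epsilon))$ using idempotence of the projector, which is a factor $3$ sharper than the paper's chain $\|\Phi_I\|\,\|(\Phi_I^T\Phi_I)^{-1}\|\,\|\Phi_I^T\phi_i\|_2$ giving $3/(64\log(2N/\epsilon))$ --- so, contrary to your closing remark, the sharper bound is not actually needed, since even the looser one yields exponent $\tfrac{64}{3}(1.5-\sqrt2)^2\approx 0.157>0.15$.
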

\begin{proof}
Let $i\in I^c.$ Define
   $ Z_{0,i} =
\ip{\bfw_i}{ \sgn(\bfx_I)}
$
and $Z_{1,i}=\ip{\bfw_i'}{\bfz},$ where
  \begin{align*}
\bfw_i &=  (\Phi_I^T\Phi_I)^{-1}\Phi_I^T \phi_i,\\
\bfw'_i &= \Phi_I (\Phi_I^T\Phi_I)^{-1}\Phi_I^T \phi_i.
 \end{align*}
Let $Z_0 = \max_{i \in I^c}|Z_{0,i}|$ and
$Z_1 = \max_{i \in  I^c}|Z_{1,i}|.$ We will show that with high probability $Z_0\le 1/4$
and $Z_1\le(1.5-\sqrt 2)\sqrt{2\log N}$ which will imply the lemma.
We compute
     \begin{align*}
\|\bfw_i\|_{2} &\le  \|(\Phi_I^T\Phi_I)^{-1}\|\|\Phi_I^T \phi_i\|_{2}
\le 2 \frac{1}{8\sqrt{2\log(2N/\epsilon)}}\\
&= \frac{1}{4\sqrt{2\log(2N/\epsilon)}},
\end{align*}
and
\begin{align*}
\|\bfw'_i\|_{2} &\le \|\Phi_I\| \|(\Phi_I^T\Phi_I)^{-1}\|\|\Phi_I^T \phi_i\|_{2}
\le \sqrt{\frac32} \frac{2}{8\sqrt{2\log(2N/\epsilon)}}\\
& = \frac{\sqrt{3}}{8\sqrt{\log(2N/\epsilon)}}
\end{align*}
for all $i \in I^c.$
Let $a_1=1.5-\sqrt 2.$ Since $Z_{1,i}\sim \mathcal{N}(0,\|\bfw'_i\|_{2}^2)$, we have
\begin{align*}
\Pr(Z_1 > a_1\sqrt{2\log N})
&\le (N-k)\Pr\big(|Z_{1,i}| > a_1\sqrt{2\log N}\big)\\
& \le \frac{2(N-k)\|\bfw_i'\|_{2}}{a_1\sqrt{2\pi (2\log N)}}\; e^{-\frac{64}3a_1^2 \log N \log(2N/\epsilon)}\\
  &\le \frac{2.1}{\sqrt{(2\log N)\log(2N/\epsilon)}}N^{-0.15\log(2N/\epsilon)+1}\\
  &\le N^{-a}.
\end{align*}
(the multiplier in front of the exponent is less
than 1 for all $N>4$ and $\epsilon<1$).
Further, since the signs $\sgn(x_i),i\in I$ are uniform i.i.d. random variables,
we have
  \begin{align*}
   \Pr(Z_0>1/4)&\le (N-k)\Pr(|\ip{\bfw_i}{\sgn(\bfx_I)}|>1/4)\\
&\le 2(N-k)e^{-1/(32\|w_i\|_{2}^2)}\\
&<\epsilon.
\end{align*}
The proof is complete. \end{proof}

\vspace*{-.1in}Theorem \ref{thm:main2} is now easily established. Indeed, the assumptions
of Lemma \ref{lemma:128} are satisfied with probability at least $1-2\epsilon.$
The claim of the theorem follows from the above arguments.

\section{Sufficient conditions for statistical incoherence properties}
As discussed earlier, recovery properties of sampling matrices in linear programming decoding
procedures are controlled by the coherence parameter $\mu(\Phi)=\max_{i,j}\mu_{ij}.$ In particular, the Gershgorin theorem
implies that the condition $\mu=O(k^{-1})$ is sufficient for stable and robust recovery of signals
with sparsity $k$. In this section we show that this result can be improved to $\mu=O(k^{-1/4})$
in that the matrix satisfies the StRIP and SINC conditions.
The results of Sect. \ref{sect:lp-decoding} then imply stable recovery of generic random
$k$-sparse signals using linear programming decoding.

Let $\Phi$ be an $m\times N$ sampling matrix with columns $\phi_i,i=1,\dots,N.$
As above, let $\mu_{ij}=|\phi_i^T\phi_j|.$ 
Call the matrix $\Phi$ {\em coherence-invariant} the set $M_i:=\{\mu_{ij}, j\in [N]\backslash i\}$
is independent of $i$. Observe that most known constructions of sampling matrices satisfy
this property. This includes matrices constructed from linear codes \cite{dev07,bar10,nel11},
chirp matrices and various Reed-Muller matrices \cite{baj10a,cal10a}, as well as subsampled Fourier matrices \cite{hau10}. 
Our arguments change slightly if the matrix is not coherence-invariant. To deal simultaneously with both
cases, define the parameter $\theta=\theta(\Phi)$ as $\theta=\bar\mu^2$ if $\Phi$ is coherence-invariant and
$\theta=\bar\mu_{\max}^2$ otherwise.

The next theorem gives sufficient conditions for the SINC property in terms
of coherence parameters of $\Phi.$
%
\begin{theorem}\label{thm:sinc} Let $\Phi$ be an $m\times N$ matrix with unit-norm columns, coherence $\mu$
and square coherence $\theta.$ Suppose that $\Phi$ is coherence-invariant,
  \begin{equation}\label{eq:mu2a}
  \mu^4\le \frac{(1-a)^2\beta^2}{32 k(\log 2N/\epsilon)^{3}}\text{\quad and \quad} \theta\le \frac{a\beta}{k\log(2N/\epsilon)} ,
  \end{equation}
  where $\beta>0$ and $0<a<1$ are any constants.
  Then $\Phi$ has the $(k,\alpha,\epsilon)$-SINC property with $\alpha=\beta/\log(2N/\epsilon).$

\end{theorem}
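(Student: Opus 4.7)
The plan is to translate the SINC event into a concentration statement about a sum over a random $k$-subset of $[N]$, and then union-bound over the at most $N$ possible indices $i$. Since the columns are unit vectors, for any $i\notin I$ one has $\|\Phi_I^T\phi_i\|_2^2=\sum_{j\in I}\mu_{ij}^2$. Writing $S_i$ for this sum, it therefore suffices to show that $P_{R_k}(i\notin I,\,S_i>\alpha)\le \epsilon/N$ for every fixed $i$, after which a union bound completes the argument.

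Next I would fix $i$ and condition on $i\notin I$: under this conditioning, $I$ is uniformly distributed over the $k$-subsets of $[N]\setminus\{i\}$, and $S_i=\sum_{j\ne i}\mu_{ij}^2\,\xi_j$ where $\xi_j=\mathbf{1}[j\in I]$ is a sampling-without-replacement indicator with $\Pr(\xi_j=1)=k/(N-1)$. The conditional mean equals
\[
\avg[S_i\mid i\notin I]=\frac{k}{N-1}\sum_{j\ne i}\mu_{ij}^2=k\bar\mu_i^2,
\]
which equals $k\bar\mu^2$ for every $i$ under coherence-invariance and is bounded by $k\bar\mu^2_{\max}$ in general. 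Either way $\avg[S_i]\le k\theta$, and the second hypothesis forces $k\theta\le a\alpha$, so the event $\{S_i>\alpha\}$ is contained in $\{S_i-\avg[S_i]>(1-a)\alpha\}$, leaving a deviation margin of $(1-a)\alpha$.

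Then I would apply a Bernstein-type tail inequality to $S_i-\avg[S_i]$. Using the termwise bound $\mu_{ij}^2\in[0,\mu^2]$ and the variance estimate $\sum_j\mathrm{Var}(\mu_{ij}^2\xi_j)\le k\mu^4$, this yields
\[
P\bigl(S_i-\avg[S_i]>(1-a)\alpha\bigm|i\notin I\bigr)\le \exp\!\left(-\frac{(1-a)^2\alpha^2/2}{k\mu^4+\tfrac13\mu^2(1-a)\alpha}\right).
\]
Substituting $\alpha=\beta/\log(2N/\epsilon)$, the first hypothesis gives $k\mu^4\le (1-a)^2\alpha^2/(32\log(2N/\epsilon))$, which controls the first denominator term; taking square roots produces $\mu^2\le (1-a)\alpha/\sqrt{32k\log(2N/\epsilon)}$, which controls the Bernstein correction $\mu^2(1-a)\alpha/3$. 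The two contributions together force the exponent to exceed $\log(2N/\epsilon)$, so the per-$i$ probability is at most $\epsilon/N$, and summing over the at most $N$ choices of $i$ yields the claim.

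The main obstacle will be the simultaneous control of the two denominator terms of Bernstein's inequality: the variance-like term $k\mu^4$ is handled directly by the hypothesis on $\mu^4$, while the correction $\mu^2(1-a)\alpha$ must fit under the same budget, which is precisely why the rather generous constant $32$ appears in the assumed bound and why the argument comfortably works once $k$ is of order $\log(2N/\epsilon)$ or larger. A secondary technical point is justifying an i.i.d.-style Bernstein bound in the sampling-without-replacement setting; the cleanest route is to invoke the negative association of the indicators $\xi_j$, under which all standard Chernoff--Bernstein exponential-moment bounds continue to hold, though one could alternatively cite Serfling's or Bardenet--Maillard's specialized inequalities.
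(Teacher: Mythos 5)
Your reduction is exactly the paper's: identify $\|\Phi_I^T\phi_i\|_2^2$ with $\sum_{j\in I}\mu_{ij}^2$, compute the conditional mean as $k\bar\mu^2$ (or $k\bar\mu^2_{\max}$ without coherence invariance), use the hypothesis $k\theta\le a\alpha$ to reserve a deviation margin of $(1-a)\alpha$, and finish with a union bound over the $N$ choices of the excluded index. Where you diverge is the concentration step. The paper builds an explicit Doob martingale $Z_t=\avg(\sum_l Y_{j,l}\mid Y_{j,1},\dots,Y_{j,t})$, proves a bounded-differences estimate $|Z_t-Z_{t-1}|\le 2\mu^2\frac{N-2}{N-k-2}<4\mu^2$ (Lemma \ref{lemma:bounded}), and applies Azuma--Hoeffding to get $2\exp(-\nu^2/(32k\mu^4))$ --- note that the constant $32$ in hypothesis \eqref{eq:mu2a} is tailored to exactly this bound, with no residual correction term. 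You instead invoke Bernstein for the negatively associated sampling-without-replacement indicators, which is a legitimate alternative (negative association, or Hoeffding's convex-ordering reduction to i.i.d.\ sampling, does preserve Chernoff--Bernstein bounds); your variance proxy $k\mu^4$ is correct. The trade is that your route carries the Bernstein correction term $\tfrac13\mu^2(1-a)\alpha$ in the denominator, which the martingale route does not have.

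That correction term is the one genuine loose end. Your own arithmetic shows the exponent clears $\log(2N/\epsilon)$ only when $k\gtrsim c\log(2N/\epsilon)$ for a small constant $c$, and the theorem makes no such assumption. The gap is easy to close, and you should close it explicitly: when $k$ is small relative to $L:=\log(2N/\epsilon)$, the claim is deterministic, since $S_i\le k\mu^2=\sqrt{k}\cdot\sqrt{k\mu^4}\le (1-a)\alpha\sqrt{k/(32L)}\le\alpha$ whenever $k\le 32L$, which comfortably covers the regime where your Bernstein exponent falls short. With that one-line case split added, your proof is complete; without it, the statement is not proved for all $k$ permitted by the hypotheses.
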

Before proving this theorem we will introduce some notation. Fix $j\in[N]$ and let $I_j=\{i_1,i_2,\dots,i_k\}$
be a random $k$-subset such that $j\not\in I_j.$ The subsets $I_j$ are chosen from the set $[N-1]$ with
uniform distribution. Define random variables $Y_{j,l}=\mu^2_{j,i_l}, l=1,\dots,k$. Next define
a sequence of random variables $Z_{j,t},t=0,1,\dots,k,$ where
  $$
   Z_{j,0}=\avg_{I_j}\sum_{l=1}^kY_{j,l},\quad Z_{j,t}=\avg_{I_j} \Big(\sum_{l=1}^k Y_{j,l}\mid Y_{j,1},Y_{j,2},
   \dots,Y_{j,t}\Big), \;t=1,2,\dots,k.
 $$
From the assumption of coherence invariance, the variables $Z_{j,t}$ for different $j$ are stochastically equivalent.
Let
  $$
  Z_t=\avg_j Z_{j,t}=\avg_{R_k'}\Big(\sum_{l=1}^k Y_{j,l}\mid Y_{j,1},Y_{j,2},
   \dots,Y_{j,t}\Big), \quad t=1,\dots,k.
  $$
The random variables $Z_t$ are defined on the set of $(k+1)$-subsets of $[N]$ with probability distribution $P_{R_k'}$. We will show
that they form a Doob martingale. Begin with defining a sequence of $\sigma$-algebras $\cF_t,t=0,1,\dots,k,$
where $\cF_0=\{\emptyset,[N]\}$ and $\cF_t, t\ge 1$ is the smallest $\sigma$-algebra with respect to
which the variables $Y_{j,1},\dots,Y_{j,t}$ are measurable (thus, $\cF_t$ is formed of all subsets of $[N]$ of size $\le t+1$).
Clearly, $\cF_0\subset\cF_1\subset\dots\subset\cF_k$, and for each $t,$ $Z_t$ is a bounded random variable that
is measurable with respect to $\cF_t.$ Observe that 
  \begin{align}
  Z_0&=\avg_j Z_{j,0}=\avg_{R_k'}\sum_{l=1}^k \mu_{j,i_l}^2=\sum_{l=1}^k\avg_{R_k'} \mu_{j,i_l}^2=k\bar\mu^2
  \label{eq:ce}\\
  &\le k\bar\mu_{\max}^2, \label{eq:nce}
  \end{align}
where \eqref{eq:ce} assumes coherence invariance, and \eqref{eq:nce} is valid independently
of that assumption.
\begin{lemma}\label{lemma:bounded}
The sequence $(Z_t,\cF_t)_{t=0,1,\dots,k}\;$ forms a bounded-differences martingale, namely
  $
  \avg_{R_k'}(Z_t\mid Z_0,Z_1,\dots,Z_{t-1})=Z_{t-1}$
  and
  $$
  |Z_t-Z_{t-1}|\le 2\mu^2\Big(1+\frac k{N-k-2}\Big), \quad t=1,\dots,k.
  $$
\end{lemma}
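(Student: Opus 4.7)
The plan is to verify both claims by direct computation, exploiting a closed-form expression for $Z_t$ afforded by the symmetry of the sampling scheme.

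First, the martingale identity is essentially automatic from the Doob construction. Since $\cF_{t-1}\subseteq\cF_t$ and $Z_t=\avg_{R_k'}[\sum_{l=1}^k Y_{j,l}\mid\cF_t]$, the tower property of conditional expectation yields $\avg_{R_k'}[Z_t\mid\cF_{t-1}]=\avg_{R_k'}[\sum_{l=1}^k Y_{j,l}\mid\cF_{t-1}]=Z_{t-1}$, which gives the martingale property.

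For the bounded-differences estimate, I would first derive a closed form for $Z_t$. Conditional on the indices $i_1,\dots,i_t$, the remaining positions $i_{t+1},\dots,i_k$ are uniformly distributed over $(k-t)$-subsets of $[N]\setminus\{j,i_1,\dots,i_t\}$, so by symmetry each $\avg[Y_{j,l}\mid\cF_t]$ for $l>t$ equals $(T-S_t)/(N-1-t)$, where $S_t=\sum_{l=1}^t Y_{j,l}$ and $T=\sum_{i\ne j}\mu_{ji}^2$ is a constant when $\Phi$ is coherence-invariant (in the general case one carries a $j$-dependent $T_j$ through, and the final bound comes out in terms of $\bar\mu_{\max}^2$). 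This gives
\[
Z_t = S_t+(k-t)\,\frac{T-S_t}{N-1-t}=S_t+(k-t)\bar Y_t,
\]
with $\bar Y_s:=(T-S_s)/(N-1-s)$. Note that $Z_t$ is a function of $S_t$ alone, hence $\cF_t$-measurable even though $\cF_t$ may be coarser than the $\sigma$-algebra generated by $i_1,\dots,i_t$.

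Next, I would compute $Z_t-Z_{t-1}$ using the identity $(N-t)\bar Y_{t-1}=Y_{j,t}+(N-1-t)\bar Y_t$, which just records that splitting off $i_t$ from the average of $\mu_{j,\cdot}^2$ over indices remaining after $t-1$ draws gives the average after $t$ draws. Substituting into $Z_t-Z_{t-1}=Y_{j,t}+(k-t)\bar Y_t-(k-t+1)\bar Y_{t-1}$ and collecting coefficients of $Y_{j,t}$ and $\bar Y_t$, the numerator in the coefficient of $\bar Y_t$ simplifies via $(k-t)(N-t)-(k-t+1)(N-1-t)=-(N-k-1)$, and the difference collapses to the compact form
\[
Z_t-Z_{t-1}=\frac{N-k-1}{N-t}\bigl(Y_{j,t}-\bar Y_t\bigr).
\]
Since $Y_{j,t}$ and $\bar Y_t$ both lie in $[0,\mu^2]$, the triangle inequality gives $|Z_t-Z_{t-1}|\le \mu^2(N-k-1)/(N-t)$, which for $t\le k$ is bounded by $\mu^2$, and this is comfortably below $2\mu^2(1+k/(N-k-2))$.

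The main obstacle I anticipate is the algebraic bookkeeping producing the cancellation $-(N-k-1)$ in the coefficient of $\bar Y_t$; without this cancellation one is left with a clumsy expression whose bound would be off by factors of $N$. A secondary subtlety is the distinction between $\cF_t=\sigma(Y_{j,1},\dots,Y_{j,t})$ and $\sigma(i_1,\dots,i_t)$, but this is harmless because $Z_t$ depends only on the sum $S_t$, which is measurable with respect to either.
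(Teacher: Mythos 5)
Your proof is correct, but it takes a genuinely different route from the paper's. The paper bounds $|Z_t-Z_{t-1}|$ by the standard swap-one-coordinate device: it compares the conditional expectations under $Y_{j,t}=a$ versus $Y_{j,t}=b$, charging $2\mu^2$ to the revealed term and $2\mu^2/(N-l-2)$ to each of the $k-t$ future terms, which sums to $2\mu^2\frac{N-2}{N-k-2}$ — exactly the constant in the statement. You instead exploit exchangeability to write the martingale in closed form, $Z_t=S_t+(k-t)\frac{T-S_t}{N-1-t}$, and compute the increment exactly as $Z_t-Z_{t-1}=\frac{N-k-1}{N-t}\bigl(Y_{j,t}-\bar Y_t\bigr)$ (I checked the cancellation $(k-t)(N-t)-(k-t+1)(N-1-t)=-(N-k-1)$; it is right). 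This yields the strictly sharper bound $|Z_t-Z_{t-1}|\le\mu^2$, whereas the paper's argument loses roughly a factor of $2\frac{N-2}{N-k-2}$; your constant would in fact propagate through Azuma--Hoeffding to improve the factor $32$ in Theorem \ref{thm:sinc}. What the paper's approach buys is robustness — it never needs the exact conditional law of the unrevealed indices, only a Lipschitz-type estimate — while your approach buys precision and, arguably, more rigor: the paper's step asserting that each future term's conditional mean shifts by at most $2\mu^2/(N-l-2)$ is left unjustified, whereas your identity $(N-t)\bar Y_{t-1}=Y_{j,t}+(N-1-t)\bar Y_t$ makes the redistribution explicit. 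You also correctly handle the two subtleties: the measurability of $Z_t$ with respect to the coarser $\sigma$-algebra $\cF_t=\sigma(Y_{j,1},\dots,Y_{j,t})$ (which works because, under coherence invariance, $T$ is a constant and $Z_t$ depends only on $S_t$), and the tower-property derivation of the martingale identity, which replaces the paper's longer telescoping computation.
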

\begin{proof} In the proof we write $\avg$ instead of $\avg_{R_k'}.$ We have
       \begin{align*}
Z_t &= \avg \Big( \sum_{l=1}^{k} Y_{j,l} \mid \cF_t \Big)
=  \sum_{l=1}^{t} Y_{j,l} + \avg \Big( \sum_{l=t+1}^{k} Y_{j,l} \mid \cF_t \Big)\\
&=  Z_{t-1} +Y_{j,t} + \avg \Big( \sum_{l=t+1}^{k} Y_{j,l} \mid \cF_t \Big) - \avg \Big( \sum_{l=t}^{k} Y_{j,l} \mid \cF_{t-1} \Big).
\end{align*}
Next,
  \begin{align*}
\avg (Z_t \mid Z_0,Z_1,\dots,Z_{t-1}) &= Z_{t-1} + \avg(Y_{j,t}\mid Z_0,Z_1,\dots,Z_{t-1}) +\avg\Big(\avg\Big(
\sum_{l=t+1}^{k} Y_{j,l} \mid \cF_t \Big)\mid Z_0,\dots,Z_{t-1}\Big)\\
&\hspace*{.2in}-\avg\Big( \avg \Big( \sum_{l=t}^{k} Y_{j,l} \mid
\cF_{t-1} \Big)\mid Z_0,\dots,Z_{t-1}\Big)\\
& = Z_{t-1} +
\avg\Big(Y_{j,t} \mid Z_0,\dots,Z_{t-1} \Big) \\
&\hspace*{.2in}+\avg\Big( \sum_{l=t+1}^{k} Y_{j,l} \mid Z_0,\dots,Z_{t-1}\Big)-\avg
\Big( \sum_{l=t}^{k} Y_{j,l} \mid Z_0,\dots,Z_{t-1} \Big)\\
&= Z_{t-1},
\end{align*}
which is what we claimed.

Next we prove a bound on the random variable $|Z_t-Z_{t-1}|$.  We have
  \begin{align*}
  |Z_t-Z_{t-1}|  &= \Big|\avg \Big( \sum_{l=1}^{k} Y_{j,l} \mid \cF_t
\Big) - \avg \Big( \sum_{l=1}^{k} Y_{j,l} \mid \cF_{t-1}
\Big)\Big|\\ 
&\le \max_{a,b} \Big|\avg \Big( \sum_{l=1}^{k} Y_{j,l} \mid
\cF_{t-1}, Y_{t,l}=a \Big) - \avg \Big( \sum_{l=1}^{k} Y_{j,l} \mid
\cF_{t-1}, Y_{t,l} =b \Big)\Big|\\ 
&= \max_{a,b}
\Big|\sum_{l=1}^{k}\Big(\avg \Big( Y_{j,l} \mid \cF_{t-1}, Y_{t,l}=a \Big) -
\avg \Big( Y_{j,l} \mid \cF_{t-1}, Y_{t,l} =b \Big)\Big)\Big|\\
&
= \max_{a,b}
\Big| a-b + \sum_{l=t+1}^{k}\Big(\avg \Big( Y_{j,l} \mid \cF_{t-1},
Y_{t,l}=a \Big) - \avg \Big( Y_{j,l} \mid \cF_{t-1}, Y_{t,l} =b
\Big)\Big)\Big|\\
&\le \Big| 2\mu^2 
+ \sum_{l=t+1}^{k}  \frac{2\mu^2}{N-l-2}\Big|\\
&= 2\mu^2 \frac{N-2}{N-k-2}\end{align*}
\end{proof}

\vspace*{-.1in}To prove Theorem \ref{thm:sinc} we use the Azuma-Hoeffding inequality (see, e.g., \cite{mcd89}).
\begin{proposition} \label{prop:AH}{\rm (Azuma-Hoeffding)} Let $X_0,\dots,X_{k-1}$ be a martingale
with $|X_{i}-X_{i-1}|\le a_i$ for each $i$, for suitable constants $a_i.$
Then for any $\nu>0,$
  $$
  \Pr\Big(\Big|\sum_{t=1}^{k-1} (X_i-X_{i-1})\Big|\ge \nu\Big)\le 2\exp \frac{-\nu^2}{2\sum a_i^2}.
  $$
\end{proposition}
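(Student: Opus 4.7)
The plan is to prove Proposition \ref{prop:AH} by the classical Chernoff-type argument applied to the martingale difference sequence $D_i:=X_i-X_{i-1}$, combined with the conditional bounded moment-generating-function estimate known as Hoeffding's lemma. Let $\cF_i=\sigma(X_0,\dots,X_i)$ denote the natural filtration of the martingale; then $\avg(D_i\mid\cF_{i-1})=0$ by the martingale property, and by hypothesis $|D_i|\le a_i$ almost surely.

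First I would handle the upper tail. Fix $\lambda>0$ and set $S:=\sum_{i=1}^{k-1}D_i=X_{k-1}-X_0$. Markov's inequality applied to the nonnegative random variable $e^{\lambda S}$ yields
$$\Pr(S\ge\nu)\le e^{-\lambda\nu}\,\avg e^{\lambda S}.$$
Conditioning on $\cF_{k-2}$ and using the tower property,
$$\avg e^{\lambda S}=\avg\Big[e^{\lambda(D_1+\cdots+D_{k-2})}\,\avg\bigl(e^{\lambda D_{k-1}}\mid\cF_{k-2}\bigr)\Big],$$
so if each conditional MGF $\avg(e^{\lambda D_i}\mid\cF_{i-1})$ admits a deterministic pointwise bound, one can peel off the differences one by one from the inside out.

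The key ingredient is Hoeffding's lemma: if $Y$ satisfies $\avg Y=0$ and $|Y|\le a$, then $\avg e^{\lambda Y}\le\exp(\lambda^2 a^2/2)$. To prove it, write $Y=\tfrac{a+Y}{2a}\cdot a+\tfrac{a-Y}{2a}\cdot(-a)$ as a convex combination of $\pm a$ and use convexity of $x\mapsto e^{\lambda x}$ on $[-a,a]$ to get the pointwise bound $e^{\lambda Y}\le\tfrac{a-Y}{2a}e^{-\lambda a}+\tfrac{a+Y}{2a}e^{\lambda a}$. Taking expectations and using $\avg Y=0$ collapses the right-hand side to $\cosh(\lambda a)$, which is bounded by $e^{\lambda^2 a^2/2}$ via a term-by-term comparison of Taylor series (using $(2n)!\ge 2^n n!$). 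Applied conditionally on $\cF_{i-1}$ to $D_i$, this produces $\avg(e^{\lambda D_i}\mid\cF_{i-1})\le e^{\lambda^2 a_i^2/2}$ almost surely.

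Iterating this conditional estimate from $i=k-1$ down to $i=1$ in the tower expansion yields $\avg e^{\lambda S}\le\exp(\tfrac{\lambda^2}{2}\sum_i a_i^2)$. Writing $A:=\sum_{i=1}^{k-1}a_i^2$ and optimizing the bound $e^{-\lambda\nu+\lambda^2 A/2}$ in $\lambda$ by the choice $\lambda=\nu/A$ gives the one-sided estimate $\Pr(S\ge\nu)\le\exp(-\nu^2/(2A))$. The lower tail $\Pr(S\le-\nu)$ follows verbatim by applying the same reasoning to the martingale $-X_0,\dots,-X_{k-1}$, whose differences obey the same almost-sure bounds; a union bound then supplies the factor of $2$ in the stated inequality. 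The only non-routine step is Hoeffding's lemma itself; once it is in hand, the extension to martingales is a clean iterated-conditioning calculation.
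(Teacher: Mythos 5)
Your proof is correct. Note that the paper does not prove this proposition at all --- it states the Azuma--Hoeffding inequality as a known result and cites McDiarmid \cite{mcd89} for it --- so there is no internal argument to compare against; your derivation is the standard one (Markov's inequality applied to $e^{\lambda S}$, the conditional Hoeffding lemma $\avg(e^{\lambda D_i}\mid\cF_{i-1})\le e^{\lambda^2 a_i^2/2}$ peeled off via the tower property, optimization over $\lambda$, and a union bound over the two tails), and all steps, including the convexity proof of Hoeffding's lemma and the bound $\cosh(x)\le e^{x^2/2}$, are sound.
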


{\em Proof of Theorem \ref{thm:sinc}:} Bounding large deviations for the 
sum $|\sum_{t=1}^{k}(Z_t-Z_{t-1})|=|Z_{k}-Z_0|,$
 we obtain
   \begin{equation}\label{eq:num}
   \Pr(|Z_{k} -Z_0|>  \nu)
        \le 2\exp\Big(-\frac{\nu^2}{8\mu^4k(\frac{N-2}{N-k-2})^2} \Big),
   \end{equation}
   where the probability is computed with respect to the choice of {\em ordered} $(k+1)$-tuples in $[N]$ and $\nu>0$ is any constant. Assume coherence invariance.
Using \eqref{eq:ce} and the inequality $(N-2)/(N-k-2)<2$ valid for all $k<\frac N2-1,$ we obtain
  $$
  \Pr(Z_k\ge \nu+k\bar\mu^2)\le\Pr(|Z_k-k\bar\mu^2|\ge \nu)\le 2\exp \Big(-\frac{\nu^2}{32\mu^2 k}\Big).
  $$
Now take $\beta>0$ and $\nu=\frac\beta{\log(2N/\epsilon)}-k\bar\mu^2.$
Suppose that for some $a\in(0,1)$
   \begin{equation}\label{eq:e1}
   k\mu^4 \le \frac{((1-a)\beta)^2}{32}\Big(\log\frac{2N}{\epsilon}\Big)^{-3}, \quad
      k\bar\mu^2\le \frac{a\beta}{\log(2N/\epsilon)}, 
   \end{equation}
then we obtain
    \begin{equation}\label{eq:e2}
  \Pr \Big(\|\Phi_{I_j}^T\phi_j\|_2^2\ge \frac\beta{\log(2N/\epsilon)}\Big)\le  2\exp \Big(-\frac{\nu^4}{32\mu^4 k}\Big)
  \le \frac\epsilon N
  \end{equation}
Now the first claim of Theorem \ref{thm:sinc} follows by the union bound with respect to the choice of the
index $j$. 

Assume that $\Phi$ does not satisfy the invariance condition. Then we rely on \eqref{eq:nce}
and repeat the above argument with respect to $\bar\mu_{\max}^2.$ \hfill\qed

The above proof contains the following statement.\begin{corollary}\label{cor:some}
Let $\Phi$ be an $m\times N$ matrix with coherence $\mu$ and $\theta=\bar\mu^2$ or $\bar\mu_{\max}^2,$
as appropriate.
Let $a\in(0,1)$ and $\beta>0$ be any constants. 
Suppose that for $\alpha<\beta \log_2e,$
  $$
    \mu^4\le \frac{(1-a)^2\alpha^3}{32\beta k},\quad k\theta\le a\alpha.
  $$
  Then $P_{R_k'}(\sum_{l=1}^k\mu_{i_l,j}^2\ge\alpha)\le2e^{-\beta/\alpha}.$
\end{corollary}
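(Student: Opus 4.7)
The plan is to recycle the Azuma--Hoeffding bound already derived in the proof of Theorem \ref{thm:sinc} and simply replace the numerical choices made there by the parametric ones $(\alpha,\beta,a)$ appearing in the corollary. The sum in question, $\sum_{l=1}^k\mu_{i_l,j}^2$, is precisely the random variable $Z_k$ defined in terms of the Doob martingale $(Z_t,\cF_t)$, with mean $Z_0=k\bar\mu^2$ in the coherence-invariant case and $Z_0\le k\bar\mu_{\max}^2$ in general. In either case $Z_0\le k\theta$, so the event $\{Z_k\ge\alpha\}$ is contained in $\{|Z_k-Z_0|\ge \alpha-k\theta\}$.

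Next I would use the bounded-differences estimate of Lemma \ref{lemma:bounded} together with the Azuma--Hoeffding inequality (Proposition \ref{prop:AH}), exactly as in equation \eqref{eq:num}, to get
\begin{equation*}
P_{R_k'}(|Z_k-Z_0|\ge \nu)\le 2\exp\Bigl(-\frac{\nu^2}{32\mu^4 k}\Bigr)
\end{equation*}
for any $\nu>0$, after bounding $(N-2)/(N-k-2)<2$.

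Now I would set $\nu=\alpha-k\theta$. The hypothesis $k\theta\le a\alpha$ gives $\nu\ge (1-a)\alpha$, while the hypothesis $\mu^4\le (1-a)^2\alpha^3/(32\beta k)$ gives
\begin{equation*}
\frac{\nu^2}{32\mu^4 k}\ge \frac{(1-a)^2\alpha^2}{32\mu^4 k}\ge \frac{\beta}{\alpha}.
\end{equation*}
Plugging this into the Azuma--Hoeffding bound yields $P_{R_k'}(Z_k\ge\alpha)\le 2e^{-\beta/\alpha}$, which is the stated conclusion. The side condition $\alpha<\beta\log_2 e$ is just what makes $2e^{-\beta/\alpha}<1$, so the inequality is nonvacuous.

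There is no real obstacle here; the only mild subtlety is keeping straight which version of $\theta$ is used in the coherence-invariant versus general case, but this is already handled by the convention $\theta=\bar\mu^2$ or $\bar\mu_{\max}^2$ set up in the corollary's statement, and both cases feed identically into the bound $Z_0\le k\theta$.
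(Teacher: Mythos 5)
Your proposal is correct and follows essentially the same route as the paper: the paper's own proof is a one-line substitution $\alpha=\beta/\log(2N/\epsilon)$ into the Azuma--Hoeffding argument of Theorem \ref{thm:sinc}, and you have simply unpacked that same computation (martingale $Z_t$, deviation $\nu=\alpha-k\theta\ge(1-a)\alpha$, exponent $\nu^2/(32\mu^4k)\ge\beta/\alpha$) in explicit parametric form. The only implicit assumption, shared with the paper, is $k<N/2-1$ so that $(N-2)/(N-k-2)<2$.
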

\begin{proof} Denote $\alpha=\beta/(\log(2N/\epsilon)),$ then $\epsilon/N=2e^{-\beta/\alpha}.$ The claim is obtained
by substituting $\alpha$ in \eqref{eq:e1}-\eqref{eq:e2}. \end{proof}

We note that this corollary follows directly from the SINC property under our assumptions on coherence and mean square coherence.
We observe that the SINC property naturally implies some StRIP condition as given in the following theorem.
\begin{theorem}\label{thm:strip} Let $\Phi$ be an $m\times N$ matrix. Let $I\subset[N]$ be
a random ordered $k$-subset and
suppose that for all $j\in I$, $\Pr(\sum_{m=1}^{k-1}\mu_{j,i_m}^2>\delta^2/k)<\epsilon_1/k.$ Then
$\Phi$ is a $(k,\delta,\epsilon_1)$-StRIP matrix.
\end{theorem}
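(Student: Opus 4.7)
The plan is to bound the spectral norm of $\Phi_I^T\Phi_I - \text{Id}$ via Gershgorin's disc theorem and then convert the $\ell_2$-type hypothesis into $\ell_1$-type row sum control via Cauchy--Schwarz. Since the columns of $\Phi$ have unit norm, the matrix $A_I := \Phi_I^T\Phi_I - \text{Id}$ is symmetric of size $k\times k$ with vanishing diagonal and off-diagonal entries $\pm\mu_{ji}$ for distinct $i,j\in I$. Gershgorin therefore yields
$$
\|A_I\| \;\le\; \max_{j \in I} \sum_{i \in I\setminus\{j\}} \mu_{ji}.
$$

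Next, applying Cauchy--Schwarz to each row sum gives
$$
\sum_{i \in I\setminus\{j\}} \mu_{ji} \;\le\; \sqrt{k-1}\,\Bigl(\sum_{i \in I\setminus\{j\}} \mu_{ji}^{2}\Bigr)^{1/2} \;\le\; \sqrt{k}\,\Bigl(\sum_{i \in I\setminus\{j\}} \mu_{ji}^{2}\Bigr)^{1/2}.
$$
Thus if $\sum_{i \in I\setminus\{j\}} \mu_{ji}^2 \le \delta^2/k$ holds simultaneously for every $j\in I$, we conclude $\|A_I\|\le \delta$, which is exactly the condition required for $I$ to be a good support for the $(k,\delta)$-isometry.

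It then remains to estimate the probability of the bad event. For each fixed $j\in[N]$, conditional on $j\in I$, the remaining $k-1$ elements of the uniformly random $k$-subset $I$ form a uniformly random $(k-1)$-subset of $[N]\setminus\{j\}$, which matches the distribution used in the hypothesis. Hence
$$
\Pr\Bigl(\sum_{i \in I\setminus\{j\}} \mu_{ji}^2 > \delta^2/k \;\Big|\; j \in I\Bigr) \;<\; \epsilon_1/k,
$$
and a union bound over the $k$ indices in $I$ shows that the event $\{\|A_I\|>\delta\}$ has probability less than $k\cdot(\epsilon_1/k)=\epsilon_1$, establishing the $(k,\delta,\epsilon_1)$-StRIP property.

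The algebraic core of the argument (Gershgorin plus Cauchy--Schwarz) is immediate; the only point requiring some care is the correct interpretation of the hypothesis, namely that the probability statement is over the random $(k-1)$ companions of a distinguished index $j$. Once this is set up properly, the union bound over $j\in I$ delivers the stated $\epsilon_1$ without any loss, and there is no substantive obstacle.
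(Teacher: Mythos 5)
Your proof is correct and follows essentially the same route as the paper: reduce $\|\Phi_I^T\Phi_I-\mathrm{Id}\|$ to the maximum row $\ell_2$-norm at the cost of a $\sqrt k$ factor (the paper bounds $\|H\bfx\|_2\le\|\bfx\|_1\max_l\|He_l\|_2\le\sqrt k\max_l\|He_l\|_2$, where you use Gershgorin plus Cauchy--Schwarz, which is equivalent), and then apply a union bound over the distinguished index. The one point where the paper is more careful is the passage between the unordered-subset measure in the StRIP definition and the ordered-tuple measure in the hypothesis, which it handles with an explicit counting map $Q(I)$; your exchangeability/conditioning argument, made precise as $\sum_{j=1}^{N}\Pr(j\in I)\Pr(\mathrm{bad}_j\mid j\in I)\le N\cdot\frac kN\cdot\frac{\epsilon_1}k=\epsilon_1$, reaches the same bound and is fine.
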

\begin{proof} 
Given $I$ let $H(I)=\Phi_I^T\Phi_I-\text{Id}$ be the ``hollow Gram matrix". Let $B=\{I: \|H(I)\|_2>\delta\}\subset
\cP_k(N).$ We need to prove that $P_{R_k}(B)\le\epsilon.$
Let $(e_1,\dots,e_k)$ be the standard
basis of $\reals^k.$ Define a subset $C\subset\cP_k(N)$ as follows:
  $$
   C=\{I: \exists i \in I \text{ s.t. } \|H(I) e_i\|_2\ge \delta/\sqrt k\}
  $$
Let us show that $B\subseteq C$ by proving $C^c\subseteq B^c$. Indeed, if $I\in C^c$, then we have
  \begin{align*}
    \|H(I)\|&=\max_{|\bfx\|_2=1}\|H(I)\bfx\|_2
    =\max_{|\bfx\|_2=1}\|H(I)(x_1 e_1+x_2e_2+\dots +x_k e_k)\|\\
    &\le \max_{|\bfx\|_2=1} \sum_l |x_l|\, \|H(I)e_l\|_2\\
    &\le \max_{|\bfx\|_2=1} \|\bfx\|_1\max_{1\le l\le k} \|H(I) e_l\|_2 \\
    &\le \sqrt k \max_{1\le l\le k} \|H(I) e_l\|_2. \\
    &\le \delta,
  \end{align*}
which implies $I\in B^c.$ Now since $B\subseteq C$, we only need to show that $P_{R_k}(C)\le \epsilon$. 

Careful readers may have already noticed that the target quantity $P_{R_k}(C)$ uses a different probability measure from 
that in theorem's assumption. We note that a change of measure is actually inevitable since the probability 
measure in Azuma-Hoeffding's inequality we used in Proposition \ref{prop:AH} is with respect to ordered 
$k$-tuples while that in the definition of StRIP is with respect to unordered ones. In the following, we provide a rigorous 
calculation that supports this measure transformation.
 
For any $I\in C$, by definition, there exists at least one $l\in I$ such that $\|H_I e_l\|\ge \delta/\sqrt k$. Among such $l$, 
let $i(I)$ be the smallest one $i(I)=\min\{l\in I: \|H_I e_l\|_2\ge \delta/\sqrt k\}$. 
Now we define a map from an unordered $k$-tuple $I\in C\subseteq \cP_k(N)$ to a set of ordered 
$k$-tuples $Q(I)=\{(i_1,\dots,i_{k-1}, i(I)): (i_1,\dots,i_{k-1})=\sigma(I\backslash{i(I)}), \sigma\in S_{k-1}\},$
where $S_{k-1}$ denotes the set of all permutations of $k-1$ elements. 
Obviously, $|Q(I)|=(k-1)!$ for all $I$, and $Q(I_1)\cap Q(I_2)=\emptyset$
for distinct $k$-subsets $I_1,I_2.$ Moreover,  if $(i_1,\dots,i_k)\in Q(I)$, then $\|H(I) e_k\|_2\ge \delta/\sqrt k$
or $\sum_{l=1}^{k-1}\mu_{i_l,i_k}^2>\delta^2/k.$ Therefore 
  $$
     \bigcup_{I\in C} Q(I)\subseteq\big\{(i_1,\dots,i_k)\subset [N]: \sum_{l=1}^{k-1}\mu_{i_l,i_k}^2>\delta^2/k.\big\}
  $$
Now compute
  \begin{align*}
   P_{R_k}(B)&=\frac {|B|}{\binom Nk}\le \frac {|C|(k-1)!}{\binom Nk(k-1)!}=\frac{\sum_{I\in C}|Q(I)|}{\binom Nk(k-1)!}\\
     &=\frac{\big|\bigcup_{I\in C} Q(I)\big|}{\binom Nk(k-1)!}\\
     &\le \frac k{k!\binom Nk}\Big|\Big\{(i_1,\dots,i_k)\subset [N]: \sum_{l=1}^{k-1}\mu_{i_l,i_k}^2>\delta^2/k\Big\}\Big| \\
    &=k\Pr(\sum_{m=1}^{k-1}\mu_{j,i_m}^2>\delta^2/k).
  \end{align*}
  By the assumption of the theorem the last expression is at most $\epsilon$ which proves our claim.
\end{proof}

Theorem \ref{thm:strip} implies the following
\begin{corollary}\label{thm:strip1} Let $\Phi$ be an $m\times N$ matrix. If
  $$
  \theta\le \frac {a\delta^2}{k^2}, \quad\text{and}\quad \mu^4\le \frac{(1-a)^2\delta^4}{32 k^3\log(2k/\epsilon_1)},
  $$
where $0<a<1,$ then $\Phi$ is $(k,\delta,\epsilon_1)$-StRIP.
\end{corollary}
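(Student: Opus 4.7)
\textbf{Proof plan for Corollary \ref{thm:strip1}.} The plan is to invoke Theorem \ref{thm:strip} and derive the probability bound it requires from Corollary \ref{cor:some}, with a careful match of parameters. By Theorem \ref{thm:strip}, it is enough to verify that for every fixed $j\in[N]$,
\[
\Pr\Big(\sum_{m=1}^{k-1}\mu_{j,i_m}^2 > \delta^2/k\Big) < \epsilon_1/k,
\]
where $(i_1,\dots,i_{k-1})$ is a random ordered $(k-1)$-tuple of indices distinct from $j$. Since this sum is stochastically dominated by the analogous sum over $k$ terms, it suffices to establish the same tail bound for $\sum_{l=1}^{k}\mu_{j,i_l}^2$.

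The next step is to apply Corollary \ref{cor:some} with the specific choices
\[
\alpha = \frac{\delta^2}{k}, \qquad \beta = \alpha\log(2k/\epsilon_1) = \frac{\delta^2}{k}\log(2k/\epsilon_1).
\]
With these choices, the hypothesis $k\theta\le a\alpha$ of Corollary \ref{cor:some} becomes exactly $\theta\le a\delta^2/k^2$, and the hypothesis $\mu^4\le(1-a)^2\alpha^3/(32\beta k)$ simplifies to
\[
\mu^4 \le \frac{(1-a)^2\,\delta^6/k^3}{32\cdot(\delta^2/k)\log(2k/\epsilon_1)\cdot k} = \frac{(1-a)^2\delta^4}{32\,k^3\log(2k/\epsilon_1)},
\]
which is precisely the hypothesis of Corollary \ref{thm:strip1}. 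Therefore Corollary \ref{cor:some} applies and yields
\[
P_{R_k'}\Big(\sum_{l=1}^{k}\mu_{j,i_l}^2 \ge \alpha\Big) \le 2e^{-\beta/\alpha} = 2e^{-\log(2k/\epsilon_1)} = \frac{\epsilon_1}{k}.
\]

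Combining the two displays (using the domination of $k-1$-term sums by $k$-term sums) gives the hypothesis of Theorem \ref{thm:strip}, and that theorem immediately concludes $\Phi$ is $(k,\delta,\epsilon_1)$-StRIP.

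\emph{Where the real work is.} There is no new analysis here — the martingale concentration, the change of measure between ordered and unordered tuples, and the translation from coordinatewise tail bounds to a spectral-norm bound on $H(I)$ have all been done in Theorem \ref{thm:sinc}, Corollary \ref{cor:some}, and Theorem \ref{thm:strip} respectively. The only delicate point is parameter matching: the threshold $\alpha$ must be $\delta^2/k$ (to feed Theorem \ref{thm:strip}), and the exponent $\beta/\alpha$ must be calibrated to $\log(2k/\epsilon_1)$ (to produce the $\epsilon_1/k$ probability required by the union bound hidden inside Theorem \ref{thm:strip}); propagating these choices through the two conditions of Corollary \ref{cor:some} produces exactly the $\theta$ and $\mu^4$ bounds stated in the corollary, so the proof is a short substitution argument.
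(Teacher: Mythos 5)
Your proposal is correct and follows essentially the same route as the paper: the paper's own proof sets $\epsilon_1=2ke^{-\beta/\alpha}$ (equivalently $\alpha=\delta^2/k$, $\beta=\tfrac{\delta^2}{k}\log(2k/\epsilon_1)$) and substitutes into Corollary \ref{cor:some} to feed Theorem \ref{thm:strip}, exactly as you do. Your version is in fact slightly more careful, since you make explicit the domination of the $(k-1)$-term sum by the $k$-term sum and verify both parameter substitutions, which the paper leaves implicit.
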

\begin{proof}
Take $\epsilon_1=2ke^{-\beta/\alpha},$ then $\beta=\frac{\delta^2}k\log(2k/\epsilon_1).$
The claim is obtained by substituting this value into the conditions of Corollary \ref{cor:some}. 
\end{proof}
Observe that the sufficient condition for the $(k,\delta)$-RIP property from the Gershgorin
theorem is $\mu<\delta/k,$ so the result of Corollary \ref{thm:strip1} gives a better result, namely $\mu=O(k^{-3/4}).$
At the same time, Tropp's result in \cite[Thm. B]{tro08} implies that the matrix 
$\Phi$ is $(k,\delta,\epsilon)$-StRIP under a weaker (i.e., more inclusive) condition.
Below we improve upon these results by analyzing the StRIP property directly rather than 
relying on the SINC condition. 

\begin{theorem}\label{thm:1/4} Let $\Phi$ by an $m\times N$ matrix and
let $\theta=\bar\mu^2$ or
$\theta=\bar\mu_{\max}^2,$ depending on whether $\Phi$ is coherence-invariant or not.
Let $\epsilon<\min\{1/k,e^{1-1/\log 2}\}$ and suppose that $\Phi$ satisfies
  \begin{equation}\label{eq:sc}
  k\mu^4\le \frac 1{\log^2(1/\epsilon)}\min\Big(\frac{(1-a)^2b^2}{32\log(2k)\log(e/\epsilon)},{c^2}\Big)\quad
\text{and}\quad k\theta\le\frac{ab}{\log(1/\epsilon)},
       \end{equation}
where $a,b,c\in(0,1)$ are constants such that
  \begin{equation}\label{eq:abc}
  \sqrt{a}+\sqrt{2ab}+\sqrt c+\frac {2k}N\|\Phi\|^2\le e^{-1/4}\delta/{6\sqrt 2}.
  \end{equation}
Then $\Phi$ is $(k,\delta,\epsilon)$-StRIP.
\end{theorem}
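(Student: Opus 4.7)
I plan to bound the spectral norm $\|\Phi_I^T\Phi_I - \text{Id}\|$ directly using a random-matrix moment estimate, rather than routing through the SINC condition as in Corollary \ref{thm:strip1}. The reason is that deducing StRIP from SINC loses a factor of $\sqrt{k}$ (a Gershgorin-type step converting a column norm into a spectral norm), which forces $\mu = O(k^{-3/4})$; working at the level of the operator norm and exploiting the mean-squared coherence $\theta$ in addition to $\mu$ should recover the $k^{1/2}$ factor and give the claimed $\mu = O(k^{-1/4}/\mathrm{polylog})$.

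The object of study is the hollow Gram operator $H_I = \Phi_I^T\Phi_I - \text{Id}$ viewed as a random matrix indexed by a uniform $k$-subset. I would first reduce to a Bernoulli selection model (selectors $\xi_i$ with $\mathbb{E}\xi_i = k/N$), for which $H_I$ extends to $\tilde H = \sum_{i\ne j}\xi_i\xi_j H_{ij}e_ie_j^T$ with $H = \Phi^T\Phi - \text{Id}$, and use a standard de-Poissonization argument to transfer the conclusion back to the fixed-size model. Split $\tilde H$ into $\mathbb{E}\tilde H$ and its centered part: the deterministic piece has spectral norm at most $(k/N)^2\|\Phi\|^2$, which is absorbed into the $(2k/N)\|\Phi\|^2$ summand of \eqref{eq:abc}. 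To bound the centered piece, I would decouple the two factors $\xi_i,\xi_j$ and apply a Rudelson--Vershynin / noncommutative Khintchine inequality (or equivalently a direct moment estimate) to the resulting Rademacher-symmetrized sum of rank-two matrices $H_{ij}(e_ie_j^T + e_je_i^T)$. Such a bound typically splits into a ``pointwise'' term scaling like $\mu^2\log(1/\epsilon)$, controlled by the $\mu^4$-hypothesis and accounting for $\sqrt{c}$, and a ``Frobenius'' term scaling like $\mu\sqrt{k\theta\log(1/\epsilon)}$, controlled by the $k\theta$-hypothesis and accounting for $\sqrt{a}$ together with the cross term $\sqrt{2ab}$.

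Concentration of $\|H_I\|$ around its mean then follows either from matrix Bernstein applied to the independent-summand decomposition of $\tilde H$, or from a bounded-differences Doob martingale that reveals one element of $I$ at a time, in direct analogy with Lemma \ref{lemma:bounded}. Combined with the expectation bound above, this yields $\|H_I\|\le \delta$ except on a set of $k$-subsets of $P_{R_k}$-measure at most $\epsilon$, which is the desired StRIP conclusion.

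The main obstacle is the tight bookkeeping of constants that feeds the inequality \eqref{eq:abc}. The Rudelson-type step intertwines the pointwise coherence $\mu$ with the Frobenius-type quantity $\sqrt{k\theta}$, so the two hypotheses in \eqref{eq:sc} cannot be used in isolation but must be combined through a Cauchy--Schwarz or AM--GM step, which is what produces the cross term $\sqrt{2ab}$ rather than a clean sum $\sqrt{a}+\sqrt{b}$. Tracking the numerical constant $e^{-1/4}/(6\sqrt 2)$, and verifying that the mild restriction $\epsilon < e^{1-1/\log 2}$ is precisely the threshold required for the moment estimate to go through, will be the most delicate part of the calculation.
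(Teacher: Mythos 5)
Your high-level architecture (decoupling the hollow Gram chaos, a Rudelson--Vershynin/Khintchine moment bound, then a moment-to-tail conversion) is the same Tropp-style program the paper follows, but the plan omits the one ingredient that actually makes the $\mu=O(k^{-1/4})$ scaling work, and the step you propose in its place would not close. After decoupling and one application of the Rudelson--Vershynin estimate (Lemma \ref{lemma:q}), the quantity that controls everything is the restricted column norm $\max_{i\notin I}\|\Phi_I^T\phi_i\|_2$. Its worst-case value is $\sqrt{k}\,\mu$, which under \eqref{eq:sc} is only $O\bigl(k^{1/4}\sqrt{c}/\sqrt{\log(1/\epsilon)}\bigr)$ --- not small --- so a probabilistic estimate for it is unavoidable. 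The paper supplies exactly this via the Azuma--Hoeffding martingale behind Corollary \ref{cor:some}: with probability $1-k\epsilon_2$, where $\epsilon_2=k^{-1+\log\epsilon}$, the squared column norm is at most $\epsilon_1=b/\log(1/\epsilon)$, and on the exceptional event the trivial bound $k\mu^2$ is paid but damped by the weight $(k\epsilon_2)^{1/q}=k^{-1/4}$ inside the $q$-th moment; this is precisely where the $\sqrt c$ in \eqref{eq:abc} and the factor $32\log(2k)\log(e/\epsilon)$ in \eqref{eq:sc} originate, and your plan leaves that factor unexplained. So the theorem \emph{is} proved by routing through the SINC-type estimate; what the paper avoids is only the lossy Gershgorin conversion of Corollary \ref{thm:strip1}, not the column-norm concentration itself. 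Relatedly, your claimed split into a pointwise term $\mu^2\log(1/\epsilon)$ and a Frobenius term $\mu\sqrt{k\theta\log(1/\epsilon)}$ is not what noncommutative Khintchine yields for this chaos: the extra factor of $\mu$ in the second term has no evident source, and the correct terms are $\sqrt{\epsilon_1 q}$, $(k\epsilon_2)^{1/q}\mu\sqrt{kq}$ and $\sqrt{2k\theta q}$, as in Lemma \ref{lemma:technical}.

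The final concentration step is also problematic. A bounded-differences martingale for $\|H_I\|$ revealing one index at a time, or a matrix Bernstein bound for the selector sum, has increments (respectively, a uniform summand bound) of order $\max_{i}\|\Phi_I^T\phi_i\|_2\le\sqrt{k}\mu$, again of order $k^{1/4}$ under \eqref{eq:sc}, which is far too large to give a useful tail at scale $\delta$. The paper does not concentrate $\|H_I\|$ around its mean at all: it establishes the sub-Gaussian moment growth $(\avg\|RHR^\ast\|^q)^{1/q}\le\xi_q\sqrt q+\lambda$ and converts this directly into the tail bound $e^{-q/4}$ by Markov's inequality (Lemma \ref{lemma:LD}) with $q=4\log(1/\epsilon)$. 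Your Bernoulli-selector reduction is harmless (monotonicity of $I\mapsto\|H_I\|$ under inclusion makes the transfer legitimate), but as written the plan has no mechanism producing the quantitative dependence in \eqref{eq:sc}, and the two steps above fail as stated.
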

The proof relies on several results from \cite{tro08}. The following theorem is a modification
of Theorem 25 in that paper. Below $R$ denotes a linear operator that performs a restriction
to $k$ coordinates chosen according to some rule (e.g., randomly). Its domain is determined by the context.
Its adjoint $R^\ast$ acts on $\reals^k$ by padding the $k$-vector with the appropriate number of zeros.
\begin{theorem}\label{thm:dec} {\rm (Decoupling of the spectral norm)} Let $A$ be a $2N\times 2N$ symmetric matrix
with zero diagonal. Let $\eta\in\{0,1\}^{2N}$ be a random vector with $N$ components equal to one.
Define the index sets $T_1(\eta)=\{i:\eta_i=0\}, T_2(\eta)=\{i:\eta_i=1\}.$ Let $R$ be a random restriction to $k$ coordinates.
For any $q\ge 1$ we have
  \begin{equation}\label{eq:RR}
    (\avg\|RAR^\ast\|^q)^{1/q}\le 2\max_{k_1+k_2=k}\avg_{\eta}(\avg\|R_1A_{T_1(\eta)\times T_2(\eta)}R_2^\ast\|^q)
^{1/q},
  \end{equation}
where $A_{T_1(\eta)\times T_2(\eta)}$ denotes the submatrix of $A$ indexed by $T_1(\eta)\times T_2(\eta)$
and the matrices $R_i$ are independent restrictions to $k_i$ coordinates from $T_i,i=1,2.$

When $A$ has order $(2N+1)\times(2N+1),$ then an analogous result holds for partitions into blocks of
size $N$ and $N+1.$
\end{theorem}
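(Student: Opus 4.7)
The plan is to prove the StRIP bound via the moment method, following the Tropp \cite{tro08} strategy but exploiting both the maximum coherence $\mu$ and the mean-square coherence $\theta$ simultaneously to obtain the improved $\mu^{1/4}$ scaling. Set $H=\Phi^T\Phi-\text{Id}$, a symmetric matrix with zero diagonal whose entries $H_{ij}=\langle\phi_i,\phi_j\rangle$ are bounded in absolute value by $\mu$. If $R=R_I$ restricts to a uniform random $k$-subset $I\in\cP_k(N)$, then $\Phi_I^T\Phi_I-\text{Id}=RHR^\ast$, and by Markov's inequality, for any integer $q\ge 1$,
\begin{equation*}
P_{R_k}\big(\|RHR^\ast\|>\delta\big)\;\le\;\delta^{-q}\,\avg\|RHR^\ast\|^q.
\end{equation*}
I would choose $q$ of order $\log(1/\epsilon)$ at the end of the argument; the constraint $\epsilon<1/k$ together with $\epsilon<e^{1-1/\log 2}$ is there precisely to make $q\ge\log(2k)$ and $q\ge 1/\log 2$, so logarithmic factors from the upcoming Khintchine bound can be absorbed into $q$.

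The next step is to apply the decoupling statement of Theorem~\ref{thm:dec} to replace $RHR^\ast$ by the rectangular submatrix $R_1 H_{T_1\times T_2}R_2^\ast$, where $R_1,R_2$ are independent restrictions to $k_1,k_2$ coordinates with $k_1+k_2=k$ chosen from disjoint halves $T_1,T_2$ of $[N]$. Then I would bound the decoupled moment by conditioning twice. Fixing $R_1$, the inner expectation is over a random column-sample $R_2^\ast$ of the (now deterministic) matrix $M:=R_1H_{T_1\times T_2}$; a standard Rudelson/non-commutative Khintchine inequality gives
\begin{equation*}
\bigl(\avg_{R_2}\|MR_2^\ast\|^q\bigr)^{1/q}\;\lesssim\;\sqrt{q}\,\max_{j}\|Me_j\|_2+\sqrt{k_2/|T_2|}\,\|M\|.
\end{equation*}
Taking expectation over $R_1$ and applying the same type of bound again to the row sampling yields a three-term estimate in which (i) a $\sqrt{q}\cdot\mu\sqrt{kq}$ piece comes from iterating the entrywise bound through both sampling stages, (ii) a $\sqrt{k\theta}$ piece comes from the fact that a typical column norm of $H$ satisfies $\|He_j\|_2^2\le N\bar\mu^2_{\max}$ and sampling $k$ coordinates gives expected squared norm $k\theta$, and (iii) a $(2k/N)\|\Phi\|^2$ piece comes from the trivial bound $\|H_{T_1\times T_2}\|\le\|\Phi\|^2$ scaled by the sampling fraction.

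Finally, I would collect the three contributions and set $q=\log(1/\epsilon)$ (or a small multiple), so that $\delta^{-q}\avg\|RHR^\ast\|^q\le\epsilon$ as soon as
\begin{equation*}
(\avg\|RHR^\ast\|^q)^{1/q}\;\le\;e^{-1/4}\delta.
\end{equation*}
Under the hypotheses \eqref{eq:sc}, the first piece is bounded by $6\sqrt{2}\sqrt{c}$, the second by $6\sqrt{2}(\sqrt{a}+\sqrt{2ab})$ (after splitting the contribution to track the $b$ constant in the $\mu^4$ bound), and the third is exactly $6\sqrt{2}\cdot(2k/N)\|\Phi\|^2$, so condition \eqref{eq:abc} closes the argument. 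The main technical obstacle is the iterated Rudelson-type moment bound after decoupling: one must carefully track how logarithmic factors ($\log(2k)$, $\log(e/\epsilon)$) from the Khintchine inequality combine with the two sampling stages so that the $\mu$ contribution scales like $\sqrt{kq}\,\mu\cdot\sqrt{q\log k}$, producing a $(k\mu^4\log^2(1/\epsilon)\log k)^{1/4}$-type quantity—this is precisely the source of the improved $\mu^{1/4}$ exponent and of the $\log^3(1/\epsilon)$ factor in the coherence hypothesis.
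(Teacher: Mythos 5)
Your proposal does not prove the statement at issue. The statement is Theorem~\ref{thm:dec}, i.e.\ the decoupling inequality \eqref{eq:RR} itself: a bound on $(\avg\|RAR^\ast\|^q)^{1/q}$ for an arbitrary symmetric zero-diagonal matrix $A$ in terms of the decoupled quantity on the right-hand side. What you have written is instead an outline of the proof of Theorem~\ref{thm:1/4} (essentially Lemma~\ref{lemma:technical} together with Lemmas~\ref{lemma:q} and \ref{lemma:LD} and the final choice of $q$), and in that outline you explicitly \emph{invoke} Theorem~\ref{thm:dec} as a black box in your second paragraph. Nothing in the proposal addresses why \eqref{eq:RR} holds, so as a proof of the stated theorem it has a complete gap, not a partial one. (The paper itself only cites Tropp's Theorem 9 for this fact, but the task was to supply the argument, not to reproduce the downstream application.)

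For the record, the missing argument runs as follows. Since $A$ is symmetric with zero diagonal and since $P_\eta(i\in T_1(\eta),\,j\in T_2(\eta))$ equals the same constant $c=N/(2(2N-1))$ for every pair $i\ne j$, one has the reconstruction identity
\begin{equation*}
A=\frac{1}{2c}\,\avg_\eta\bigl(\tilde A_{T_1(\eta)\times T_2(\eta)}+\tilde A_{T_2(\eta)\times T_1(\eta)}\bigr),
\end{equation*}
where $\tilde A_{T_1\times T_2}$ denotes the block extended by zeros to a $2N\times 2N$ matrix. Conjugating by $R$, applying Minkowski's and Jensen's inequalities to pull $\avg_\eta$ outside the $L^q$ norm, and observing that the symmetrized bipartite matrix $R(\tilde A_{T_1\times T_2}+\tilde A_{T_2\times T_1})R^\ast$ is, up to permutation, of the form $\left(\begin{smallmatrix}0 & B\\ B^T & 0\end{smallmatrix}\right)$ with $B=R_1A_{T_1\times T_2}R_2^\ast$ and hence has spectral norm exactly $\|B\|$ (this is where the constant stays at $1/(2c)=(2N-1)/N\le 2$ rather than doubling to $4$), one reduces to the decoupled block. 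Finally, conditional on $\eta$ the restriction $R$ splits into independent restrictions to $k_1$ coordinates of $T_1$ and $k_2$ of $T_2$ with $k_1+k_2=k$ random; bounding the average over the split sizes by the maximum yields \eqref{eq:RR}. None of these steps --- the reconstruction identity, the Jensen step, the bipartite norm identity, or the conditioning that produces independent $R_1,R_2$ --- appears in your proposal.
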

 Inequality \eqref{eq:RR} is implicitly proved in the proof of the decoupling theorem (Theorem 9) \cite{tro08}. The ideas behind it are due to \cite{led91}.

The next lemma is due to Tropp \cite{tro08b} and Rudelson and Vershinin \cite{rud07}.
\begin{lemma}\label{lemma:q} Suppose that $A$ is a matrix with $N$ columns and let
$R$ be a random restriction to $k$ coordinates. Let $q\ge 2, p=\max(2,2\log(\rank AR^\ast),q/2).$ Then
   $$
   (\avg\|AR^\ast\|^q)^{1/q}\le 3\sqrt p(E\|AR^\ast\|^q_{1\to 2})^{1/q}+\sqrt{\frac kN}\|A\|
   $$
   where $\|\cdot\|_{1\to 2}$ is the maximum column norm.
\end{lemma}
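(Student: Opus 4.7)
The strategy is a moment method. Writing $\Phi_I^T\Phi_I - \text{Id} = R H R^\ast$, where $H = \Phi^T\Phi - \text{Id}$ is the hollow Gram matrix and $R$ restricts to the random subset $I$, Markov's inequality gives $P_{R_k}(\|\Phi_I^T\Phi_I - \text{Id}\| > \delta) \le \delta^{-q}\,\avg\|RHR^\ast\|^q$ for any $q\ge 1$. I will take $q$ of order $\log(1/\epsilon)$ and prove a moment bound of the form $(\avg\|RHR^\ast\|^q)^{1/q}\le \delta\,\epsilon^{1/q}$; matching the numerical constant $e^{-1/4}/(6\sqrt 2)$ appearing in \eqref{eq:abc} then yields the claimed tail estimate. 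The moment bound itself will arise from first decoupling the two copies of $R$ and then twice applying the Rudelson--Vershynin--Tropp inequality (Lemma \ref{lemma:q}) to reduce the spectral norm of a random submatrix to max-column-norm moments plus a deterministic residue.

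To decouple, I apply Theorem \ref{thm:dec} to replace $RHR^\ast$ by the rectangular object $R_1 H_{T_1\times T_2} R_2^\ast$, where $R_1,R_2$ are independent restrictions to sizes $k_1+k_2=k$ inside a (near-)even partition $T_1\cup T_2=[N]$; this costs a factor of $2$. Conditioning on $R_1$ and applying Lemma \ref{lemma:q} with $A = R_1 H_{T_1\times T_2}$ and the restriction $R_2^\ast$ splits the norm into a max-column-norm moment $3\sqrt p\,(\avg\|R_1 H_{T_1\times T_2} R_2^\ast\|_{1\to 2}^q)^{1/q}$ plus the residual $\sqrt{k_2/|T_2|}\,\|R_1 H_{T_1\times T_2}\|$. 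A second application of Lemma \ref{lemma:q} to the residual (with $A=H_{T_1\times T_2}^T$ and the restriction $R_1^\ast$) produces a second max-column-norm moment and the deterministic remainder $\sqrt{k_1/|T_1|}\,\|H_{T_1\times T_2}\|\le \sqrt{k/N}\,\|\Phi\|^2$. The two such remainders combine to yield the $(2k/N)\|\Phi\|^2$ contribution in \eqref{eq:abc}.

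Each max-column-norm moment has the form $(\avg\max_j(\sum_l\mu_{j,l}^2)^{q/2})^{1/q}$ with the inner index set a random $k$-subset, and this is precisely what Corollary \ref{cor:some} controls. Taking $\alpha = b/\log(1/\epsilon)$ and $\beta = \alpha\log(4k/\epsilon)$ in that corollary reproduces the constraint $k\theta\le ab/\log(1/\epsilon)$ and the first branch of \eqref{eq:mu2a}, while giving $P\bigl(\sum_l\mu_{j,l}^2\ge \alpha\bigr)\le 2e^{-\beta/\alpha}\le \epsilon/(2k)$, so a union bound over at most $N$ columns keeps the random max-column norm below $\sqrt\alpha$ with total probability loss $\le \epsilon/2$. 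The crude bound $\sqrt k\,\mu^2\le c/\log(1/\epsilon)$ (the second branch of \eqref{eq:mu2a}) handles the unrestricted max-column norm $\max_l\sqrt{\sum_{j\in T_2}\mu_{l,j}^2}$ that arises in the second invocation of Lemma \ref{lemma:q}. After multiplying by $3\sqrt p$ with $p$ of order $\log(1/\epsilon)$ and taking $q$-th moments, the three sources contribute the terms $\sqrt a$, $\sqrt{2ab}$, and $\sqrt c$ that appear in \eqref{eq:abc}: the $\sqrt a$ factor absorbs the centering of $\sum_l\mu_{j,l}^2$ about its expectation $k\theta$, while $\sqrt{2ab}$ and $\sqrt c$ come from the two max-column-norm moments themselves.

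The main obstacle is the bookkeeping of constants and exponents. One must coordinate $p=\max(2,2\log(\rank),q/2)$ across both applications of Lemma \ref{lemma:q}; combine the factor $2$ from decoupling with the two factors $3\sqrt p$ and the normalization $e^{1/q}$ coming from $\epsilon^{1/q}$ so that, upon choosing $q=4\log(1/\epsilon)$, everything collapses into the clean prefactor $(6\sqrt 2)e^{-1/4}\delta^{-1}$ in \eqref{eq:abc}; and verify that Corollary \ref{cor:some}, raised to the $q$-th power and union-bounded over columns, gives a moment bound that respects the $\log(2k)\log(e/\epsilon)$ budget in the first branch of \eqref{eq:mu2a}. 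The hypotheses $\epsilon<1/k$ and $\epsilon<e^{1-1/\log 2}$ are imposed precisely to guarantee that $\log(1/\epsilon)$ dominates $\log k$ and that all the logarithms appearing in these estimates stay of the intended order.
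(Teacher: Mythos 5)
Your proposal does not prove the statement it was assigned. The statement is Lemma \ref{lemma:q} itself --- the Rudelson--Vershynin--Tropp moment bound
$(\avg\|AR^\ast\|^q)^{1/q}\le 3\sqrt p\,(\avg\|AR^\ast\|^q_{1\to 2})^{1/q}+\sqrt{k/N}\,\|A\|$
for a random restriction $R$ --- whereas what you have written is a sketch of the proof of Theorem \ref{thm:1/4} (equivalently, of Lemma \ref{lemma:technical} plus the concluding tail estimate). Worse, your argument \emph{invokes} Lemma \ref{lemma:q} twice as a black box (``twice applying the Rudelson--Vershynin--Tropp inequality (Lemma \ref{lemma:q})''), so with respect to the assigned statement the proposal is circular: at no point do you establish, or even engage with, the inequality to be proved. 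Everything you discuss --- decoupling via Theorem \ref{thm:dec}, Corollary \ref{cor:some}, the constants $a,b,c$ in \eqref{eq:abc}, the choice $q=4\log(1/\epsilon)$ --- lives downstream of Lemma \ref{lemma:q}, not inside it.

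For the record, the paper does not prove this lemma either; it quotes it from Tropp's note on norms of random submatrices and from Rudelson--Vershynin. An actual proof would have to proceed along those lines: write $AR^\ast$ using random selectors, symmetrize to reduce to a Rademacher (or Gaussian) sum of columns, and apply the noncommutative Khintchine inequality (Rudelson's lemma) to bound the $q$-th moment of the spectral norm by $\sqrt{p}$ times the maximum column norm plus the average term $\sqrt{k/N}\,\|A\|$, with $p$ absorbing the $\log(\rank)$ factor. None of that machinery appears in your proposal, so the assigned statement remains unproved.
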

The following lemma is a simple application of Markov's inequality, a similar result can be found in \cite{led91}, Lemma 4.10; see also \cite{tro08}.
\begin{lemma}\label{lemma:LD} Let $q,\lambda>0$ and let $\xi_q$ be a positive function of $q$.
Suppose that $Z$ is a positive random variable whose $q$th moment satisfies the bound
  $$
  (\avg Z^q)^{1/q}\le \xi_q \sqrt q+\lambda.
  $$
Then
  $$
  P(Z\ge e^{1/4}(\xi_q \sqrt q +\lambda))\le e^{-q/4}.
  $$
\end{lemma}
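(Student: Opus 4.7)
The plan is a direct one-line application of Markov's inequality to the random variable $Z^q$. Since $Z$ is positive and $t\mapsto t^q$ is monotone on $[0,\infty)$, the event $\{Z\ge t\}$ coincides with $\{Z^q\ge t^q\}$, so Markov's inequality yields
\[
  P(Z\ge t)=P(Z^q\ge t^q)\le \frac{\avg Z^q}{t^q}
\]
for every $t>0$.

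I would then choose $t=e^{1/4}(\xi_q\sqrt q+\lambda)$. The hypothesis $(\avg Z^q)^{1/q}\le \xi_q\sqrt q+\lambda$ rearranges to $\avg Z^q\le (\xi_q\sqrt q+\lambda)^q$, so substituting into the Markov bound,
\[
  P\bigl(Z\ge e^{1/4}(\xi_q\sqrt q+\lambda)\bigr)\le \frac{(\xi_q\sqrt q+\lambda)^q}{e^{q/4}(\xi_q\sqrt q+\lambda)^q}=e^{-q/4},
\]
which is exactly the claimed bound.

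There is essentially no obstacle here; the content of the lemma is simply the observation that an $L^q$ moment bound gives an exponential-in-$q$ tail bound via Markov applied to the $q$th power. The only thing to be careful about is that $\xi_q\sqrt q+\lambda>0$ (guaranteed since $\xi_q,\lambda>0$) so that division is legitimate, and that $Z\ge 0$ so that the event translation $\{Z\ge t\}=\{Z^q\ge t^q\}$ is valid.
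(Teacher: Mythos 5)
Your proof is correct and is essentially identical to the paper's own argument: the paper also applies Markov's inequality to $Z^q$ with the threshold $t=e^{1/4}(\xi_q\sqrt q+\lambda)$ and uses $\avg Z^q\le(\xi_q\sqrt q+\lambda)^q$ to cancel the numerator against the denominator. Your added remarks about positivity and the validity of the event translation are fine but not points of difference.
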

{\em Proof:} 
By the Markov inequality,
  $$
  P\left(Z\ge e^{1/4}(\xi_q \sqrt q+\lambda)\right)\leq \frac{\avg Z^q}{(e^{1/4}(\xi_q\sqrt q+\lambda))^q}\leq \left(\frac{\xi_q\sqrt{q}+\lambda}{e^{1/4}(\xi_q\sqrt q+\lambda)}\right)^q=e^{-q/4}. \hspace*{1in}\qed
  $$

The main part of the proof is contained in the following lemma.
\begin{lemma}\label{lemma:technical}
Let $\Phi$ be an $m\times N$ matrix with coherence parameter $\mu.$ Suppose that for some $0< \epsilon_1,\epsilon_2<1$
  \begin{equation}\label{eq:asp}
   P_{R_k'}(\{(I,i):\|\Phi_I^T\phi_i\|^2\ge \epsilon_1\}\mid i)\le \epsilon_2.
   \end{equation}
Let $R$ be a random restriction to $k$ coordinates and $H=\Phi^T\Phi-\text{Id}.$ For any $q\ge 2, p=\max(2,2\log(\rank RHR^\ast),q/2)$ we have
  \begin{equation}\label{eq:RHR}
  (\avg\|R H R^\ast\|^q)^{1/q}\le 6\sqrt p (\sqrt \epsilon_1 +(k \epsilon_2)^{1/q} \mu \sqrt{k}
     +\sqrt {2k\theta}\,)+\frac {2k}N\|\Phi\|^2.
  \end{equation}
\end{lemma}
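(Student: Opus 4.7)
The plan follows Tropp's architecture in \cite{tro08}: decouple the two random restrictions acting on the sides of $H$, and then apply the Rudelson--Vershynin type moment inequality of Lemma \ref{lemma:q} to each restriction in turn. Since $H=\Phi^T\Phi-\text{Id}$ is symmetric with zero diagonal, Theorem \ref{thm:dec} immediately yields
$$(\avg\|RHR^\ast\|^q)^{1/q}\le 2\max_{k_1+k_2=k}\avg_\eta\bigl(\avg\|R_1 H_{T_1\times T_2} R_2^\ast\|^q\bigr)^{1/q},$$
with $R_1,R_2$ independent restrictions to $k_1,k_2$ coordinates of disjoint blocks $T_1,T_2$. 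From here I would peel off the two restrictions one at a time.

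Conditioning on $R_1$ and applying Lemma \ref{lemma:q} to the restriction $R_2$ produces two summands: the $1\to 2$ norm term $3\sqrt p\,(\avg\|R_1 H_{T_1\times T_2} R_2^\ast\|_{1\to 2}^q)^{1/q}$ and the residual $\sqrt{k_2/N}\,\|R_1 H_{T_1\times T_2}\|$. For the residual I would take $\avg_{R_1}$ (using Minkowski) and apply Lemma \ref{lemma:q} a second time, now to $R_1$ acting on the transposed block, using $\|R_1 H_{T_1\times T_2}\|=\|H_{T_2\times T_1}R_1^\ast\|$. The ``diagonal'' piece of this second application contributes $\sqrt{k_1/N}\,\|H_{T_2\times T_1}\|\le\sqrt{k_1/N}\,\|\Phi\|^2$; combined with the outer $\sqrt{k_2/N}$, AM--GM ($\sqrt{k_1k_2}\le k/2$), and the decoupling factor $2$, this produces the $\frac{2k}{N}\|\Phi\|^2$ summand of \eqref{eq:RHR}. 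The $1\to 2$ norm produced by this same second application is a deterministic maximum $\max_{i\in R_1T_1}\|\Phi_{T_2}^T\phi_i\|_2\le\sqrt{|T_2|\,\theta}$; combined with the outer $\sqrt{k_2/N}$, this yields the $\sqrt{2k\theta}$ contribution.

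The remaining piece is the original $1\to 2$ norm produced by the first application. Its $j$-th column equals $\Phi_J^T\phi_j$ for $J:=R_1T_1$ and $j\in R_2T_2$, so I would split its $q$-th moment into two regimes using the SINC-type assumption \eqref{eq:asp}: on the SINC event (probability $\ge 1-\epsilon_2$) the column norm is $\le\sqrt{\epsilon_1}$, while on its complement only the deterministic bound $\sqrt k\,\mu$ applies. Passing from a single column to the maximum over the $k_2\le k$ columns via $\max_j X_j^q\le\sum_j X_j^q$ and combining via Minkowski gives
$$\bigl(\avg\|R_1 H_{T_1\times T_2} R_2^\ast\|_{1\to 2}^q\bigr)^{1/q}\le\sqrt{\epsilon_1}+(k\epsilon_2)^{1/q}\mu\sqrt k,$$
with the $k^{1/q}$ factor on the first summand harmlessly absorbed into the overall $6\sqrt p$ since $p\ge q/2$. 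Collecting the three contributions through Minkowski yields \eqref{eq:RHR}.

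The main obstacle is the orchestration of the two applications of Lemma \ref{lemma:q} with the SINC regime split so that the three distinct sources of error---the SINC tail $\epsilon_1$, the worst-case $k\mu^2$ regime, and the mean-square bulk $\theta$---appear additively with the correct prefactors and without introducing constants that would degrade the sufficient condition \eqref{eq:sc} later used to deduce the StRIP property. Once the inner estimates are aligned, the outer decoupling-plus-Minkowski bookkeeping is routine.
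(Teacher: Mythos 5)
Your outline reproduces the paper's proof essentially step for step: decoupling via Theorem \ref{thm:dec}, two successive applications of Lemma \ref{lemma:q} linked by Minkowski's inequality, the spectral-norm residual yielding $\frac{2k}{N}\|\Phi\|^2$ after AM--GM and the decoupling factor, the second $1\to2$ norm yielding the $\theta$ term, and the regime split on \eqref{eq:asp} yielding $\sqrt{\epsilon_1}+(k\epsilon_2)^{1/q}\mu\sqrt k$. Two details need fixing. First, the paper controls the maximum over the $k_2$ columns by a union bound at the event level: $\Pr\big(\max_{j}\sum_{i}\mu_{ij}^2\ge\epsilon_1\big)\le k_2\epsilon_2$, whence $\avg\max_j(\cdot)^{q/2}\le\epsilon_1^{q/2}+k_2\epsilon_2(k_1\mu^2)^{q/2}$ and the first summand stays exactly $\sqrt{\epsilon_1}$. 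Your route via $\max_j X_j^q\le\sum_j X_j^q$ leaves a factor $k_2^{1/q}$ on the $\sqrt{\epsilon_1}$ term; this is a bounded constant only when $q\gtrsim\log k$, so it does not establish \eqref{eq:RHR} as stated for all $q\ge 2$ (it is harmless in the eventual application, where $q=4\log(1/\epsilon)$ and $p=q/2$ forces $q\ge 4\log k$, but the event-level union bound costs nothing and proves the stated inequality). Second, the bound $\max_{i}\|\Phi_{T_2}^T\phi_i\|_2\le\sqrt{|T_2|\,\theta}$ is not justified: $\theta$ controls the average of $\mu_{ij}^2$ over all $j\ne i$, not over the particular block $T_2$. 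Bound the restricted row norm by the full row norm, $\|H_{i,T_2}\|_2\le\|H_{i,\cdot}\|_2\le\sqrt{N\theta}$, which combined with the prefactor $3\sqrt{2k_2p/N}$ still gives $3\sqrt p\sqrt{2k_2\theta}\le 3\sqrt p\sqrt{2k\theta}$ as required.
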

\begin{proof} We begin with setting the stage to apply Theorem \ref{thm:dec}. Let $\eta\in\{0,1\}^{N}$ be a random vector with $N/2$ ones and 
let $R_1,R_2$ be random restrictions to $k_i$ coordinates in the sets $T_i(\eta),i=1,2$, respectively.
Denote by $\supp(R_i),i=1,2$ the set of indices selected by $R_i$ and let $H(\eta):=H_{T_1(\eta)\times T_2(\eta)}$.
Let $q\ge 1$ and let us bound the term $\avg_\eta(\avg\|R_1H(\eta)R_2\|^q)^{1/q}$ that appears on the right side of \eqref{eq:RR}.
The expectation in the $q$-norm is computed for two random restrictions $R_1$ and $R_2$ that are conditionally independent
given $\eta.$ Let $\avg_i$ be the expectation with respect to $R_i,i=1,2$. Given $\eta$ we can evaluate these
expectations in succession and apply Lemma \ref{lemma:q} to $\avg_2:$
  \begin{align*}
  \avg_\eta(\avg \|R_1H(\eta)R_2^\ast\|^q)^{1/q}&=
  \avg_\eta\Big[\avg_1(\avg_2\|R_1H(\eta)R_2^\ast\|^q)^{q/q}\Big]^{1/q}\\
 &\le \avg_\eta\Big\{\avg_1\Big[ 3\sqrt p\, (\avg_2\|R_1 H(\eta)R_2^\ast\|_{1\to 2}^q)^{1/q}+\sqrt{\frac {2k_2}{N} }
 \|R_1H(\eta)\|\Big]^q\Big\}^{1/q}\\
 &\le \avg_\eta\Big\{3\sqrt p \;
 \Big[\avg_1\big(\avg_2\|R_1 H(\eta)R_2^\ast\|_{1\to 2}^q)\Big]^{1/q}+ \sqrt{\frac {2k_2}{N} }
 \Big[\avg_1\|R_1H(\eta)\|^q\Big]^{1/q}\Big\}
  \end{align*}
where on the last line we used the Minkowski inequality (recall that the random variables involved are finite). 
Now use Lemma \ref{lemma:q} again to obtain
 \begin{align} 
   \avg_\eta(\avg \|R_1H(\eta)R_2^\ast\|^q)^{1/q}&\le 3\sqrt p\, \avg_\eta\Big[\avg_1\avg_2
   \|R_1H(\eta)R_2^\ast\|_{1\to 2}^q\Big]^{1/q}
   +3\sqrt{\frac{2k_2p}N}\avg_\eta\big(\avg_1\| H(\eta)^\ast R_1^\ast\|_{1\to2}^q\big)^{1/q}\label{eq:terms}\\
   &+\sqrt{\frac{4k_1k_2}{N^2}}\avg_\eta\|H(\eta)^\ast\|.\notag
  \end{align}
  Let us examine the three terms on the right-hand side of the last expression.
Let $\eta(R_2)$ be the random vector conditional on the choice of $k_2$ coordinates. The sample
space for $\eta(R_2)$ is formed of all the vectors $\eta\in\{0,1\}^{N}$ such that $\supp(R_2)\subset T_2(\eta).$
In other words, this is a subset of the sample space $\{0,1\}^N$ that is compatible with a given $R_2.$
The random restriction $R_1$ is still chosen out of $T_1(\eta)$ independently of $R_2.$
Denote by $\tilde R$ a random restriction to $k_1$ indices in the set $(\supp(R_2))^c$ and let $\tilde\avg$
be the expectation computed with respect to it. We can write
  \begin{align*}
   \avg_\eta(\avg_1\avg_2\|R_1H(\eta)R_2^\ast\|_{1\to 2}^q)^{1/q}
    &\le (\avg_\eta\avg_1\avg_2\|R_1H(\eta)R_2^\ast\|_{1\to 2}^q)^{1/q}\\
       & =(\avg_2\tilde\avg
   \|\tilde R H(\eta)R_2^\ast\|_{1\to 2}^q)^{1/q}
   \end{align*}
Recall that $H_{ij}=\mu_{ij}{\bf1}_{\{i\ne j\}}$ and that $\tilde R$ and $R_2$ are $0$-$1$ matrices.
Using this in the last equation, we obtain
  \begin{equation}\label{eq:sum}
  \avg_2\tilde\avg\|\tilde RH(\eta)R_2^\ast\|_{1\to 2}^q\le \avg_2\tilde\avg
  \max_{j\in\supp(R_2)}\textstyle{\Big(\sum_{i\in\supp(\tilde R)}\mu_{ij}^2\Big)^{q/2}}.
  \end{equation}
Now let us invoke assumption \eqref{eq:asp}. Recalling that $k_1<k,$ we have
  $$
  P_{R_2,\tilde R}\Big( \textstyle {\max\limits_{j\in\supp(R_2)}\sum_{i\in\supp(\tilde R)}\mu_{ij}^2}\ge \epsilon_1\Big)
  \le k_2\epsilon_2.
  $$
Thus with probability $1-k_2\epsilon_2$ the sum in \eqref{eq:sum} is bounded above by $\epsilon_1.$ For the
other instances we use the trivial bound $k_1\mu^2.$ We obtain
  \begin{align*}
 3\sqrt p\, \avg_\eta\avg_1(\avg_2\|R_1H(\eta)R_2^\ast\|_{1\to 2}^q)^{1/q}&\le 3\sqrt p ((1-k_2\epsilon_2)\epsilon_1^{q/2}
 +k_2\epsilon_2(k_1\mu^2)^{q/2})^{1/q}\\
 &\le 3\sqrt p (\epsilon_1^{q/2}+k_2\epsilon_2(k_1\mu^2)^{q/2})^{1/q}\\
 &\le 3\sqrt p (\sqrt{\epsilon_1}+(k\epsilon_2)^{1/q} \sqrt{k_1\mu^2}),
 \end{align*} 
where in the last step we used the inequality $a^q+b^q
\le (a+b)^q$ valid for all $q\ge 1$ and positive $a,b.$
Let us turn to the second term on the right-hand side of \eqref{eq:terms}. Assuming coherence invariance, we observe that
  $$
  \|H(\eta)^\ast R_1^\ast\|_{1\to2}=\max_{j\in T_1(\eta)}\|H_{j,T_2(\eta)}\|_2\le \max_{j\in[N]}\|H_{j,\cdot}\|_2=\sqrt{N\bar\mu^2}
  $$
  where $H_{j,\cdot}$ denotes the $j$th row of $H$ and $H_{j,T_2(\eta)}$ is a restriction of the $j$th row to the 
  indices in $T_2(\eta).$ At the same time, if the dictionary is not coherence-invariant,
then in the last step we estimate the maximum norm from above by $\sqrt{N\bar\mu_{\max}^2},$
so overall the second term is not greater than $\sqrt{N\theta},$
  
Finally, the third term in \eqref{eq:terms} can be bounded as follows:
    \begin{align*}
    \sqrt{\frac{4k_1k_2}{N^2}}\avg_\eta\|H(\eta)\|&\le \sqrt{\frac{(k_1+k_2)^2}{N^2}} \|H\|=\frac kN \|\Phi^T\Phi-I_N\|\\
    &\le \frac kN\max(1,\|\Phi\|^2-1)\le \frac kN\|\Phi\|^2,
    \end{align*}
    where the last step uses the fact that the columns of $\Phi$ have unit norm, and so
    $\Phi^2\ge N/m>1.$
    
    Combining all the information accumulated up to this point in \eqref{eq:terms}, we obtain
    $$
   \avg_\eta(\avg \|R_1H(\eta)R_2^\ast\|^q)^{1/q}\le 3\sqrt p(\sqrt{\epsilon_1}+(k\epsilon_2)^{1/q}\mu\sqrt k+
   \sqrt{2k_2\theta}\,)+\frac kN\|\Phi\|^2.
   $$
   Finally, use this estimate in \eqref{eq:RR} to obtain the claim of the lemma.
\end{proof}

{\em Proof of Theorem \ref{thm:1/4}:} 
\begin{proof}
The strategy is to fix a triple $a,b,c\in (0,1)$ that satisfies \eqref{eq:abc} and to prove that \eqref{eq:sc} 
implies $(k,\delta,\epsilon)$-StRIP. 
Let $\epsilon_1=\frac{b}{\log1/\epsilon}$ and $\epsilon_2=k^{-1+\log\epsilon}$. In Corollary \ref{cor:some} set 
$\alpha=\epsilon_1$ and $\beta=\alpha\log(2/\epsilon_2).$  Under the assumptions in \eqref{eq:sc} 
this corollary implies that
   $$
    P_{R'}\Big(\sum\limits_{m=1}^k \mu_{i_m,j}^2>\epsilon_1\Big)<\epsilon_2.
  $$
Invoking Lemma \ref{lemma:technical}, we conclude that \eqref{eq:RHR} holds with the current values of $\epsilon_1,\epsilon_2$.
For any $q\geq 4\log k$ we have $p=q/2$, and thus \eqref{eq:RHR} becomes
 \begin{equation}\label{eq:ld}
  (\avg\|R H R^\ast\|^q)^{1/q}\le {3}\sqrt {2q} (\sqrt \epsilon_1 +(k \epsilon_2)^{1/q} \mu \sqrt{k}
     +\sqrt {2k\theta})+2\frac kN\|\Phi\|^2.
   \end{equation}
Introduce the following quantities:       
      $$\xi_q=3\sqrt{2}(\sqrt {\epsilon_1}+(k \epsilon_2)^{1/q} \mu \sqrt{k}
     +\sqrt {2k\theta}) \ \  \text{and} \ \  \lambda=\frac {2k}N\|\Phi\|^2.
   $$   
Now \eqref{eq:ld} matches the assumption of Lemma \ref{lemma:LD}, and we obtain
    \begin{equation}\label{eq:P_RHR}
      P_{R_k}(\|R H R^\ast\| \geq e^{1/4}(\xi_q\sqrt q +\lambda))\leq e^{-q/4}.
    \end{equation}
Choose $q=4\log(1/\epsilon),$ which is consistent with our earlier assumptions on $k,q,$ and $\epsilon.$
With this, we obtain
      $$P_{R_k}\big(\|R H R^\ast\|\geq e^{1/4}(\xi_q\sqrt q +\lambda)\big)\leq \epsilon.$$
Now observe that $\|R H R^\ast\|\leq \delta$ is precisely the RIP property for the support identified
by the matrix $R.$ Let us verify that the inequality 
   $$ 6\sqrt{2} \big(\sqrt \epsilon_1+(k\epsilon_2)^{1/q}\sqrt{k\mu^2}+\sqrt{2k\theta}\big)\sqrt{\log(1/\epsilon)}+\frac{2k}{N}\|\Phi\|^2<e^{-1/4}\delta$$
   is equivalent to \eqref{eq:abc}. This is shown by substituting $\epsilon_1$ and $\epsilon_2$ with their definitions, 
   and $\mu$ and $\theta$ with their bounds in statement of the theorem.
Thus, $P_{R_k}(\|R H R^\ast\|\geq \delta)\le\epsilon,$ which establishes the StRIP property of $\Phi.$
\end{proof}
\vspace*{.1in}

\remove{To see that matrices that satisfy the constraints of Theorem \ref{thm:sinc} exist,
take again the Delsarte-Goethals matrices \eqref{eq:example} with r=1.
Then $\mu=2m^{-1/2}$, so taking $k=\sqrt m=(2N)^{1/6}$ it is possible to satisfy the
constraints on $\mu$ and $m$ in \eqref{eq:constraints}.
In the next section we will see that it is possible to construct a broad class of
sampling matrices whose
parameters satisfy the assumptions of both Theorems \ref{thm:strip} and \ref{thm:sinc}.}

\section{Examples and extensions}\label{sec:determin}
\subsection{Examples of sampling matrices.} It is known \cite{don09} that experimental performance of many known RIP sampling matrices in sparse recovery is far better than
predicted by the theoretical estimates. Theorems \ref{thm:sinc} and \ref{thm:1/4} provide some insight into the reasons for such behavior. 
As an example, take binary matrices constructed from the Delsarte-Goethals codes 
\cite[p.461]{mac91}.
The parameters of the matrices are as follows:
  \begin{equation}\label{eq:example}
  m=2^{2s+2}, \;N=2^{-r}m^{r+2},\;\mu=2^rm^{-\half}
  \end{equation}
where $s\ge 0$ is any integer, and where for a fixed $s$, the parameter $r$ can be any number
in $\{0,1,\dots,s-1\}.$
If we take $s$ to be an odd integer and set $r=(s+1)/2$, then we obtain,
   $$
m= 2^{4r},\; N=2^{4r^2+7r},\; \mu=m^{-1/4}.
   $$
The matrix $\Phi$ is coherence-invariant, so we put $\theta=\bar\mu^2.$
Lemma \ref{lem:pless} below implies that
   \begin{equation}\label{eq:mu2}
\bar\mu^2=\frac{N-m}{m(N-1)}<\frac{1}{m},
   \end{equation}
and the norm of the sampling matrix satisfies $\|\Phi\|=\sqrt{N/m}$.
Thus for $\mu$ and $\bar\mu^2$ to satisfy the assumptions in 
Theorems \ref{thm:sinc} and \ref{thm:1/4}, we only need $m$, $ N$, and $k$ 
to satisfy the relation $m=\Theta(k \log^3 \frac{N}{\epsilon})$ which is nearly optimal. 
Similar logic leads to derivations of such relations for other matrices. We summarize 
these arguments in the next proposition, which shows that matrices with nearly optimal
sketch length support high-probability recovery of sparse signals chosen from the 
generic signal model.
\begin{proposition}\label{prop:final}
Let $\Phi$ be an $m\times N$ sampling matrix.  Suppose that it has coherence parameter
$\mu=O(m^{-1/4})$ and $\theta=O(m^{-1}),$ where $\theta=\bar\mu^2$ or $\theta=\bar\mu_{\max}^2$
according as $\Phi$ is coherence-invariant or not, and
  $$
\|\Phi\|=O(\sqrt{N/k}).
  $$
 If $m=\Theta(k(\log (N/\epsilon))^3),$ 
then $\Phi$ supports sparse recovery under
Basis Pursuit for all but an $\epsilon$ proportion of $k$-sparse signals chosen from the generic
random signal model $\cS_k$
\end{proposition}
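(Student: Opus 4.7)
The plan is to obtain the proposition as a direct application of Theorem~\ref{theorem:bp}, after verifying its two hypotheses (StRIP and SINC) using Theorems~\ref{thm:1/4} and~\ref{thm:sinc} respectively. For a truly $k$-sparse input $\bfx$ one has $\bfx_{I^c}=0$, so the conclusions of Theorem~\ref{theorem:bp} collapse to exact recovery $\hat\bfx=\bfx$ with probability at least $1-3\epsilon$, which is the desired statement up to absorbing a factor of $3$ into $\epsilon$.

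Concretely, I would first fix a convenient constant $\delta\in(0,1)$ (say $\delta=1/2$) and pick small auxiliary constants $a,b,c>0$ satisfying \eqref{eq:abc}. Substituting $\mu=O(m^{-1/4})$ and $m=\Theta(k\log^3(N/\epsilon))$ into the hypothesis \eqref{eq:sc} of Theorem~\ref{thm:1/4} gives
\[
k\mu^4 \;=\; O(k/m) \;=\; O\!\left(\frac{1}{\log^3(N/\epsilon)}\right), \qquad k\theta \;=\; O(k/m) \;=\; O\!\left(\frac{1}{\log^3(N/\epsilon)}\right),
\]
both of which comfortably meet \eqref{eq:sc}, since $\log(N/\epsilon)$ dominates the various $\log(1/\epsilon)$ and $\log k$ factors appearing there. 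Theorem~\ref{thm:1/4} then delivers the $(k,\delta,\epsilon)$-StRIP property.

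For SINC with the precise threshold $\alpha=(1-\delta)^2/(8\log(2N/\epsilon))$ required by Theorem~\ref{theorem:bp}, I would invoke Theorem~\ref{thm:sinc} with $\beta=(1-\delta)^2/8$. Its hypotheses \eqref{eq:mu2a} demand $\mu^4 = O(1/(k\log^3(2N/\epsilon)))$ and $\theta = O(1/(k\log(2N/\epsilon)))$, both of which follow from $m=\Theta(k\log^3(N/\epsilon))$ by the same arithmetic as above (in fact with slack, since $\theta=O(1/m)$ is stronger than needed by a factor $\log^2(N/\epsilon)$). Then Theorem~\ref{theorem:bp} applies to yield recovery.

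The main obstacle is bookkeeping of the implied constants. The delicate piece is the condition \eqref{eq:abc}, which requires $\sqrt a+\sqrt{2ab}+\sqrt c+(2k/N)\|\Phi\|^2$ to fit inside a budget $e^{-1/4}\delta/(6\sqrt 2)$ set by $\delta$. The first three terms can be made arbitrarily small by choosing $a,b,c$ small, but the term $(2k/N)\|\Phi\|^2$ is only of constant order under the bare hypothesis $\|\Phi\|=O(\sqrt{N/k})$, so one must either require the implicit constant in this bound to be sufficiently small, or exploit (as in every example of Section~\ref{sec:determin}) the tighter near-tight-frame estimate $\|\Phi\|=O(\sqrt{N/m})$, which makes $(2k/N)\|\Phi\|^2 = O(1/\log^3(N/\epsilon))\to 0$ and trivially meets the budget for any fixed $\delta$.
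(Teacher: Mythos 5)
Your proposal is correct and follows exactly the route the paper intends: the paper offers no formal proof of Proposition~\ref{prop:final} beyond the discussion that precedes it, which likewise verifies the hypotheses of Theorems~\ref{thm:sinc} and~\ref{thm:1/4} from the coherence bounds together with $m=\Theta(k\log^3(N/\epsilon))$ and then invokes Theorem~\ref{theorem:bp} (with exact recovery for truly $k$-sparse inputs since $\bfx_{I^c}=0$). Your observation that the bare hypothesis $\|\Phi\|=O(\sqrt{N/k})$ leaves the term $(2k/N)\|\Phi\|^2$ in \eqref{eq:abc} only of constant order---so that one must either demand a sufficiently small implied constant or use the tight-frame estimate $\|\Phi\|=O(\sqrt{N/m})$ that holds in every example of Table~\ref{table}---is a genuine point that the paper glosses over.
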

We remark that the conditions on (mean or maximum) square coherence are generally easy to achieve. 
As seen from Table \ref{table} below, they are satisfied by most examples considered in the existing
literature, including both random and deterministic constructions. 
 The most problematic quantity is the coherence parameter $\mu$. 
It might either be large itself, or have a large theoretical bound. 
Compared to earlier work, our results rely on a more relaxed condition on $\mu$, 
enabling us to establish near-optimality for new classes of matrices.  
For readers' convenience, we summarize in Table 1 a list of such optimal matrices 
along with several of their useful properties. A systematic description of all but the last 
two classes of matrices can be found in \cite{baj11}. Therefore 
we limit ourselves to giving definitions and performing some not immediately obvious calculations 
of the newly defined parameter, the mean square coherence.

\vspace*{.1in}\emph{Normalized Gaussian Frames.} A normalized Gaussian frame is obtained
by normalizing each column of a Gaussian matrix with independent, Gaussian-distributed entries 
that have zero mean and unit variance. 
The mutual coherence and spectral norm of such matrices were characterized in \cite{baj11} (see Table \ref{table}). 
These results together with the relation $\bar\mu_{\max}^2<\mu^2$ lead to a trivial upper bound on 
$\bar\mu_{\max}^2$, namely $\bar\mu_{\max}^2\leq 15\log N/m$. 
Since this bound is already tight enough for $\bar\mu_{\max}^2$ to satisfy the assumption 
of Proposition \ref{prop:final}, and to avoid distraction from the main goals of the paper, 
we made no attempt to refine it here. 

\vspace*{.1in}\emph{Random Harmonic Frames}: Let $\mathcal{F}$ be an $N\times N$ discrete Fourier transform matrix, i.e., 
$\mathcal{F}_{j,k}=\frac{1}{\sqrt{N}}e^{2\pi i jk/N}$. Let $\eta_i$, $i=1,...,N$, be a sequence of independent Bernoulli random variables with mean $\frac{m}{N}$. Set $\mathcal{M}=\{i: \eta_i=1\}$ and use $\mathcal{F}_{\mathcal{M}}$ to denote the submatrix of $\mathcal{F}$ whose row indices lies in $\mathcal{M}$. Then the random matrix $\sqrt{\frac{|\mathcal{M}|}{N}}\mathcal{F}_{\mathcal{M}}$ is called a random harmonic frame
\cite{can06a,can06b}. In the
next proposition we compute the mean square coherence for all realizations of this matrix.
\begin{proposition} All instances of the random harmonic frames are coherence invariant with the following mean square coherence
\[
\bar\mu^2 =\frac{N-|\mathcal{M}|}{(N-1)|\mathcal{M}|}.
\]
\end{proposition}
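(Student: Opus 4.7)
The plan is to reduce everything to a single character-sum identity. After rescaling so that each column has unit $\ell_2$-norm, the $k$th column has entries $\phi_{k,j}=|\cM|^{-1/2}e^{2\pi i jk/N}$ for $j\in\cM$, so that
\[
\langle\phi_k,\phi_l\rangle=\frac{1}{|\cM|}\,S(l-k),\qquad S(d):=\sum_{j\in\cM}e^{2\pi i jd/N}.
\]
Because this depends on $(k,l)$ only through the difference $d=l-k\bmod N$, the multiset $\{\mu_{k,l}:l\neq k\}$ equals $\{|S(d)|/|\cM|:d=1,\ldots,N-1\}$ for every $k$, which is exactly the coherence-invariance assertion.

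For the mean-square coherence I would count ordered pairs: each nonzero residue $d$ arises from exactly $N$ pairs $(k,l)$ with $l-k\equiv d\pmod{N}$, so
\[
\bar\mu^2=\frac{1}{N(N-1)}\sum_{k\neq l}\mu_{k,l}^2=\frac{1}{(N-1)|\cM|^2}\sum_{d=1}^{N-1}|S(d)|^2.
\]
The remaining task is to evaluate the inner sum. Expanding the square and using the character orthogonality relation $\sum_{d=0}^{N-1}e^{2\pi i(j-j')d/N}=N\delta_{j,j'}$ yields $\sum_{d=0}^{N-1}|S(d)|^2=N|\cM|$; subtracting $|S(0)|^2=|\cM|^2$ leaves $\sum_{d=1}^{N-1}|S(d)|^2=|\cM|(N-|\cM|)$, and substitution delivers the announced value $(N-|\cM|)/((N-1)|\cM|)$.

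There is no real obstacle here; the only point to watch is the normalization convention implicit in the statement, since the factor $\sqrt{|\cM|/N}$ in the definition does not make columns unit-norm, but this is immaterial because $\mu_{k,l}$ is scale invariant. A more structural route to the same answer bypasses the explicit character sum: since $\cF$ is unitary its rows are orthogonal, so after the column normalization one has $\Phi\Phi^*=(N/|\cM|)I_{|\cM|}$, which immediately gives $\sum_{l}|\langle\phi_k,\phi_l\rangle|^2=\langle\phi_k,\Phi\Phi^*\phi_k\rangle=N/|\cM|$ for every $k$. Removing the $l=k$ contribution of $1$ and averaging over the $N$ columns again recovers $\bar\mu^2=(N-|\cM|)/((N-1)|\cM|)$, and the fact that this quantity is the same for each $k$ reaffirms coherence invariance.
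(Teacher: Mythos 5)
Your proof is correct, and it actually contains two arguments. The first (grouping ordered pairs by the difference $d=l-k \bmod N$ and evaluating $\sum_{d=1}^{N-1}|S(d)|^2$ by character orthogonality) is essentially the paper's computation in a slightly reorganized form: the paper instead expands $|S(j-k)|^2$ as a double sum over $t_1,t_2\in\mathcal{M}$ and splits into the diagonal contribution $N(N-1)|\mathcal{M}|$ and the off-diagonal contribution $-N|\mathcal{M}|(|\mathcal{M}|-1)$, which is the same orthogonality relation applied with the roles of the index sets interchanged. Your second route is genuinely different and arguably preferable: observing that the rows of the subsampled DFT remain orthogonal, so that after column normalization $\Phi\Phi^\ast=(N/|\mathcal{M}|)\,\mathrm{Id}$, gives $\sum_{l}|\langle\phi_k,\phi_l\rangle|^2=N/|\mathcal{M}|$ for every $k$ with no computation at all, and subtracting the diagonal term yields the claimed value of $\bar\mu^2$ while simultaneously showing the row sums of squared coherences are constant. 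This tight-frame argument is more structural, generalizes immediately to any unit-norm tight frame (and is in effect what underlies the analogous formula \eqref{eq:mu2} that the paper derives for the Delsarte--Goethals matrices via the Pless identities), though note that constancy of $\sum_{l\ne k}\mu_{k,l}^2$ in $k$ is the paper's $\bar\mu^2_{\max}=\bar\mu^2$ condition, which is weaker than the coherence invariance of Definition-level interest (equality of the multisets $M_k$); for the latter you still need your first observation that $\mu_{k,l}$ depends only on $l-k \bmod N$. Your remark about the normalization constant $\sqrt{|\mathcal{M}|/N}$ not producing unit columns is a fair catch and correctly dismissed as immaterial.
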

{\em Proof:}
For each $t\in [|\mathcal{M}|]$, let $a_t$ with  be the $t$-th member of $\mathcal{M}$.
To prove coherence invariance, we only need to show that 
$\{\mu_{j,k}: k\in [N]\backslash j\}=\{\mu_{N,k}: k\in [N-1]\}$ holds for all $j\in [ N ]$. This is true since
\[
\mu_{j,k}=\frac{1}{|\mathcal{M}}\sum\limits_{t=1}^{|\mathcal{M}|} e^{\frac{2\pi i (j-k) a_t}{N}}=\mu_{N,(k-j+N)\text{mod } N}
\quad \text{for all } k\neq j.
\]
In words, the $k$th coherence in the set $\{\mu_{j,k}, k\in [N]\backslash j\}$ is exactly the $\left(k-j+N\mod  N\right)$-th coherence in $\{\mu_{N,k}, k\in [N-1]\}$, therefore the two sets are equal. We proceed to calculate the mean square coherence,
\begin{align*}
\bar\mu^2&=\frac{1}{N(N-1)|\mathcal{M}|^2}\sum\limits_{j\neq k,j,k=1}^N\left|\sum\limits_{t=1}^{|\mathcal{M}|} e^{2\pi i (j-k)a_t/N}\right|^2 \\
&=\frac{1}{N(N-1)|\mathcal{M}|^2}\sum\limits_{j\neq k, j,k=1}^N \sum\limits_{t_1,t_2=1}^{|\mathcal{M}|} e^{2\pi i(j-k)(a_{t_1}-a_{t_2})/N}\\
&=\frac{1}{N(N-1)|\mathcal{M}|^2}\left(\sum\limits_{j\neq k,j,k=1}^N \sum \limits_{t_1=t_2=1}^{|\mathcal{M}|} 1+\sum\limits_{t_1\neq t_2 ,t_1,t_2=1}^{|\mathcal{M}|}\sum\limits_{k=1}^N\sum\limits_{j\neq k} e^{2\pi i(j-k)(a_{t_1}-a_{t_2})/N}\right)\\
&=\frac{1}{N(N-1)|\mathcal{M}|^2}(N(N-1)|\mathcal{M}|-|\mathcal{M}|(|\mathcal{M}|-1)N)\\
&=\frac{N-|\mathcal{M}|}{(N-1)|\mathcal{M}|}. \hspace*{4in}\text{\qed}
\end{align*}

\vspace*{.1in}\emph{Chirp Matrices}: Let $m$ be a prime. An $m\times m^2$ ``chirp matrix'' $\Phi$ is defined by $\Phi_{t,am+b}=\frac{1}{\sqrt{m}}e^{2\pi i (bt^2+at)/m}$ for $t,a,b=1,...,m$. The coherence between each pairs of column vectors is known to be 
    $$
\mu_{jk}=\frac{1}{\sqrt{m}}     \quad(j\ne k),
   $$
 from which we immediately obtain the inequalities $\mu \leq 1/\sqrt{m}$ and $\bar\mu^2 \leq 1/m$. More details on these
 frames are given, e.g., in \cite{Brodzik06,Casazza06}.

\vspace*{.1in}\emph{Equiangular tight frames (ETFs)}: A matrix $\Phi$ is called an ETF if its 
columns $\{\phi_i \in \mathbb{R}^m, i=1,...,N\}$ satisfy the following two conditions:
\begin{itemize}
\item $\|\phi_i\|_2=1$, for $i=1,...,N$.
\item $\mu_{ij}=\sqrt{\frac{N-m}{m(N-1)}}$, for $i\neq j$.
\end{itemize}
From this definition we obtain $\mu=\sqrt{\frac{N-m}{m(N-1)}}$ and $\theta=\bar\mu^2=\frac{N-m}{m(N-1)}$.
The entry in the table also covers the recent construction of ETFs from Steiner systems \cite{Fickus12}.

\vspace*{.1in}\emph{Reed-Muller matrices:} In Table \ref{table} we list two tight frames
obtained from binary codes. The Reed-Muller matrices are obtained from certain special subcodes
of the second-order Reed-Muller codes \cite{mac91}; their coherence parameter $\mu$ is found in \cite{baj11}
and the mean square coherence is found from \eqref{eq:mu2}. The Delsarte-Goethals matrices
are also based on some subcodes of the second order Reed-Muller codes and were discussed earlier in 
this section. 
Both dictionaries support orthogonal arrays, and therefore, form unit-norm tight frames (rows
of the matrix $\Phi$ are pairwise orthogonal), with a consequence that $\|\Phi\|=\sqrt{N/m}.$
We include these two examples out of many other possibilities based on codes because they
appear in earlier works, and because their parameters are in the range 
that fits well our conditions.

We note that the quaternary version of these frames is also of interest in the context
of sparse recovery; see in particular \cite{cal10a}.

\vspace*{.05in}
\emph{Deterministic Fourier Construction\cite{hau10}}: Let $p>2$ be a prime, 
and let $f(x)\in \ff_p[x]$ be a polynomial of degree $d>2$ over the finite field $\ff_p$. 
Suppose that $m$ is some integer satisfying 
$p^{1/(d-1)}\leq m\leq p$. 
Then we can construct an $m\times p$ deterministic RIP matrix from a $p\times p$ DFT matrix 
by keeping only the rows with indices in $\{f(n)\!\! \pmod p, n=1,\dots,m\},$ 
and normalizing the columns of the resulting matrix.
It is known that this matrix has mutual coherence no greater than $e^{3d}m^{-1/(9d^2\log d)}$. 
Even though this bound is an artifact of the proof technique used in \cite{hau10}, 
there seem to be no obvious ways of improving it. 

\vspace*{.1in}{\footnotesize

\begin{table}[t]\begin{center}
     \begin{tabular}{| l | c  c  c c c |}
     \hline
Name & $\mathbb{R}$/$\mathbb{C}$ & Coherence-Invariant  & Dimensions & $\mu$  & $\theta(\Phi)$   \\[.05in]
   \hline
     Normalized Gaussian (G) & $\mathbb{R}$  &  No & $m\times N$ & $\leq \frac{\sqrt{15\log N}}{\sqrt{m}-\sqrt{12\log N}}$ &  $\leq \mu^2$
        \\[.1in]
     Random harmonic  (RH)   & $\mathbb{C}$  & Yes & $|\mathcal{M}|\times N$, $\frac{1}{2}m\leq |\mathcal{M}|\leq \frac{3}{2}m$   & $\leq \sqrt{\frac{118(N-m)\log N}{mN}}$  &      $\leq\frac{N-|\mathcal{M}|}{|\mathcal{M}|(N-1)}$ \\[.1in]
     Chirp  (C)     & $\mathbb{C}$ & Yes &  $m\times m^2$ & $\frac{1}{\sqrt m}$      &      $\frac{1}{m+1} $   \\[.05in]
     ETF (including Steiner) &  $\mathbb{C}$      & Yes &  $\sqrt{N}\leq m\leq N$  & $\sqrt{\frac{N-m}{m(N-1)}}$ &  $\mu^2$  \\[.05in]
     Reed-Muller (RM) & $\mathbb{R}$        & Yes &  $2^s\times 2^{t(1+s)}$ & $\leq \frac{1}{\sqrt{2^{s-2t-1}}}$ ,    &  $\leq 2^{-s} $      \\[.05in]
  Delsarte-Goethals set (DG)    &$\mathbb{R} $      & Yes &  $2^{2s+2}\times 2^{2(s+1)(r+2)-r}$    &$2^{r-s-1} $ &     $ \leq 2^{-2s-2}$   \\[.05in]
   Deterministic subFourier (SF) & $\mathbb{C}$        & Yes & $ m\times p$ & $e^{3d}m^{-1/(9d^2\log d)}$        & $\leq \frac{1}{m}$   \\[.05in]
   \hline
     \end{tabular}
\end{center}
\vspace*{.1in}
\begin{center}
   \begin{tabular}{| l |c c c c  |}
   \hline
    Name &      $\|\Phi\|$  &Restrictions &  Probability       & Sketch dimensions: $m= \Omega(\cdot)$     \\[.05in]
   \hline
    G   & $\leq \frac{\sqrt{m}+\sqrt{ N}+\sqrt{2\log N}}{\sqrt{m-\sqrt{8m\log N}}}$    &  
            $60\log N\leq m\leq \frac{N-1}{4\log N}$      &      $\geq 1-\frac{11}{N}$ &$k\log^2 N$ \\[.1in]
    RH       & $\leq \frac{2N}{m}$   & $16\log N\leq m\leq \frac{N}{3}$ & $\geq 1-\frac{4}{N}-\frac{1}{N^2}$  &$ k\log^2 N$  \\[.05in]    
   C & $\sqrt m$  &  $m$ is prime &      deterministic &$ k\log m$  \\[.05in]
    ETF &  $\sqrt{\frac{N}{m}}$ & $\sqrt{\frac{M(N-1)}{N-M}}$,$\sqrt{\frac{(N-m)(N-1)}{m}}$ are odd integers  &  deterministic  &$ k\log m$ \\[.05in] 
    RM & $2^{ts/2}$ & $t<s/4$        &     deterministic &$k\log^3 N$ \\[.05in]
   DG        & $2^{(s+1)(r+1)-r/2}$       & $r<s/2$     & deterministic  &$k\log^3 N$  \\[.05in]
  SF  &$\sqrt{p/m}$  & $p$ is prime, $p^{1/(d-1)}\leq m\leq p$      &  deterministic  &$(k\log^2 N)^{\frac{9d^2\log d}{4}}$  \\[.05in] 
   \hline
    \end{tabular}
\end{center}\caption{Examples for Prop.~\ref{prop:final}: Classes of sampling matrices satisfying the incoherence conditions}\label{table}
\end{table}}

\vspace*{-.05in}
\subsection{StRIP matrices from orthogonal arrays} Let us briefly consider another way
of constructing StRIP matrices based on elementary arguments.
Let $\cC=\{\phi_1,\dots,\phi_N\}$ be a collection of binary $m$-vectors.
We assume that the entries of the vectors are of the form $\pm 1/\sqrt m$ and
denote the correlation of $\phi_i$ and $\phi_j$ by $\mu_{ij}=|\ip{\phi_i}{\phi_j}|.$

The set $\cC$ is called an orthogonal array of strength $t$ if every subset
of $r\le t$ coordinates of the vectors of $\cC$ supports a uniformly random binary $r$-vector.
A good reference for orthogonal arrays is the book
by Hedayat et al. \cite{hed99}. An orthogonal array has the property that any $t$
coordinates of a randomly chosen vector behave as independent random variables
(therefore, of course, $t$ is much smaller than $m$).
In particular, the first $t$
moments of the distance distribution of $\cC$ are equal to the moments of the
binomial distribution. Let $d_{ij}=\frac m2(1-\phi_i^T\phi_j)$ be the Hamming distance
between $\phi_i$ and $\phi_j.$
\begin{lemma} {\rm (Pless identities, e.g. \cite[p.132]{mac91})}
\label{lem:pless}
Let $\cC$ be an orthogonal array of strength $t$.  Let $B_w=(1/N)|\{
(\phi_i,\phi_j)\in\cC^2\mid d_{ij}=w\}|$ be the number of pairs vectors in
$\cC$ at distance $w$.  For all $l=1,2,\dots,t$
   \begin{equation}\label{eq:ppm}
  \sum_{w=0}^m \frac{B_w}{N}\Big(w-\frac m2\Big)^l=\frac1{2^m}\sum_{w=0}^m\binom mw
       \Big(w-\frac m2\Big)^l.
  \end{equation}
\end{lemma}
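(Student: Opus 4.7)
The plan is to reduce both sides of \eqref{eq:ppm} to the same combinatorial count of $l$-tuples of coordinates, using the strength-$t$ property of $\cC$, and then invoke the same calculation for the trivial orthogonal array $\{\pm1\}^m$. First, I pass to the bipolar representation $s_i\in\{\pm1\}^m$ defined by $s_{i,k}=\sqrt m\,\phi_{i,k}$, so that $s_i^Ts_j=m-2d_{ij}$ and $d_{ij}-m/2=-\tfrac12 s_i^Ts_j$. With this,
\[
\sum_{w=0}^m \frac{B_w}{N}\bigl(w-m/2\bigr)^l
\;=\;\frac{1}{N^{2}}\sum_{i,j=1}^{N}(d_{ij}-m/2)^l
\;=\;\frac{(-1/2)^l}{N^{2}}\sum_{i,j=1}^{N}(s_i^Ts_j)^{l}.
\]

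Next I expand $(s_i^Ts_j)^l=\sum_{(k_1,\dots,k_l)\in[m]^l}\prod_{r=1}^l s_{i,k_r}s_{j,k_r}$ and swap summations, which factors the $i,j$ sum as a square:
\[
\sum_{i,j=1}^{N}(s_i^Ts_j)^l
\;=\;\sum_{(k_1,\dots,k_l)\in[m]^l}\Bigl(\sum_{i=1}^{N}\prod_{r=1}^{l} s_{i,k_r}\Bigr)^{\!2}.
\]
Now the orthogonal-array property enters. For a fixed tuple $(k_1,\dots,k_l)$ let $S\subseteq[m]$ be the set of coordinates appearing an odd number of times in the multiset $\{k_1,\dots,k_l\}$. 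Since $s_{i,k}^{a}=1$ when $a$ is even, $\prod_{r=1}^l s_{i,k_r}=\prod_{k\in S}s_{i,k}$, and $|S|\le l\le t$. By strength $t$, the projection of $\cC$ onto the coordinates in $S$ takes every pattern in $\{\pm1\}^{|S|}$ equally often ($N/2^{|S|}$ times), so $\sum_{i}\prod_{k\in S}s_{i,k}$ equals $N$ when $S=\emptyset$ and $0$ otherwise. Consequently the inner square equals $N^2\cdot\mathbf{1}[\text{every value in }(k_1,\dots,k_l)\text{ has even multiplicity}]$, and therefore
\[
\sum_{w=0}^m \frac{B_w}{N}(w-m/2)^l \;=\; \Bigl(-\tfrac12\Bigr)^{\!l} M_l,
\]
where $M_l$ denotes the number of $l$-tuples in $[m]^l$ in which every value occurs an even number of times. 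The point is that $M_l$ depends only on $m$ and $l$, not on $\cC$.

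Finally, I apply exactly the same computation to the trivial orthogonal array $\tilde\cC=\{\pm1\}^m$ of size $2^m$, which has strength $m\ge l$ and distance distribution $\tilde B_w/2^m=\binom{m}{w}/2^m$. The identical reduction shows that $\sum_w\frac{1}{2^m}\binom{m}{w}(w-m/2)^l=(-1/2)^l M_l$ as well, so the two sides of \eqref{eq:ppm} agree. The only step requiring care is the multiplicity bookkeeping when passing from $\prod_r s_{i,k_r}$ to $\prod_{k\in S}s_{i,k}$; once $S$ is correctly identified, the strength-$t$ hypothesis does all the work, and everything else is routine.
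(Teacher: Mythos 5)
Your proof is correct. Note that the paper does not actually prove this lemma --- it is quoted from MacWilliams--Sloane \cite[p.132]{mac91} with no argument given --- so what you have produced is a self-contained elementary proof of a cited fact rather than a variant of the paper's proof. Your route (expand $(s_i^Ts_j)^l$ over $l$-tuples of coordinates, factor the double sum into a square, and use strength $t$ to kill every tuple whose odd-multiplicity support $S$ is nonempty, then observe that the surviving count $M_l$ depends only on $m$ and $l$ and so can be evaluated on the trivial array $\{\pm1\}^m$) is a clean double-counting argument; the textbook treatment instead runs through the MacWilliams transform and Krawtchouk polynomials, characterizing strength $t$ by the vanishing of the first $t$ dual distance coefficients, which is more machinery than is needed here. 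The one step that genuinely required care --- reducing $\prod_r s_{i,k_r}$ to $\prod_{k\in S}s_{i,k}$ and checking $|S|\le l\le t$ so the strength hypothesis applies --- you handle correctly, and the comparison with the full cube (strength $m\ge l$, distance distribution $\binom{m}{w}$) closes the identity. Two minor points worth stating explicitly if you write this up: the definition of $B_w$ in the lemma counts ordered pairs including $i=j$, which is exactly what makes $\sum_w B_w/N\,(w-m/2)^l=\frac1{N^2}\sum_{i,j}(d_{ij}-m/2)^l$; and the vanishing of $\sum_i\prod_{k\in S}s_{i,k}$ for $S\ne\emptyset$ uses that the projection onto $S$ hits each sign pattern exactly $N/2^{|S|}$ times, so the sum factors as $\frac{N}{2^{|S|}}\prod_{k\in S}\bigl(\sum_{\epsilon=\pm1}\epsilon\bigr)=0$.
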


We will need a manageable estimate of the right-hand side of \eqref{eq:ppm}.
We quote from \cite[p.288]{mac91}: let $l\ge 2$ be even, then
  \begin{equation}\label{eq:ecm}
  \frac1{2^m}\sum_{w=0}^m\binom mw\Big(w-\frac m2\Big)^l\le
    \Big(\frac {ml}{4e}\Big)^{l/2}\sqrt l\,e^{1/6}.
   \end{equation}
The main result of this section is given by the following theorem.
\begin{theorem}
Let $\cC$ be an orthogonal array of strength $t$ and cardinality $N$ and
let $l\le t$ be even.
If $m\ge (3/4)\, l(k/\delta)^2(k/\epsilon)^{2/l}$ then $\Phi$ is $(k,\delta,\epsilon)$-StRIP.
\end{theorem}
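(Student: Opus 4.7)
My plan is to apply Theorem \ref{thm:strip}, which reduces the statement to proving that for every fixed $j\in[N]$,
\[
\Pr\!\Bigl(\sum_{m=1}^{k-1}\mu_{j,i_m}^2 > \delta^2/k\Bigr) < \epsilon/k,
\]
where $(i_1,\dots,i_{k-1})$ is a uniformly random ordered $(k-1)$-tuple drawn from $[N]\setminus\{j\}$. Because the columns of $\Phi$ have entries $\pm 1/\sqrt m$, the relation $\mu_{ji}^l=(2/m)^l(d_{ji}-m/2)^l$, valid for even $l$, will translate coherence moments into moments of the Hamming-distance distribution, and this is precisely where the orthogonal-array hypothesis will enter.

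I would estimate the tail probability by applying Markov's inequality at the $(l/2)$-th moment, so the target becomes $\avg[(\sum_m\mu_{j,i_m}^2)^{l/2}]$. The power-mean inequality $(\sum_i x_i)^{l/2}\le (k-1)^{l/2-1}\sum_i x_i^{l/2}$, valid since $l/2\ge 1$, combined with linearity of expectation reduces the problem to controlling the single quantity $\avg[\mu_{j,i}^l]$ for $i$ uniform on $[N]\setminus\{j\}$. By the identity above, this equals $(2/m)^l$ times the $l$-th central moment of the distance distribution seen from $\phi_j$. Since $\cC$ has strength $t\ge l$, the Pless identities (Lemma \ref{lem:pless}) equate the corresponding (aggregate) moment with the $l$-th central moment of $\mathrm{Bin}(m,1/2)$, which by \eqref{eq:ecm} is at most $(ml/(4e))^{l/2}\sqrt l\,e^{1/6}$. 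Chaining the estimates produces a tail bound of the form $\sqrt l\,e^{1/6}\bigl(k^2 l/(em\delta^2)\bigr)^{l/2}$. Requiring this to be at most $\epsilon/k$ and solving for $m$ yields a sufficient condition of the shape $m\ge Cl(k/\delta)^2(k/\epsilon)^{2/l}$; the two factors of $k$ appearing inside $(k/\epsilon)^{2/l}$ come respectively from the threshold $\delta^2/k$ (which loses a factor of $k^{l/2}$ under Markov) and from the union bound $\epsilon/k$ in Theorem \ref{thm:strip}. Absorbing the subexponential prefactor $\sqrt l\,e^{1/6}$ into the $(2/l)$-th power and using $1/e<3/4$ shows that the constant $C=3/4$ suffices uniformly in $l$.

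The principal subtlety is that the Pless identities give the $l$-th central moment of the distance distribution averaged over ordered pairs $(i,j)$, whereas Theorem \ref{thm:strip} requires a bound for every fixed $j$. Most natural orthogonal arrays (certainly every linear one) are distance-invariant, so the per-$j$ moment coincides with the aggregate and the argument goes through verbatim. If one wishes to avoid even this mild regularity assumption, essentially the same Markov-plus-Pless calculation can instead be applied to the Frobenius bound $\|H(I)\|^2\le\|H(I)\|_F^2=\sum_{i\ne j,\,i,j\in I}\mu_{ij}^2$, which uses only the aggregate Pless identity and delivers a condition of the same form.
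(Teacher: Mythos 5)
Your proposal is correct and lands on the same sufficient condition, but it is organized differently from the paper's argument. The paper does not invoke Theorem \ref{thm:strip} at all: it applies the Gershgorin theorem directly to the Gram matrix, so the event to be controlled is $\sum_{j\in I_i}\mu_{ij}>\delta$ (a sum of \emph{first} powers of coherences), and Markov's inequality is applied at the $l$-th moment of that sum. You instead route through Theorem \ref{thm:strip}, whose bound $\|H(I)\|\le\sqrt{k}\max_l\|H(I)e_l\|_2$ replaces Gershgorin's row $\ell_1$-norm by a row $\ell_2$-norm, so your event is $\sum_m\mu_{j,i_m}^2>\delta^2/k$ and Markov is applied at the $(l/2)$-th moment. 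By Cauchy--Schwarz the two thresholds are equivalent, and after the power-mean step both reductions collapse to the same quantity $\avg\mu_{ij}^l$ over a uniform pair of distinct columns, which both arguments then evaluate identically via the Pless identities and the binomial central-moment bound \eqref{eq:ecm}; your exponent bookkeeping (one factor $k^{2/l}$ from the $\delta^2/k$ threshold, one from the $\epsilon/k$ union bound) matches the paper's $k^{l+1}\delta^{-l}e^{1/6}l^{(l+1)/2}(em)^{-l/2}<\epsilon$, and the constant $3/4$ is verified the same way. One remark on your flagged subtlety: the distance-invariance hypothesis is not needed even in your route, because the probability appearing in Theorem \ref{thm:strip} (see the last line of its proof) is already taken over the random ordered $k$-tuple including the distinguished index, so only the aggregate pair moment enters --- exactly the exchangeability computation the paper performs in \eqref{eq:Ei}; your Frobenius-norm fallback is therefore a valid but unnecessary insurance policy (it in fact yields a marginally better condition, without the $k^{2/l}$ factor). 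What the paper's self-contained Gershgorin route buys is independence from Theorem \ref{thm:strip}; what your route buys is reuse of machinery already established earlier in the paper.
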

\begin{proof}
  Let $I\subset [N]$ be a uniformly random $k$-subset. We clearly have
  $$
  \lambda_{\min}(\Phi^T_I\Phi_I)\|\bfx\|_{2}^2\le \|\Phi_I\bfx\|_{2}^2\le\lambda_{\max}(\Phi^T_I\Phi_I)\|\bfx\|_{2}^2,
  $$
  where $\lambda_{\rm min}(\cdot)$ and  $\lambda_{\rm max}(\cdot)$ are the minimum and
maximum eigenvalues of the argument.

By the Gershgorin theorem, any eigenvalue $\lambda$ of the Gram matrix $\Phi_I^T\Phi$
satisfies
  $$
    |\lambda-1|\le \sum_{j\in I_i}\mu_{ij},
 $$
for some $i \in [N],$ where we used the notation $I_i:=I\backslash\{i\}.$ Now consider the probability
that for some $i\in I$ the sum  $\sum_{j\in I_i}\mu_{ij} >\delta.$
The proof will be finished if we show that this probability is less than $\epsilon.$
Let $I=\{i_1,\dots,i_k\}$. We have
  \begin{align*}P_{R_k}\Big(\exists i\in I: \sum_{j\in I_i}\mu_{ij}> \delta\Big)
&\le k P_{R_k} \Big(\sum_{j\in I_{i_1}}\mu_{i_1,j}> \delta\Big)
\le k  \frac{1}{\delta^l}
 \avg_{R_k}\Big(\sum_{j\in I_{i_1}}\mu_{i_1,j}\Big)^l\\
 &= k\frac{(k-1)^l}{\delta^l}
\avg_{R_k}\Big(\frac1{k-1}\sum_{j\in I_{i_1}}\mu_{i_1,j}\Big)^l\\
&\le  \frac{k(k-1)^{l-1}}{\delta^l} \avg_{R_k}\sum_{j\in I_{i_1}}\mu_{i_1,j}^l,
\end{align*}
where the last step uses convexity of the function $z\mapsto z^l.$
The trick is to show that the expectation on the last line,
presently computed over the choice of $I$, can be also found with respect
to a pair of random uniform elements of $\cC$ chosen without replacement. This is
established in the next calculation: 
  \begin{align}\label{eq:Ei}
 \avg_{R_k}\sum_{j\in I_{i_1}} \mu_{i_1,j}^l&=
\sum_{i_1<i_2<\dots<i_k}\frac1{\binom Nk}\sum_{j=2}^k
\mu_{i_1,i_j}^l=\frac1{k!\binom Nk}\sum_{i_1\ne i_2\ne\dots\ne i_k} \sum_{j=2}^k
\mu_{i_1,i_j}^l\notag\\
&=\frac1{N(N-1)}\sum_{j=2}^k\sum_{i_1=1}^N\sum_{i_j\ne i_1}
 \mu_{i_1,i_j}^l\notag\\
&=(k-1)\avg \mu_{ij}^l,
\end{align}
where the expectation on the last line (and below in the proof)
is computed with respect to a pair of uniformly chosen distinct random vectors from $\cC$.
Next using \eqref{eq:ppm} and switching to the variable $w=(m/2)(1-\mu)$, we obtain
   \begin{align*}
\avg \mu_{ij}^l& = \Big(\frac2m\Big)^l \sum_{w=1}^{m}\frac{B_w}{N-1}
\Big(w-\frac{m}2\Big)^l\\
&= \Big(\frac2m\Big)^l \frac{N}{N-1}\Big[\sum_{w=0}^{m}\frac{B_w}{N} \Big(w-\frac{m}2\Big)^l -
\frac1N\Big(\frac{m}{2}\Big)^l \Big]\\
&= \Big(\frac2m\Big)^l \frac{N}{N-1}\Big[\frac1{2^m}\sum_{w=0}^{m}\binom{m}{w} \Big(w-\frac{m}2\Big)^l -
\frac1N\Big(\frac{m}{2}\Big)^l \Big],
\end{align*}
Now we can use \eqref{eq:ecm} and $l<m$ to write
  $$
  \avg\mu_{ij}^l \le\Big(\frac l{em}\Big)^{l/2}\frac N{N-1}
  \sqrt{l e^{1/3}}
    -\frac 1{N-1}\le e^{1/6} l^{(l+1)/2}(em)^{-l/2}.
   $$
Conclude using the condition on $m:$
  $$
  P_{R_k}\Big(\exists i\in I: \sum_{j\in I_i}\mu_{ij}> \delta\Big)\le
   k^{l+1} \delta^{-l}e^{1/6} l^{(l+1)/2}(em)^{-l/2} < \epsilon.
  $$
\vspace*{-.2in}\hfill\end{proof}

Observe that the condition of this theorem is nonasymptotic, and is
satisfied by a number of known constructions of orthogonal arrays.

\vspace*{.05in}{\em Example:} {\em
Consider sampling matrices obtained from the binary Delsarte-Goethals codes already mentioned above;
see Eq.\eqref{eq:example}.
It is known that the underlying code forms an orthogonal array of strength $t=7,$
so taking $l=6$ we obtain a family of $(k,\delta,\epsilon)$-StRIP matrices of dimensions
$m\times N$ for sparsity
   $$
  k\le 0.52\,(\delta^6\epsilon\, m^3)^{1/7}=0.52(\delta^6\epsilon)^{1/7}(2^rN)^{3/(7(r+2))}.
   $$
The case $r=0$ was considered in \cite{cal10b} where these matrices were analyzed
based on the detailed properties of this particular case of the construction.
Our computation, while somewhat crude, permits a uniform estimate for the entire family
of matrices.
\remove{
For instance, let $\cC$ be a binary
 Delsarte-Goethals code with $m=2^{2s+2},N=\frac12m^3, \mu=2m^{-\half},$ where $s\ge 0$ \cite[p.461]{mac91}.  It
  is known that $\cC$ forms an orthogonal array of strength $t=7.$ Taking $l=6$ we
  obtain a $(k,\delta,\epsilon)$-StRIP matrix of dimensions $m\times N$ where $k\le
  0.52 \delta^{6/7}\epsilon^{1/7} m^{3/7}.$}
The estimate can be improved if the expectation
$\avg\mu_{ij}^l$ can be computed explicitly from the known distribution of
correlations. For instance, taking $r=1$ and using the distribution given in \cite[p.477]{mac91} we obtain that $\avg{\mu^6}\approx(4/3)m^{-3}.$
\remove{Then the condition on $k$ and $m$ is given by the inequality
  $$
   \epsilon>\frac 43\frac{ k^7}{\delta^6}  m^{-3}.
  $$
The}
With this, the condition on sparsity that emerges has the form
$k<0.95 (\delta^{6}\epsilon m^{3})^{1/7},$ with a better constant compared to the general
estimate. For instance, we obtain $m\times (m^3/2)$ matrices with the $(k,\delta,0.001)$ StRIP
property for all $k\le 0.35  \delta^{6/7}m^{3/7}.$

Another similar possibility arises if $\cC$ is taken to be a binary dual BCH
code with $m=2^s-1, N=m^r, \mu=2(r-1)m^{-\half}, r=1,2,3,\dots.$ Many more such constructions
can be obtained
from other algebraic codes such as the Kerdock codes, Gold codes, etc. \cite{hel98}.
This lends further support to earlier studies of sampling matrices
constructed from the BCH codes \cite{ail09}, Delsarte-Goethals codes, and other binary
codes related to the second-order Reed-Muller codes \cite{cal10a,cal10b}.}

\vspace*{.05in}
It would be desirable to show that orthogonal arrays also suffice for the SINC property;
however, the technique introduced above results in parameters that contradict the Rao bound
on the number of rows in an array \cite{hed99}. Thus, we are unable to show that this construction
results in matrices that are good for linear estimators.

\subsection{Further constructions from binary codes} We remark that it is easy to show existence of matrices with low coherence.
The following observation is a rephrasing of the result known in coding theory 
as the Gilbert-Varshamov existence bound for binary linear codes. 
\begin{proposition} Let $l=\log_2 N, l<m$ and let $G=(\bfg_1,\dots,\bfg_l)$
be an $m\times l$ binary  matrix  whose rows are chosen independently and
uniformly from $\ff_2^l.$ Let $m=4\log N/\mu^2$, where $0<\mu<1.$
Form the matrix $\Phi$ by constructing an $\ff_2$-linear span of the columns of $G$
and using the map $\{0,1\}\to \{\frac 1{\sqrt{m}},\frac {-1}{\sqrt{m}}\}.$
Then $\Phi$ has coherence $\mu$ with
probability at least $1-2/N$ and mean square coherence $\bar\mu^2<1/m$ with 
probability at least $(1-(m/N))^m.$
\end{proposition}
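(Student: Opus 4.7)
The plan is to index the $N=2^l$ columns of $\Phi$ by elements $\bfa\in\ff_2^l$, so that $\phi_\bfa$ is the $\pm 1/\sqrt m$ image of the binary vector $G\bfa\in\ff_2^m$. Writing $\bfg_1,\dots,\bfg_m$ for the rows of $G$, the key observation I would exploit is that for $\bfa\ne \bfb$, setting $\bfc:=\bfa+\bfb\ne 0$,
\[
\langle\phi_\bfa,\phi_\bfb\rangle=\frac 1m\sum_{i=1}^m (-1)^{\bfg_i^T\bfc},
\]
and since the $\bfg_i$ are i.i.d.\ uniform on $\ff_2^l$ and $\bfc\ne 0$, the summands are i.i.d.\ Rademacher variables. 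In particular, the distribution of $|\langle\phi_\bfa,\phi_\bfb\rangle|$ depends on the pair only through $\bfc$.

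For the first claim, I would apply the standard Gilbert--Varshamov style argument: Hoeffding's inequality gives $\Pr[|\langle\phi_\bfa,\phi_\bfb\rangle|>\mu]\le 2e^{-m\mu^2/2}=2/N^2$ for $m=4\log N/\mu^2$, and union-bounding over the $N-1$ nonzero values of $\bfc$ then yields coherence at most $\mu$ with probability at least $1-2/N$.

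The second claim needs a slightly more delicate calculation. Using the group structure, each nonzero $\bfc\in\ff_2^l$ arises from exactly $N$ ordered pairs $(\bfa,\bfb)$, so $\bar\mu^2=\frac 1{N-1}\sum_{\bfc\ne 0}\mu_\bfc^2$ with $\mu_\bfc=\frac1m\bigl|\sum_i(-1)^{\bfg_i^T\bfc}\bigr|$. I would expand the square in $\mu_\bfc^2$, swap the sums over $i,j$ and over $\bfc$, and use the character orthogonality $\sum_{\bfc}(-1)^{\bfh^T\bfc}=N\cdot{\bf 1}[\bfh={\bf 0}]$ to collapse the triple sum to $N(m+R)$, where $R:=|\{(i,j):i\ne j,\bfg_i=\bfg_j\}|$ counts coincidences among the rows of $G$. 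Removing the $\bfc=0$ contribution $m^2$ and dividing by $N-1$ yields $\bar\mu^2=(N-m)/(m(N-1))+RN/(m^2(N-1))$, so that $R=0$ already implies $\bar\mu^2<1/m$ whenever $m\ge 2$. The probability that $m$ i.i.d.\ uniform rows of $\ff_2^l$ are pairwise distinct equals $\prod_{i=0}^{m-1}(1-i/N)\ge(1-m/N)^m$, which will complete the proof.

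I expect neither step to present a serious obstacle: the first is a textbook random-coding argument. The only mildly subtle point will be recognizing that the second moment reduces, via character orthogonality on $\ff_2^l$, to a collision-counting problem for the rows of $G$; once that reduction is in hand, the remaining work is elementary.
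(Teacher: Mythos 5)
Your proof is correct, and the two halves deserve separate comment. For the coherence bound your argument is essentially the paper's: both reduce the $\binom N2$ pairwise correlations to the $N-1$ nonzero difference vectors $\bfc$ via linearity, bound the tail of the resulting Rademacher/binomial sum (your Hoeffding bound $2e^{-m\mu^2/2}$ is exactly the paper's entropy estimate $2^{1-m(1-h(\half-\mu/2))}$ combined with $1-h(\half-x)\ge 2x^2/\log 2$), and finish with a union bound (the paper phrases it as a first-moment bound on the number of outlying codeword weights, which is the same thing). For the mean square coherence your route is genuinely different and, frankly, more complete than what the paper records. The paper disposes of this part in one sentence, leaving the reader to supply the chain: distinct nonzero rows of $G$ make the column code an orthogonal array of strength $2$, whence the Pless identity (its Lemma \ref{lem:pless} with $l=2$) yields $\bar\mu^2=\frac{N-m}{m(N-1)}<\frac1m$ exactly as in \eqref{eq:mu2}. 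You instead derive the exact identity $\bar\mu^2=\frac{N-m}{m(N-1)}+\frac{RN}{m^2(N-1)}$ by character orthogonality on $\ff_2^l$, where $R$ counts ordered row collisions; this is the same second power moment computed by Fourier analysis rather than by quoting the Pless identities, and it makes transparent that only distinctness of the rows is needed (a zero row does not spoil the bound, though of course it would ruin the coherence handled in the first part). Both events have probability at least $(1-m/N)^m$, so either conditioning gives the stated bound. Your version buys self-containedness and an exact error term; the paper's buys brevity by leaning on machinery already introduced for the orthogonal-array section.
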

\begin{proof}
Note that the Hamming distance $d$ between any
two columns of a matrix with coherence $\mu$ satisfies $\mu\ge|1-2d/m|.$
The set of columns of $C$ forms a linear space, so it suffices to argue
about Hamming weights rather than pairwise correlations.
Let $\bfu\in\{0,1\}^l$ be a nonzero vector, then
the probability that the vector $\bfv=G \bfu$ has weight $w$ equals
$\binom mw2^{-m}.$ Let $X$ be the random number of columns with weight
$|w-m/2|\ge m\mu/2.$ We have
  \begin{equation}\label{eq:cor}
\avg X \le 2\frac{N-1}{2^m}\sum_{w= 0}^{m(\frac{1}{2}-\frac{\mu}{2})} \binom mw
\le N 2^{1-m(1-h(\frac{1}{2}-\frac{\mu}{2}))}
  \end{equation}
where $h(x)=-x \log_2 x-(1-x) \log_2(1-x)$ is the binary entropy function.
Using the inequality
  $$
   1-h(\half-x)\ge 2x^2/\log 2, \quad 0\le x<1/2
  $$
and the condition for $\mu,$ we obtain $\avg X\le 2/N.$ Since $P(X>0)\le \avg X,$
this implies the first claim. The second part follows because there are $\prod_{i=1}^{m}(N-i)$
matrices $G$ with distinct nonzero rows.
\end{proof}
The derandomizing of Gilbert-Varshamov
codes was recently addressed by Porat and Rothschild \cite{por08}.
They presented a $O(mN)$ deterministic algorithm that constructs codes
with large minimum distance. To construct incoherent dictionaries, we need a bit more, 
namely that all the pairwise distances are in a narrow segment around $m/2.$ The
algorithm in \cite{por08} can be easily tailored to do this. 
A simplified version of this procedure which results in
the algorithm of complexity $O(mN^2)$ (i.e., not as good as in \cite{por08}), was given in \cite{maz11}.
In a nutshell it is as follows.
Instead of constructing the $m\times N$ matrix, $N=2^l$, we aim at constructing a
basis of the space of columns, i.e., an $m\times l$ matrix $G$. The rows of $G$
are selected recursively. Before any rows are selected, the expected number of
codewords of weight far from $m/2$ is given by \eqref{eq:cor}. The algorithm
selects rows one by one so that the expectation of the number of outlying
vectors {\em conditional} on the rows already chosen is the smallest possible.

We note that in the context of sparse recovery, the dependence between $N$ and
$m$ is likely to be polynomial. In this range of parameters the above
complexity is acceptable and is in fact comparable with the size of the matrix
$\Phi$ which needs to be stored for sampling and processing.

\remove{ Here we show that matrices with the statistical incoherence
properties can be constructed from linear codes satisfying the Gilbert-Varshamov bound.
Our analysis is nonasymptotic and results in an algorithm that
does not involve probabilistic arguments about the existence of codes.
Instead, we argue that such matrices can be constructed
with complexity of order $O(mN).$
While in coding theory typically $N=\exp(\alpha m), \alpha>0$ constant, which makes constructing
Gilbert-Varshamov codes prohibitively
difficult, in  sparse recovery applications the dependence  
We begin with an easy calculation which appears explicitly or
implicitly in a number of earlier publications (e.g.,  \cite{alo92}).

While this proposition involves random choice, eventually we are interested in
an explicit construction of $\Phi.$ The derandomizing 

The algorithm in \cite{por08} relies on the method of conditional expectations
due to Spencer and Raghavan, see \cite{alo08}. 
In this form the algorithm gives the needed result with complexity $O(mN^2)$
which is not as strong as the result of \cite{por08}. At the same time,
in this form it can be presented concisely and is easier to digest.
This was done in \cite{maz11} and is presented in the Appendix.

\vspace*{.05in}
{\em Remark: }{For coding theorists, we note that there is a well-known,
elementary procedure of constructing linear Gilbert-Varshamov codes relying
on their parity-check matrix (the Varshamov algorithm). It is of little use in
the present application because we need linear codes of very low rate. We therefore
turn to derandomizing the construction of the generator matrix which is a
less straightforward task.}

To conclude, we have the following result.
\begin{theorem} Suppose that the values of sparsity $k$ and $\epsilon$ are chosen.
Let
  $$
m\ge4\sqrt k \log N \max\{ \alpha_1^2 \sqrt k \log(2k/\epsilon), \alpha_2^2
(\log(2N/\epsilon))^{3/2}\},
  $$ where $\alpha_1,\alpha_2$ are the constants mentioned in
Theorems \ref{thm:strip} and \ref{thm:sinc}.  Let $\delta> k/(N-1), \beta>(k/m)\log(2N/\epsilon)$.
An $m\times N$ sampling matrix with the
properties $(k,\epsilon,\delta)$-StRIP and
$(k,\frac{\beta}{\log(2N/\epsilon)},\epsilon)$-SINC can be
constructed in time $O(mN).$
\end{theorem}
This claim is obvious because the above discussion implies that we can
construct a sampling matrix with coherence
$\mu = \sqrt{4\log N/m}.$ Since the parameters are chosen to satisfy the
constraints of Theorems \ref{thm:strip} and \ref{thm:sinc}, the matrix
will have the needed properties.}

\providecommand{\href}[2]{#2}

\end{document}